\newenvironment{proof}[1][Proof]
{\par\noindent{\bf #1:} }{\hspace*{\fill}\nolinebreak{$\Box$}\bigskip\par}
\newcommand{\qed}{\hspace*{\fill}\nolinebreak\ensuremath{\Box}}
\newtheorem{theorem}{Theorem}
\newtheorem{lemma}{Lemma}[section]
\newtheorem{observation}[lemma]{Observation}
\newcommand{\cS}{\mathcal{S}}
\newcommand{\cP}{\mathcal{P}}
\newcommand{\cM}{\mathcal{M}}
\newcommand{\vectL}{\mathbf{L}}
\newcommand{\vectU}{\mathbf{U}}
\newcommand{\vectW}{\mathbf{W}}
\newcommand{\vectP}{\mathbf{P}}
\newcommand{\vectA}{\mathbf{A}}
\newcommand{\vectB}{\mathbf{B}}
\newcommand{\vectI}{\mathbf{I}}
\newcommand{\vecta}{\mathbf{\alpha}}
\newcommand{\vectb}{\mathbf{\beta}}
\newcommand{\T}{\textup{T}}
\newcommand{\Mshared}[1][]{\cM \if!#1!\else (#1) \fi}
\newcommand{\Mpriv}{\cP}
\newcommand{\complTime}[3]{C_{#1}^{#3}(#2)}  
\newcommand{\startTime}[3]{s_{#1}^{#3}(#2)}  
\newcommand{\tct}[1]{\varSigma(#1)}                  
\newcommand{\st}{\hspace{0.1cm}\bigl|\bigr.\hspace{0.1cm}}
\newcommand{\jobs}{\mathcal{J}}
\begin{document}

\title{\textbf{Shared multi-processor scheduling}}

\author{
  Dariusz Dereniowski\footnote{Corresponding author. Email: deren@eti.pg.gda.pl}\\
  \small{\emph{Faculty of Electronics,}}\\
  \small{\emph{Telecommunications and Informatics},}\\
  \small{\emph{Gda{\'n}sk University of Technology},}\\
  \small{\emph{Gda{\'n}sk, Poland}}
\and
  Wies{\l}aw Kubiak\\
  \small{\emph{Faculty of Business Administration},}\\
  \small{\emph{Memorial University},}\\
  \small{\emph{St. John's, Canada}}
}

\date{}

\maketitle

\begin{abstract}
We study shared multi-processor scheduling problem where each job can be executed on its private processor and simultaneously on one of many processors shared by all jobs in order to reduce the job's completion time due to processing time \emph{overlap}. The total weighted overlap of all jobs is to be maximized. The problem models subcontracting scheduling in supply chains and divisible load scheduling in computing. We show that synchronized schedules that complete each job at the same time on its private and shared processor, if any is actually used by the job, include optimal schedules. We prove that the problem is NP-hard in the strong sense for jobs with arbitrary weights, and we give an efficient, polynomial-time algorithm for the problem with equal weights.
\end{abstract}

\textbf{Keywords:} combinatorial optimization, divisible jobs, shared processors, subcontracting, supply chains

\section{Introduction}

The problem of scheduling divisible jobs on shared processors has attracted growing attention due to its  importance in scheduling job-shops, parallel and distributed computer systems, and supply chains.

Anderson~\cite{A81} considers a job-shop scheduling model where each job is a batch of potentially infinitely small items that can be processed independently of
other items of the batch. A processor in the shop is being shared between the jobs processed by the processor at the rates proportional to the
processor capacity fraction allocated to them by scheduler. The objective is to minimize total weighted backlog in a given time horizon.

Bharadwaj et. al.~\cite{HBGR03} survey divisible load scheduling where fractions of total divisible load are distributed to subsets of nodes 
of a shared network of processors for distributed processing. The processing by the nodes and possible communications
between the nodes overlap in time so that the completion time (makespan) for the whole load is shorter than the processing  of the whole load by a single node. The goal is to chose the size of the load fractions for each node so that the makespan for the whole load is minimized. \cite{HBGR03} points out that many real-life applications satisfy the divisibility property, among them "...processing of massive experimental
data, image processing applications like feature extraction
and edge detection, and signal processing applications like
extraction of signals buried in noise from multidimensional
data collected over large spans of time, computation of Hough
transforms, and matrix computations." Drozdowski~\cite{D09} surveys optimal solutions for a \emph{single} divisible load
obtained for various network topologies.

Recently, Vairaktarakis and Aydinliyim~\cite{VairaktarakisAydinliyim07} consider scheduling divisible jobs on subcontractor's processor  in supply chains to reduce the job's completion times. 
Hezarkhani and Kubiak~\cite{HK15} refer to the problem as the subcontractor scheduling problem.
Vairaktarakis~\cite{V13} points out that the lack of due attention to the subcontractors'
operations can cause significant complications in supply chains. A well-documented real-life example
of this issue has been reported in Boeing's Dreamliner
supply chain where the overloaded schedules of subcontractors,
each working with multiple suppliers, resulted in long
delays in the overall production due dates (see Vairaktarakis~\cite{V13} for more details and references). The subcontractor scheduling problem
is common in quick-response
industries characterized by volatile demand and inflexible
capacities where subcontracting is often used --- those include 
metal fabrication industry (Parmigiani~\cite{Par}), electronics assembly  (Webster et al~\cite{Web}), high-tech
manufacturing (Aydinliyim and Vairaktarakis~\cite{Ay11}), textile
production, and engineering services (Taymaz and Kili\c{c}aslan~\cite{Tay}) where subcontracting enables a manufacturer to
speed up the completion times of his jobs.

In the subcontractor scheduling problem
each agent has  its private processor  and a \emph{single} subcontractor's processor shared by all jobs available for the 
execution of its own job. The jobs can be divided between private and shared processor
 so that job completion times are reduced by possibly overlapping executions on private and shared processor.  
Vairaktarakis and Aydinliyim~\cite{VairaktarakisAydinliyim07} consider a non-preemptive case where  at most one time interval
on subcontractor's shared processor is allowed for any job. They prove that under this assumption there
exist optimal schedules that complete job execution on private and shared processor at the same time, we refer to such schedules
as \emph {synchronized} schedules, and show that sequencing jobs in ascending order of their processing times on the shared
processor gives an optimal solution. Furthermore this solution guarantees  non-empty interval  on the shared processor  for each
job. Hezarkhani and Kubiak \cite{HK15} observe that by allowing an agent to use a set of several
mutually disjoint intervals on the subcontractor processor one does not improve schedules by increasing total overlap. Therefore,  \cite{HK15} actually observes that algorithm of
\cite{VairaktarakisAydinliyim07} solves the single processor preemptive problem to optimality as well. In this paper we generalize this preemptive model of \cite{HK15} by allowing many shared processors
and by allowing  that the reduction in job completion time be rewarded at different rates for different jobs, i.e., we allow different weights for jobs.

It is worth pointing out that 
Vairaktarakis and Aydinliyim \cite{VairaktarakisAydinliyim07} change focus from optimization typically sought after in the centralized setting to
coordinating mechanisms to ensure efficiency in the decentralized systems. 
 Vairaktarakis  \cite{V13} analyzes the outcomes of a
decentralized subcontracting system under different protocols
announced by the subcontractor. Both papers
assume complete information yet neither provides coordinating
pricing schemes for the problem. To remedy this \cite{HK15}  designs parametric pricing schemes that strongly
coordinate this decentralized system with complete information, that is, they ensure that the agents' choices
of subcontracting intervals always result in efficient (optimal) schedules. It also proves that the pivotal mechanism
is coordinating, i.e., agents are better off by reporting their true processing times, and by participating in the subcontracting.

The remainder of the paper is organized as follows. Section 2 introduces notation and formulates the shared multi-processor scheduling problem. Section 3 defines some desirable characteristics of schedules and proves that there always are optimal schedules with these characteristics. Section 4 proves that there always is an optimal schedule that is synchronized.
Section 5 considers special instances for which optimal schedules on shared processors are $V$-\emph{shaped} and \emph{reversible}. Section 6 proves that the problem is NP-hard in the strong sense
even when limited to the set of instances defined in Section 5.  Section 7 gives an efficient, polynomial time algorithm for the problem with equal weights. Finally, Section 8 concludes the paper and lists  open problems.

\section{Problem formulation} \label{sec:problem}
We are given a set $\jobs$ of $n$ preemptive jobs.
Each job $j\in\jobs$ has its processing time $p_j$ and weight $w_j$.
With each job $j\in\jobs$ we associate its \emph{private} processor denoted by $\Mpriv_j$.
Moreover, $m\geq 1$ \emph{shared} processors $\Mshared_1,\ldots,\Mshared_m$ are available for all jobs.

A \emph{feasible} schedule $\cS$ selects for each job $j\in\jobs$:
\begin{enumerate} [label={\normalfont{(\roman*)}}]
 \item\label{it:f1} a shared processor $\Mshared[\cS,j]\in \{\mathcal{M}_1, \ldots, \mathcal{M}_m\}$,
 \item\label{it:f2} a (possibly empty) \emph{set} of open, mutually disjoint time intervals in which $j$ executes on $\Mshared[\cS,j]$, and
 \item\label{it:f3} a \emph{single} time interval $(0, \complTime{\cS}{j}{\Mpriv})$ where $j$ executes on its private processor $\Mpriv_j$.
\end{enumerate}
The total length of all these intervals (the ones in~\ref{it:f2} and the one in~\ref{it:f3}) equals $p_j$. The simultaneous execution of $j$ on private $\Mpriv_j$ and shared $\Mshared[\cS,j]$ is allowed and desirable, as follows from the optimization criterion given below. However, for any two jobs $j$ and $j'$ if they use the same shared processor, i.e.,  $\Mshared[\cS,j]=\Mshared[\cS,j']=\Mshared$, then any interval in which $j$ executes on $\Mshared$ is disjoint from any interval in which $j'$ executes on $\Mshared$.
In other words, each processor can execute at most one job at a time.

Given a feasible schedule $\cS$, for each job $j\in\jobs$ we call any time interval of maximum length in which $j$ executes on both  private $\Mpriv_j$ and shared $\Mshared[\cS,j]$ simultaneously an \emph{overlap}.
The total overlap $t_j$ of  job $j$ equals the sum of lengths of all overlaps for $j$.
The \emph{total weighted overlap} of $\cS$ equals
\[\tct{\cS}=\sum_{j\in\jobs}t_jw_j.\] 
A feasible schedule that maximizes the total weighted overlap is called \emph{optimal}. For convenience we use the abbreviation WSMP to denote
the \emph{weighted shared multi-processor} scheduling problem: the instance of the problem consists of a set of jobs $\jobs$ and the number of shared processors $m$; the goal is to find an optimal schedule that maximizes total weighted overlap. 

This objective function is closely related to the total completion time objective traditionally used in scheduling. The total completion time \emph{can} potentially be reduced by an increase of the total overlap resulting
from the simultaneous execution of jobs on private and shared processors. However, to take full advantage of this potential the schedules need to start jobs \emph{at} time $0$, otherwise the 
overlap would not necessarily be advantageous in reducing the total completion time. At the same time we need to emphasize that the two objectives exist for different practical reasons. The minimization of total completion time 
minimizes mean flow time and thus by Little's Law minimizes average inventory in the system. The maximization of the total overlap on the other hand maximizes the total net payoff resulting from completing job $j$ earlier at $p_j-t_j$ thanks to the use of shared processors (subcontractors) rather than at $p_j$ if those where not used. The $w_jt_j$ is a net payoff obtained from the completion of job (order) $t_j$ time units earlier due to the overlap $t_j$. This different focus sets the total weighted overlap objective apart from the total completion time objective as an objective important in practice in scheduling shared processors.

For illustration let us consider an example in Figure~\ref{fig:example} with two shared processors and $6$ jobs.
Note that in this example, each job completes at the same time on its private processor and on a shared one (Sections~\ref{sec:observations} and~\ref{sec:optimal} will conclude that for each problem instance there exists an optimal solution with this property).
\begin{figure}[htb]
\begin{center}
\includegraphics[scale=0.9]{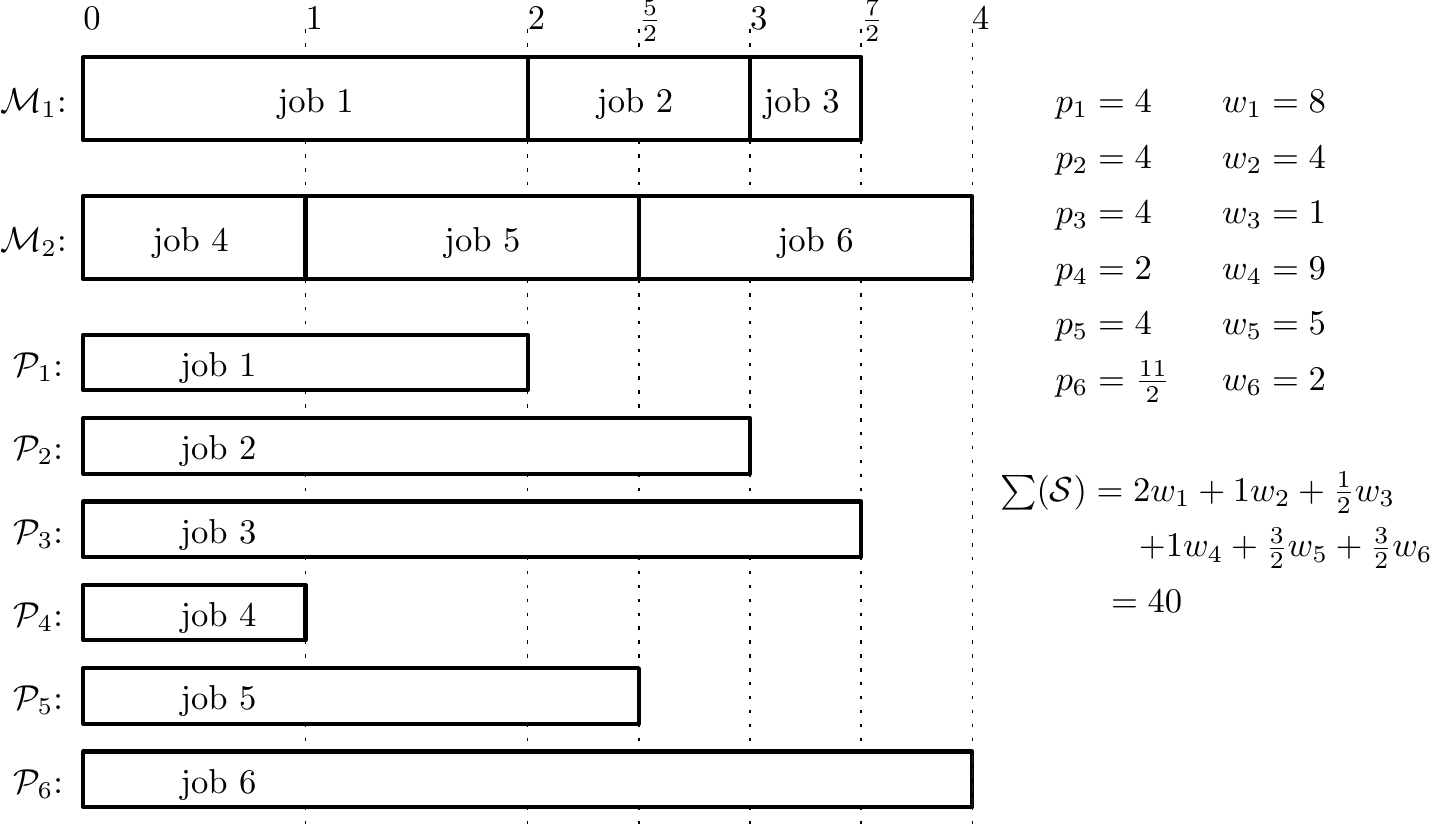}
\caption{A schedule for six-job instance and $m=2$ shared processors.}
\label{fig:example}
\end{center}
\end{figure}

\section{Simple observations  about optimal schedules} \label{sec:observations}

We now make four simple observations that allow us to reduce a class of schedules to consider yet ensure at the same time that the reduced class always includes optimal schedules.
Let $\cS$ be a feasible schedule. Let $\startTime{\cS}{j}{\Mshared}$ and $\complTime{\cS}{j}{\Mshared}$ be the start time and the completion times of a job $j$ on the shared processor $\Mshared[\cS,j]$ respectively, both being $0$ if all of $j$ executes on its private processor only.
A schedule $\cS$ is called \emph{normal} if $\complTime{\cS}{j}{\Mshared}\leq\complTime{\cS}{j}{\Mpriv}$ for each $j\in\jobs$.
We observe the following.
\begin{observation} \label{obs:normal}
There exists an optimal schedule that is normal.
\end{observation}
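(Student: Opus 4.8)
The plan is to prove Observation~\ref{obs:normal} by a local exchange argument: start from any optimal schedule and show that whenever some job $j$ violates the normality condition, i.e. $\complTime{\cS}{j}{\Mshared} > \complTime{\cS}{j}{\Mpriv}$, we can modify the schedule so that $j$ becomes normal \emph{without decreasing} the total weighted overlap and without disturbing any other job. Since the objective is a finite sum of nonnegative terms $t_jw_j$ and the modification only affects processor $\Mshared[\cS,j]$ within the portion of the time axis that $j$ already occupies there, iterating the fix over all offending jobs terminates and yields a normal optimal schedule.

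First I would fix an optimal $\cS$ and an offending job $j$. The key observation is that the overlap $t_j$ of $j$ is the total length of the intersection of the single private interval $(0,\complTime{\cS}{j}{\Mpriv})$ with the union of $j$'s shared intervals on $\Mshared[\cS,j]$. Any portion of $j$'s shared processing that lies at or beyond $\complTime{\cS}{j}{\Mpriv}$ contributes nothing to $t_j$. So I would take all the shared-processor execution of $j$ that occurs in the interval $[\complTime{\cS}{j}{\Mpriv}, \complTime{\cS}{j}{\Mshared}]$ — call its total length $\delta>0$ — and argue we can relocate this amount of work. There are two natural cases: either move this $\delta$ units of $j$'s work into empty space on $\Mshared[\cS,j]$ strictly inside $(0,\complTime{\cS}{j}{\Mpriv})$ (increasing $t_j$, contradicting optimality or at least improving the objective), or — if no such empty space exists because $(0,\complTime{\cS}{j}{\Mpriv})$ is already saturated on that shared processor by $j$ and other jobs — simply delete this $\delta$ units from the shared processor and append it to $j$'s private interval, extending it to $\complTime{\cS}{j}{\Mpriv}+\delta$. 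This keeps $\cS$ feasible (the private processor is always free after its current completion time, and $j$'s total processing is preserved), leaves every other job untouched, does not change $t_j$ (the deleted shared work was outside the old private interval, and the appended private work has no shared counterpart), and makes $j$ satisfy $\complTime{\cS}{j}{\Mshared}\le\complTime{\cS}{j}{\Mpriv}$.

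Carrying this out for every offending job in turn gives the result; each fix strictly decreases the number of offending jobs and never decreases $\tct{\cS}$, so we terminate at a normal schedule whose total weighted overlap is at least that of the original optimum, hence is itself optimal. The main subtlety to get right is the bookkeeping in the saturated case: one must check that after appending work to $j$'s private interval and removing the corresponding late shared intervals, no new overlap is created or destroyed elsewhere — but since the modification is confined to $j$ and touches the time axis only at or after $\complTime{\cS}{j}{\Mpriv}$ on the shared side (where $j$ had no overlap) and only after $\complTime{\cS}{j}{\Mpriv}$ on the private side (where there is no matching shared work), $t_j$ is genuinely unchanged. A second minor point is that the shared intervals are open, so one should phrase the "length of shared work in $[\complTime{\cS}{j}{\Mpriv},\complTime{\cS}{j}{\Mshared}]$" in terms of measure rather than endpoints, which avoids any issue at the single point $\complTime{\cS}{j}{\Mpriv}$. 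Neither of these is a real obstacle; the argument is essentially a one-job cut-and-paste, and I expect the write-up to be short.
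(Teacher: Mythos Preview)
Your proposal is correct and uses essentially the same transformation as the paper: remove all of $j$'s shared-processor work lying beyond $\complTime{\cS}{j}{\Mpriv}$ and append that amount to $j$'s private interval, leaving $t_j$ unchanged. The paper is simply more direct---it performs your ``case~2'' unconditionally and never considers your ``case~1'' (filling idle space on the shared processor inside $(0,\complTime{\cS}{j}{\Mpriv})$); that case split is unnecessary for the present observation, since the private-processor relocation works regardless of whether such idle space exists, and the no-idle-time property you are implicitly invoking is established separately as Observation~\ref{obs:idle-time}.
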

\begin{proof}
Let $\cS$ be an optimal schedule.
Suppose that some job $j$ completes on a shared processor later than on its private processor in $\cS$, i.e., $\complTime{\cS}{j}{\Mshared}>\complTime{\cS}{j}{\Mpriv}$ and thus $\cS$ is not normal.
Hence, there exist intervals $I_1,\ldots,I_k$ such that for each $i\in\{1,\ldots,k\}$, the processor $\Mshared[\cS,j]$ executes $j$ in $I_i$, $I_i\subseteq(\complTime{\cS}{j}{\Mpriv},+\infty)$ and no part of $j$ executes in $(\complTime{\cS}{j}{\Mpriv},+\infty)\setminus\bigcup_{i=1}^kI_i$ on either $\Mshared[\cS,j]$ or $\Mpriv_j$.
Then, modify $\cS$ by removing the job $j$ from all intervals $I_1,\ldots,I_k$ on the shared processor $\Mshared[\cS,j]$ (so that $\Mshared[\cS,j]$ is idle in $\bigcup_{i=1}^k I_i$) and let $j$ execute in the interval $(0,\complTime{\cS}{j}{\Mpriv}+\sum_{i=1}^k|I_i|)$ on its private processor $\Mpriv_j$.
Note that the total weighted overlap of $\cS$ has not changed by this transformation.
After repeating this transformation for each job $j$ if need be we obtain an optimal schedule that is normal.
\end{proof}
\begin{observation} \label{obs:idle-time}
Let $\cS$ be an optimal normal schedule and let $X_i=\{j\in\jobs\st \Mshared[\cS,j]=\Mshared_i \text{ and }  \complTime{\cS}{j}{\Mshared}>0\}$ for each $i\in\{1,\ldots,m\}$.
There is no idle time in time interval $(0,\max\{\complTime{\cS}{j}{\Mshared}\st j\in X_i\})$ on each shared processor $\Mshared_i$.
\end{observation}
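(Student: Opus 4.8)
The plan is to argue by contradiction. Suppose $\cS$ is an optimal normal schedule and some shared processor $\Mshared_i$ is idle on a nonempty open subinterval of $(0,T_i)$, where $T_i=\max\{\complTime{\cS}{j}{\Mshared}\st j\in X_i\}$; I will build a feasible schedule whose total weighted overlap strictly exceeds $\tct{\cS}$. The one preliminary fact I would record first is that, since $\cS$ is normal, all of a job $j$'s execution on $\Mshared[\cS,j]$ lies in $(0,\complTime{\cS}{j}{\Mpriv})$, so the overlap of $j$ equals the total length of its shared execution, i.e.\ $p_j-\complTime{\cS}{j}{\Mpriv}$. Thus increasing $\tct{}$ is the same as moving processing (weighted) off private processors. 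Next I would fix an interval $(a,b)$ with $0\le a<b\le T_i$ on which $\Mshared_i$ is idle, note that $b<T_i$ must hold (if $b=T_i$ then no job runs on $\Mshared_i$ anywhere in $(a,T_i)$, contradicting the choice of $T_i$), and pick a job $j^\ast\in X_i$ with $\complTime{\cS}{j^\ast}{\Mshared}=T_i$. Since $\cS$ is normal, $\complTime{\cS}{j^\ast}{\Mpriv}\ge T_i>a$, so the private processor $\Mpriv_{j^\ast}$ is busy throughout $(a,b)$ while $\Mshared_i$ sits idle there — this is the slack I exploit.

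The modification: for a sufficiently small $\eta>0$, shrink $j^\ast$'s private interval to $(0,\complTime{\cS}{j^\ast}{\Mpriv}-\eta)$, delete the part (of length $\mu\le\eta$) of $j^\ast$'s execution on $\Mshared_i$ lying in $(\complTime{\cS}{j^\ast}{\Mpriv}-\eta,\complTime{\cS}{j^\ast}{\Mpriv})$, and insert the interval $(a,a+\eta+\mu)$ into $j^\ast$'s execution on $\Mshared_i$. For small $\eta$ this is feasible: $(a,a+\eta+\mu)\subseteq(a,b)$ is idle and disjoint from all other execution on $\Mshared_i$, the total processing of $j^\ast$ is unchanged, and one checks the new shared execution of $j^\ast$ still lies inside its new private interval $(0,\complTime{\cS}{j^\ast}{\Mpriv}-\eta)$. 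Hence the overlap of $j^\ast$ becomes $\bigl(p_{j^\ast}-\complTime{\cS}{j^\ast}{\Mpriv}\bigr)-\mu+(\eta+\mu)=\bigl(p_{j^\ast}-\complTime{\cS}{j^\ast}{\Mpriv}\bigr)+\eta$, all other jobs are untouched, and $\tct{}$ grows by $w_{j^\ast}\eta>0$, contradicting optimality of $\cS$.

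I expect the main obstacle to be exactly the boundary case $\complTime{\cS}{j^\ast}{\Mpriv}=\complTime{\cS}{j^\ast}{\Mshared}=T_i$: there, simply trimming $\eta$ off $j^\ast$'s private interval strands a length-$\eta$ tail of its shared execution outside the new private interval, so naively one gains $\eta$ of new overlap in the idle slot but loses $\eta$ of old overlap, for no net change. This is precisely why the deletion and re-insertion of that stranded piece (the parameter $\mu$, which equals $\eta$ in this case and $0$ when $\complTime{\cS}{j^\ast}{\Mpriv}>T_i$) is built into the modification, restoring the strict gain. Minor routine points I would also address are the possibility $a=0$ (the new interval then contributes its full length to the overlap just the same) and the bookkeeping that $j^\ast$ indeed runs on $\Mshared_i$ throughout $(\complTime{\cS}{j^\ast}{\Mpriv}-\eta,\complTime{\cS}{j^\ast}{\Mpriv})$ when $\mu>0$, which follows from its last piece on $\Mshared_i$ having positive length.
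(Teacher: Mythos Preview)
Your argument is correct and follows essentially the same route as the paper's own proof: assume an idle subinterval before the last shared completion, take a job whose shared completion lies beyond the idle slot, cut a small amount $\varepsilon$ from the tail of its private interval together with the matching tail of its shared execution, and re-insert that processing into the idle slot to gain $w_{j}\varepsilon$ of overlap. The paper picks any $j\in X_i$ with $r<\complTime{\cS}{j}{\Mshared}$ and always moves a pair of length-$\varepsilon$ pieces (one private, one shared) into $(l,r)$, while you specialize to the job achieving the maximum completion time and let your parameter $\mu$ collapse to $0$ when $\complTime{\cS}{j^\ast}{\Mpriv}>\complTime{\cS}{j^\ast}{\Mshared}$; these are cosmetic differences, not a different idea.
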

\begin{proof}
Note that by Observation~\ref{obs:normal}, there exists a normal optimal schedule $\cS$.
Suppose for a contradiction that some shared processor $\Mshared_i$ is idle in a time interval $(l,r)\neq\emptyset$ and $r<\complTime{\cS}{j}{\Mshared}\leq \complTime{\cS}{j}{\Mpriv}$ for some job $j\in X_i$.
Take maximum $\varepsilon\in(0,(r-l)/2]$ such that $j$ executes continuously in $I_1=(\complTime{\cS}{j}{\Mpriv}-\varepsilon,\complTime{\cS}{j}{\Mpriv})$ on $\Mpriv_j$ and in $I_2=(\complTime{\cS}{j}{\Mshared}-\varepsilon,\complTime{\cS}{j}{\Mshared})$ on $\Mshared_i$.
Then, obtain a schedule $\cS'$ by taking a piece of $j$ that executes in $I_1$ on $\Mpriv_j$ and a piece of $j$ that executes in $I_2$ on $\Mshared_i$ and execute both pieces in $(l,r)$ on $\Mshared_i$.
Clearly, the new schedule $\cS'$ is feasible and, since $\cS$ is normal,
$\tct{\cS'}=\tct{\cS}+w_j\varepsilon$, which contradicts the optimality of $\cS$.
\end{proof}

We say that a schedule $\cS$ is \emph{non-preemptive} if each job $j$ executes 
in time interval $(\startTime{\cS}{j}{\Mshared},\complTime{\cS}{j}{\Mshared})$ on
$\Mshared[\cS,j]$ in $\cS$.
In other words, in a non-preemptive schedule there is at most one interval in~\ref{it:f2} in the definition of a feasible schedule.
\begin{observation} \label{obs:non-preemptive}
There exists an optimal schedule that is normal and non-preemptive.
\end{observation}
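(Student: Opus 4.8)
The plan is to start from a normal optimal schedule $\cS$, which exists by Observation~\ref{obs:normal}, and, on each shared processor separately, re-sequence the pieces of the jobs into single blocks without changing the total weighted overlap. The first point to record is that, for a \emph{normal} schedule, $\tct{\cdot}$ depends only on \emph{how much} time each job spends on a shared processor and not on \emph{where} its pieces sit: job $j$ runs on $\Mpriv_j$ precisely on $(0,\complTime{\cS}{j}{\Mpriv})$ by~\ref{it:f3}, and since $\cS$ is normal every piece of $j$ on $\Mshared[\cS,j]$ lies inside that interval, so the overlap of $j$ equals the total length of its shared pieces, namely $p_j-\complTime{\cS}{j}{\Mpriv}$. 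Hence, if we alter $\cS$ only by rearranging the shared pieces on their shared processors while (a) leaving all private-processor intervals intact and (b) keeping each job's shared completion time at most its private completion time, the result is still feasible and normal and has the same total weighted overlap.

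Next I would localize the problem to one shared processor. Fix $\Mshared_i$ and let $X_i$ be the set of jobs assigned to it with a nonempty shared part. Job $j\in X_i$ needs $d_j:=p_j-\complTime{\cS}{j}{\Mpriv}>0$ units of time on $\Mshared_i$ and, by requirement~(b), must finish on $\Mshared_i$ by the ``deadline'' $\complTime{\cS}{j}{\Mpriv}$. The restriction of $\cS$ to $\Mshared_i$ is a \emph{preemptive} single-machine schedule of the jobs of $X_i$, all available from time $0$, that meets all these deadlines. By Jackson's rule for $1\,|\,r_j=0\,|\,L_{\max}$ --- where preemption is useless, so the non-preemptive earliest-deadline-first schedule is already optimal --- sequencing the jobs of $X_i$ contiguously from time $0$ in nondecreasing order of $\complTime{\cS}{j}{\Mpriv}$ gives a \emph{non-preemptive} schedule on $\Mshared_i$ in which every job of $X_i$ still finishes by its private completion time, so requirement~(b) holds.

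Performing this on every $\Mshared_i$ and keeping the private processors exactly as in $\cS$ yields a schedule $\cS'$ that is feasible, non-preemptive, normal (each job completes on its shared processor no later than on its private one), and --- by the first paragraph --- satisfies $\tct{\cS'}=\tct{\cS}$, hence is optimal. The one nontrivial ingredient is the preemptive-to-non-preemptive reduction; if a self-contained argument is preferred, one can instead take, among all normal optimal schedules, one minimizing the total number of shared-processor pieces over all jobs, and argue that a job with two pieces on some $\Mshared_i$ can have them merged (slide the earlier piece forward, or slide the later piece together with the intervening pieces of other jobs backward) without pushing any job past its private completion time, contradicting minimality. I expect that bookkeeping --- ensuring that every \emph{other} job also retains a shared completion time within its private completion time while one job's pieces are merged --- to be the part that needs the most care, which is exactly why channeling it through Jackson's rule is the cleaner route.
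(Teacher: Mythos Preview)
Your argument is correct and takes a genuinely different route from the paper. The paper proceeds by an elementary local transformation: it picks a job with two pieces on a shared processor, slides the intervening work left by the length of the earlier piece, and places that earlier piece immediately before the later one; this merges the two pieces, keeps every other job's shared completion time no later than before (hence normality is preserved), and leaves each job's total shared time --- and therefore the overlap --- unchanged. Iterating removes all preemptions. You instead make the key observation explicit up front (in a normal schedule the overlap of $j$ equals $p_j-\complTime{\cS}{j}{\Mpriv}$, independent of where its shared pieces sit), reduce each shared processor to a single-machine deadline problem with $d_j=p_j-\complTime{\cS}{j}{\Mpriv}$ and deadline $\complTime{\cS}{j}{\Mpriv}$, and invoke Jackson's rule: since the preemptive schedule witnesses $\sum_{\ell\le i}d_{j_\ell}\le \complTime{\cS}{j_i}{\Mpriv}$ for every prefix in EDF order, the contiguous EDF schedule meets all deadlines non-preemptively. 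Your approach is shorter and has a pleasant side effect: the resulting schedule is already \emph{ordered} in the sense of Observation~\ref{lem:ordered} (shared and private completion orders agree), so it subsumes the paper's next step as well. The paper's approach, in exchange, is entirely self-contained and does not appeal to an external scheduling result. Your closing ``minimality'' alternative is essentially the paper's shift argument recast as an extremal choice; just be mindful that the definition of a feasible schedule allows an arbitrary \emph{set} of intervals, so one should say a word about why a schedule with finitely many pieces (and hence a minimum) exists before invoking that route.
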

\begin{proof}
By Observation~\ref{obs:normal}, there exists a normal optimal schedule $\cS$.
Suppose that $\cS$ is preemptive.
We transform $\cS$ into a non-preemptive one whose total weighted overlap is not less than that of $\cS$.
The transformation is performed iteratively.
At the beginning of each iteration a job $j$ is selected such that $j$ executes on a shared processor $\Mshared[\cS,j]$ in at least two disjoint time intervals 
 $(l,r)$ and $(l',r')$, where $r<l'$. Without loss of generality we assume that the intervals are of maximal lengths.
Modify $\cS$ by shifting each job start, completion and preemption that occurs in time interval $(r,l')$ on $\Mshared[\cS,j]$ by $r-l$ units to the left, i.e. towards the start of the schedule at 0.
Then, the part of $j$ executed in $(l,r)$ in $\cS$ is executed in $(l'-r+l,l')$ after the transformation.
The transformation does not increase the completion time of any job $j'$ on $\Mshared[\cS,j]$  and keeps it the same on $\Mpriv_{j'}$ for each job $j'$.
Thus, in particular, $\cS$ remains normal.
However, the number of preemptions of the job $j$ decreases by $1$ and there is no job whose number of preemptions increases.
Also, the total weighted overlap of $\cS$ does not change.
Hence, after finite number of such iterations we arrive at a required normal non-preemptive optimal schedule.
\end{proof}

We say that a schedule $\cS$ is \emph{ordered} if it is normal, non-preemptive and for any two jobs $j$ and $j'$ assigned to the same shared processor it holds $\complTime{\cS}{j}{\Mshared}\leq\complTime{\cS}{j'}{\Mshared}$ if and only if $\complTime{\cS}{j}{\Mpriv}\leq\complTime{\cS}{j'}{\Mpriv}$.
Informally speaking, the order of job completions on the shared processors is the same as the order of their completions on the private processors.

\begin{observation} \label{lem:ordered}
There exists an optimal  schedule that is ordered.
\end{observation}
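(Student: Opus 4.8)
The plan is to start from an optimal schedule that is normal and non-preemptive, which exists by Observation~\ref{obs:non-preemptive}, and to repeatedly apply one local exchange on a shared processor until the schedule becomes ordered; in effect this is a bubble sort of the jobs on each shared processor by their private completion times. Throughout I would use a simple fact about normal schedules: since a job $j$ runs on its private processor during all of $(0,\complTime{\cS}{j}{\Mpriv})$, which by normality contains $(\startTime{\cS}{j}{\Mshared},\complTime{\cS}{j}{\Mshared})$, the entire interval in which $j$ uses a shared processor is an overlap, so $t_j=\complTime{\cS}{j}{\Mshared}-\startTime{\cS}{j}{\Mshared}$ and $\complTime{\cS}{j}{\Mpriv}=p_j-t_j$. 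Moreover, by Observation~\ref{obs:idle-time} the shared processor used by a set of jobs has no idle time before its last completion, so those jobs occupy pairwise abutting intervals on it, and this structure is preserved by the exchanges below.

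Now suppose the current schedule is not ordered. Then on some shared processor there are jobs $j$ and $j'$ occupying consecutive intervals $(a,b)$ and $(b,c)$ with $\complTime{\cS}{j}{\Mpriv}>\complTime{\cS}{j'}{\Mpriv}$ (by the standard argument, if the private completion times along a shared processor are not nondecreasing, some consecutive pair is strictly out of order). I would exchange the two intervals \emph{keeping their lengths}: let $j'$ run on $(a,a+t_{j'})$ and $j$ run on $(a+t_{j'},c)$, leaving every other job and every private processor untouched. This keeps the schedule non-preemptive and changes no $t_g$, hence $\tct{\cS}$ is unchanged and the schedule stays optimal; the only thing to check is normality. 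For $j'$ the new shared completion is $a+t_{j'}\le b+t_{j'}=c=\complTime{\cS}{j'}{\Mshared}\le\complTime{\cS}{j'}{\Mpriv}$, which is fine. For $j$ the new shared completion is $c$, and here the hypothesis is exactly what is needed: $c=\complTime{\cS}{j'}{\Mshared}\le\complTime{\cS}{j'}{\Mpriv}<\complTime{\cS}{j}{\Mpriv}$, so $j$ still finishes on the shared processor before it finishes on its private processor.

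This exchange reverses the relative order on the shared processor of exactly the pair $\{j,j'\}$ and leaves all private completion times unchanged, so the total number of out-of-order pairs (over all shared processors) drops by one and none is created; since this number is a bounded nonnegative integer, after finitely many exchanges we reach an optimal normal non-preemptive schedule in which, along every shared processor, the private completion times are nondecreasing. It remains to rule out the degenerate case in which two jobs $j,j'$ sharing a processor and occupying consecutive abutting intervals have \emph{equal} private completion times (which the ordered condition also forbids): here optimality forces $w_j=w_{j'}$, since otherwise shifting the common boundary of their two intervals towards the job of larger weight — possibly after first performing the swap above so that this job comes second — is feasible and strictly increases $\tct{\cS}$; and once $w_j=w_{j'}$, an infinitesimal boundary shift separates the two private completion times at no cost to $\tct{\cS}$. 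The main obstacle is the normality check in the second paragraph: one must be certain that moving $j$'s interval later on the shared processor does not push its shared completion past its private completion, and it is precisely the inversion $\complTime{\cS}{j}{\Mpriv}>\complTime{\cS}{j'}{\Mpriv}$ that guarantees this; the absence of newly created out-of-order pairs is then automatic, because both interval lengths and all private completion times are preserved.
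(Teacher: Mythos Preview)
Your core argument---swap two consecutive jobs on a shared processor whose private completion times are strictly inverted, keeping the interval lengths (hence all $t_g$ and $\tct{\cS}$) fixed, verify normality via $c=\complTime{\cS}{j'}{\Mshared}\le\complTime{\cS}{j'}{\Mpriv}<\complTime{\cS}{j}{\Mpriv}$, and iterate until no strict inversion remains---is exactly the paper's proof. The paper does not treat the tie case you raise in your last paragraph; that observation is sharper than the paper's own argument, but note that your ``infinitesimal boundary shift'' between $j$ and $j'$ also perturbs $\complTime{\cS}{j}{\Mpriv}$ and $\complTime{\cS}{j'}{\Mpriv}$ and may therefore create a new inversion or tie with the job immediately before $j$ or after $j'$, so to make it airtight one should handle an entire maximal block of consecutive tied jobs at once.
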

\begin{proof}
Let $\cS$ be an optimal normal and non-preemptive schedule; such a schedule exists due to Observation~\ref{obs:non-preemptive}.
Let $X_i=\{j\in\jobs\st \Mshared[\cS,j]=\Mshared_i \text{ and }  \complTime{\cS}{j}{\Mshared}>0\}$ for each $i\in\{1,\ldots,m\}$.
Recall that each job $j\in X_i$ executes in a single interval $(\startTime{\cS}{j}{\Mshared},\complTime{\cS}{j}{\Mshared})$ on $\mathcal{M}_i$ in a non-preemptive $\cS$.
By Observation~\ref{obs:idle-time}, there is no idle time  in time interval $(0,\complTime{\cS}{j}{\Mshared})$ for each job $j\in X_i$ on $\Mshared_i$.
Thus, we may represent $\cS$ on a processor $\Mshared_i$ as a sequence of pairs $\Mshared_i=((j_1,l_1),\ldots,(j_k,l_k))$, where $X_i=\{j_1,\ldots,j_k\}$ and the job $j_t$ executes in time interval
\[\left( \sum_{j'=0}^{t-1}l_{j'}, \sum_{j'=0}^{t}l_{j'} \right) \]
on $\Mshared_i$, where $l_0=0$, and 
in time interval $(0,p_{j_t}-l_t)$ on $\Mpriv_{j_t}$ for each $t\in\{1,\ldots,k\}$.

If $\cS$ is ordered, then the proof is completed.
Hence, suppose that $\cS$ is not ordered.
There exists a shared processor $\Mshared_i$ and an index $t\in\{1,\ldots,k-1\}$ such that
\begin{equation} \label{eq:ordered:normal}
\complTime{\cS}{j_t}{\Mshared}<\complTime{\cS}{j_{t+1}}{\Mshared} \quad\textup{and}\quad \complTime{\cS}{j_t}{\Mpriv}>\complTime{\cS}{j_{t+1}}{\Mpriv}.
\end{equation}
Consider a new non-preemptive schedule $\cS'$ in which:
\[\Mshared_i=((j_1,l_1),\ldots,(j_{t-1},l_{t-1}),(j_{t+1},l_{t+1}),(j_t,l_t),(j_{t+2},l_{t+2}),\ldots,(j_k,l_k)),\]
i.e., the order of jobs $j_t$ and $j_{t+1}$ has been reversed on $\Mshared_i$ while the schedules on all other processors remain unchanged.
Note that this exchange does not affect start times and completion times of any job on $\Mshared_i$ except for $j_t$ and $j_{t+1}$.
Since $\cS$ is normal, we obtain
\[\complTime{\cS'}{j_{t+1}}{\Mshared}\leq\complTime{\cS}{j_{t+1}}{\Mshared}\leq\complTime{\cS}{j_{t+1}}{\Mpriv}=\complTime{\cS'}{j_{t+1}}{\Mpriv}\]
and, also by \eqref{eq:ordered:normal},
\[\complTime{\cS'}{j_{t}}{\Mshared}=\complTime{\cS}{j_{t+1}}{\Mshared}\leq\complTime{\cS}{j_{t+1}}{\Mpriv}<\complTime{\cS}{j_t}{\Mpriv},\]
which proves that $\cS'$ is normal.
Clearly, $\tct{\cS'}=\tct{\cS}$.
Set $\cS:=\cS'$ and repeat the exchange if need be.
After a finite number of such exchanges we arrive at a schedule that is ordered.
\end{proof}

\section{Optimal schedules are synchronized} \label{sec:optimal}
We say that a schedule is \emph{synchronized} if it is normal, non-preemptive and for each job $j$ whose part executes on some shared processor it holds $\complTime{\cS}{j}{\Mshared}=\complTime{\cS}{j}{\Mpriv}$.
Note that a synchronized schedule is also ordered but the reverse implication does not hold in general.

In order to prove that there are optimal schedules that are synchronized we introduce \emph{pulling} and \emph{pushing} schedule transformations.
Let $\cS$ be an optimal ordered (possibly synchronized) schedule.
Consider a shared processor $\Mshared_r$.
Let $X_r=\{j_1,\ldots,j_k\}$, $k>1$, be jobs executed on  $\Mshared_r$ in $\cS$ 
and ordered according to increasing order of their completion times on $\Mshared_r$.
Let $i\in\{2,\ldots,k\}$ be an index such that $\complTime{\cS}{j_{\ell}}{\Mshared}=\complTime{\cS}{j_{\ell}}{\Mpriv}$ for each $\ell\in\{i,\ldots,k\}$.
(Recall that $\complTime{\cS}{j_{i}}{\Mshared}\leq\complTime{\cS}{j_{i}}{\Mpriv}$ since $\cS$ is normal.)
Observe that $j_k$ completes at the same time on its private processor (since $\cS$ is optimal) and on $\Mshared_r$ and hence the index $i$ is well defined.
Finally, let
\[0<\varepsilon\leq\startTime{\cS}{j_i}{\Mshared}-\startTime{\cS}{j_{i-1}}{\Mshared} = \complTime{\cS}{j_{i-1}}{\Mshared}-\startTime{\cS}{j_{i-1}}{\Mshared}.\]
We define an operation of \emph{pulling of} 
$j_{i}$ by $\varepsilon$ in $\cS$ as a transformation of $\cS$ that results in a schedule $\cS'$ defined as follows.
First, $\cS$ and $\cS'$ are identical on $\Mshared_r$ in time interval $(0,\startTime{\cS}{j_i}{\Mshared}-\varepsilon)$.
Then, for the job $j_{i-1}$ we set:
\[\complTime{\cS'}{j_{i-1}}{\Mshared}=\complTime{\cS}{j_{i-1}}{\Mshared}-\varepsilon, \quad\textup{and}\quad\complTime{\cS'}{j_{i-1}}{\Mpriv}=\complTime{\cS}{j_{i-1}}{\Mpriv}+\varepsilon.\]
Next, for each $\ell\in\{i,\ldots,k\}$ (by proceeding with subsequent increasing values of $\ell$) we define how $j_{\ell}$ is executed in $\cS'$:
\[\startTime{\cS'}{j_{\ell}}{\Mshared}=\complTime{\cS'}{j_{\ell-1}}{\Mshared}, \quad \complTime{\cS'}{j_{\ell}}{\Mshared}=\complTime{\cS}{j_{\ell}}{\Mshared}-\varepsilon/2^{\ell-i+1}, \quad \complTime{\cS'}{j_{\ell}}{\Mpriv}=\complTime{\cS}{j_{\ell}}{\Mpriv}-\varepsilon/2^{\ell-i+1}.\]
Finally, $\cS'$ and $\cS$ are identical on all other processors, i.e., on all processors different from $\Mshared_r$ and $\Mpriv_{j_{i-1}},\Mpriv_{j_i},\ldots,\Mpriv_{j_l}$. The operation of pulling $j_3$ by $\varepsilon$ is illustrated in Figure \ref{fig:pulling}.
Note that if we take $\varepsilon=\startTime{\cS}{j_i}{\Mshared}-\startTime{\cS}{j_{i-1}}{\Mshared}$, i.e., $\varepsilon$ equals the length of the entire execution interval of $j_{i-1}$ on $\Mshared_r$, then pulling of $j_i$ by $\varepsilon$ produces $\cS'$ in which $j_{i-1}$ executes only on its private processor. From this definition we have.

\begin{figure}[htb]
\begin{center}
\includegraphics[scale=1]{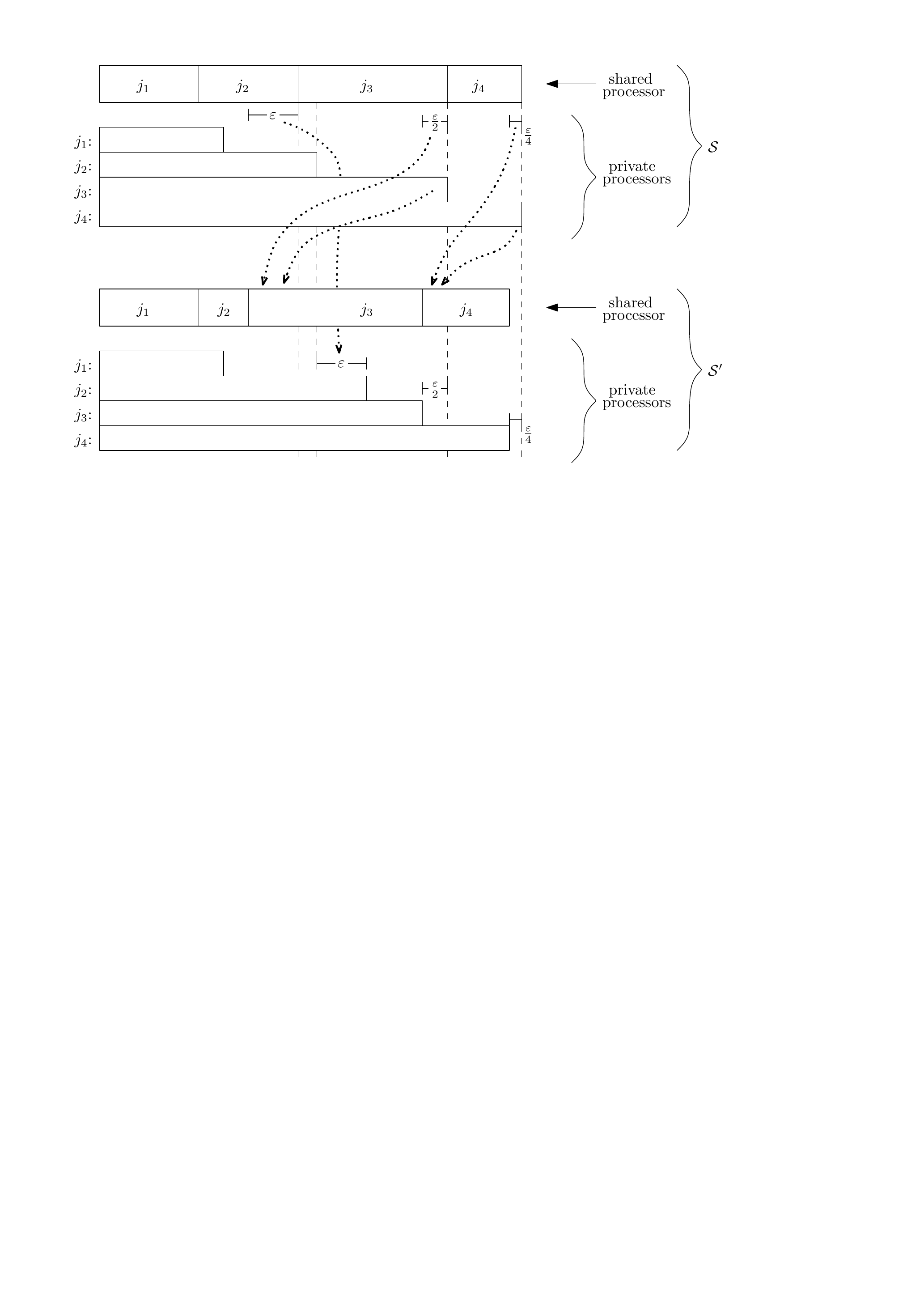}
\caption{An example of pulling of $j_3$ by $\varepsilon$ in some schedule $\cS$.}
\label{fig:pulling}
\end{center}
\end{figure}

\begin{lemma} \label{lem:pulling}
The pulling of $j_i$ by $\varepsilon$ in $\cS$ produces a feasible schedule $\cS'$ and
\[\tct{\cS'}=\tct{\cS}-\varepsilon w_{j_{i-1}}+\varepsilon\sum_{\ell=i}^k \frac{w_{j_{\ell}}}{2^{\ell-i+1}}.\]
\qed
\end{lemma}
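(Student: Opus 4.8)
The plan is to verify feasibility of $\cS'$ first and then compute $\tct{\cS'}$, and both parts rest on one elementary remark: because $\cS$ is ordered, hence normal and non-preemptive, every job $j$ occupies its private processor throughout $(0,\complTime{\cS}{j}{\Mpriv})$ and its shared processor throughout a single interval that, by normality, ends no later than $\complTime{\cS}{j}{\Mpriv}$; consequently the overlap $t_j$ is exactly the length of that shared interval (and $t_j=0$ when $j$ has no shared part). I will check along the way that $\cS'$ is again normal and non-preemptive, so that this description of overlaps is valid for $\cS'$ too.

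For feasibility I would check three things. Conflict-freeness: on $\Mshared_r$ the schedule $\cS'$ leaves $j_1,\ldots,j_{i-2}$ and the left endpoint of $j_{i-1}$ untouched, merely shortens $j_{i-1}$'s interval at its right end by $\varepsilon$, and then lays the intervals of $j_i,\ldots,j_k$ back to back from $\complTime{\cS'}{j_{i-1}}{\Mshared}$, so no two jobs overlap there; each affected private processor $\Mpriv_{j_{i-1}},\ldots,\Mpriv_{j_k}$ still hosts only its own job on a single interval $(0,\cdot)$, and all remaining processors are identical to those in $\cS$. Legality of the prescribed quantities: the hypothesis $\varepsilon\le\complTime{\cS}{j_{i-1}}{\Mshared}-\startTime{\cS}{j_{i-1}}{\Mshared}$ keeps $j_{i-1}$'s shared interval of non-negative length; each $j_\ell$ with $\ell\ge i$ only gains shared processing time (see below), so its shared interval stays legal; and its new private completion time $\complTime{\cS}{j_\ell}{\Mpriv}-\varepsilon/2^{\ell-i+1}$ stays positive because
\[\varepsilon\le\complTime{\cS}{j_{i-1}}{\Mshared}-\startTime{\cS}{j_{i-1}}{\Mshared}\le\complTime{\cS}{j_{i-1}}{\Mshared}=\startTime{\cS}{j_i}{\Mshared}\le\complTime{\cS}{j_i}{\Mshared}=\complTime{\cS}{j_i}{\Mpriv}\le\complTime{\cS}{j_\ell}{\Mpriv},\]
using the absence of idle time on $\Mshared_r$ (Observation~\ref{obs:idle-time}), the synchronization of $j_i$, and the ordering of $\cS$. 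Preservation of each job's total processing time: for $j_{i-1}$ the shared interval shrinks by $\varepsilon$ while the private one grows by $\varepsilon$; for each $j_\ell$ with $\ell\ge i$ the start of $j_\ell$ on $\Mshared_r$ drops by $\varepsilon/2^{\ell-i}$ (it moves together with the completion of the job preceding it on $\Mshared_r$, which for $j_i$ is $j_{i-1}$, whose completion drops by $\varepsilon$), whereas $j_\ell$'s own shared completion drops by only $\varepsilon/2^{\ell-i+1}$, so its shared interval \emph{grows} by $\varepsilon/2^{\ell-i}-\varepsilon/2^{\ell-i+1}=\varepsilon/2^{\ell-i+1}$, exactly offsetting the $\varepsilon/2^{\ell-i+1}$ by which its private interval shrinks. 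Hence every job still runs for $p_j$ time units in total and $\cS'$ is feasible.

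To compute $\tct{\cS'}$, observe that $\cS'$ is non-preemptive by construction and normal: the shared completion of $j_{i-1}$ drops while its private completion rises; for each $j_\ell$ with $\ell\ge i$ both completions drop by the same amount $\varepsilon/2^{\ell-i+1}$, so $j_\ell$ remains synchronized (it was synchronized in $\cS$); and all other jobs are untouched. Thus the elementary remark applies to $\cS'$, and the change in a job's overlap equals the change in the length of its shared interval, which by the bookkeeping above is $-\varepsilon$ for $j_{i-1}$, $+\varepsilon/2^{\ell-i+1}$ for each $j_\ell$ with $i\le\ell\le k$, and $0$ for every other job. Weighting these changes by the $w_j$ and summing yields $\tct{\cS'}=\tct{\cS}-\varepsilon w_{j_{i-1}}+\varepsilon\sum_{\ell=i}^k w_{j_\ell}/2^{\ell-i+1}$, as claimed.

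The one non-routine point is the processing-time bookkeeping: one must confirm that prescribing $\startTime{\cS'}{j_\ell}{\Mshared}$, $\complTime{\cS'}{j_\ell}{\Mshared}$ and $\complTime{\cS'}{j_\ell}{\Mpriv}$ simultaneously for all $\ell\ge i$ is consistent with $p_{j_\ell}$, which comes down to the identity $2\cdot\varepsilon/2^{\ell-i+1}=\varepsilon/2^{\ell-i}$ — precisely the reason the halving factors are chosen as they are — and that the shrunken private completion times stay positive for every admissible $\varepsilon$, which is what the displayed inequality chain guarantees. Everything else only touches intervals that do not move.
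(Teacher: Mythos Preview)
Your proof is correct and provides a careful verification of what the paper treats as immediate: the paper records Lemma~\ref{lem:pulling} with the remark ``From this definition we have'' and no further argument. Your detailed bookkeeping of how each shared and private interval changes --- in particular the observation that the halving factors are chosen precisely so that $j_\ell$'s shared interval grows by $\varepsilon/2^{\ell-i+1}$ while its private one shrinks by the same amount --- is exactly the computation the paper leaves implicit.
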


We also define a transformation that is a `reverse' of pulling.
Recall that $j_1,\ldots,j_k$ are the jobs executing on the processor $\Mshared_r$.
For this transformation we assume that the schedule $\cS$ is ordered but not synchronized and some job completes on $\Mshared_r$ earlier than on its private processor.
Select $i\in\{2,\ldots,k\}$ to be the index such that $\complTime{\cS}{j_{\ell}}{\Mshared}=\complTime{\cS}{j_{\ell}}{\Mpriv}$ for each $\ell\in\{i,\ldots,k\}$
and $\complTime{\cS}{j_{i-1}}{\Mpriv}>\complTime{\cS}{j_{i-1}}{\Mshared}$.
The index $i$ is well defined because $\complTime{\cS}{j_k}{\Mshared}=\complTime{\cS}{j_k}{\Mpriv}$ in optimal schedule $\cS$.
Let
\begin{equation} \label{eq:epsp1}
0 < \varepsilon \leq \frac{1}{2}\left(\complTime{\cS}{j_{i-1}}{\Mpriv}-\complTime{\cS}{j_{i-1}}{\Mshared}\right)
\end{equation}
and for each $\ell\in\{i,\ldots,k\}$,
\begin{equation} \label{eq:epsp2}
\frac{\varepsilon}{2^{\ell-i+1}} \leq \complTime{\cS}{j_{\ell}}{\Mshared}-\startTime{\cS}{j_{\ell}}{\Mshared}.
\end{equation}
The transformation of \emph{pushing of} $j_i$ by $\varepsilon$ in $\cS$ produces schedule $\cS'$ defined as follows.
Both $\cS'$ and $\cS$ are identical on $\Mshared_r$ in time interval $(0,\startTime{\cS}{j_{i-1}}{\Mshared})$,
\[\complTime{\cS'}{j_{i-1}}{\Mshared}=\complTime{\cS}{j_{i-1}}{\Mshared}+\varepsilon\textup{ and }\complTime{\cS'}{j_{i-1}}{\Mpriv}=\complTime{\cS}{j_{i-1}}{\Mpriv}-\varepsilon.\]
Then, for each $\ell\in\{i,\ldots,k\}$ (with increasing values of $\ell$) we have
\[\startTime{\cS'}{j_{\ell}}{\Mshared}=\complTime{\cS'}{j_{\ell-1}}{\Mshared}, \quad \complTime{\cS'}{j_{\ell}}{\Mshared}=\complTime{\cS}{j_{\ell}}{\Mshared}+\frac{\varepsilon}{2^{\ell-i+1}}\]
and
\[\complTime{\cS'}{j_{\ell}}{\Mpriv}=\complTime{\cS}{j_{\ell}}{\Mpriv}+\frac{\varepsilon}{2^{\ell-i+1}}.\]
On each shared processor different than $\Mshared_r$ and on private processors different than $\Mpriv_{j_{i-1}},\ldots,\Mpriv_{j_k}$ the schedules $\cS$ and $\cS'$ are the same.

Note that if $\varepsilon/2^{\ell-i+1}=\complTime{\cS}{j_{\ell}}{\Mshared}-\startTime{\cS}{j_{\ell}}{\Mshared}$ for some $\ell\in\{i,\ldots,k\}$, then the pushing operation eliminates $j_{\ell}$ from the shared processor, i.e., $j_{\ell}$ executes only on its private processor in $\cS'$. From this definition we have.
\begin{lemma} \label{lem:pushing}
The pushing of $j_i$ by $\varepsilon$ in $\cS$ produces a feasible schedule $\cS'$ and
\[\tct{\cS'}=\tct{\cS}+\varepsilon w_{j_{i-1}}-\varepsilon\sum_{\ell=i}^k \frac{w_{j_{\ell}}}{2^{\ell-i+1}}.\]
\qed
\end{lemma}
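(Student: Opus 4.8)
The plan is to prove the lemma by straightforward verification against the definitions, in two stages: first show that the schedule $\cS'$ produced by pushing is feasible (and in fact still normal), then carry out the bookkeeping that tracks how each job's overlap changes. Since $\cS$ is ordered, non-preemptive, and has no idle time on $\Mshared_r$ before $\complTime{\cS}{j_k}{\Mshared}$ (Observations~\ref{obs:idle-time} and~\ref{obs:non-preemptive}), the jobs $j_1,\ldots,j_k$ occupy $\Mshared_r$ back-to-back in that order, and this is exactly the structure the pushing operation preserves.

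For feasibility I would first check that each affected job still has total execution length equal to its processing time. Job $j_{i-1}$'s shared interval grows by $\varepsilon$ and its private interval shrinks by $\varepsilon$. Job $j_i$'s shared interval has its left endpoint moved right by $\varepsilon$ (it now starts at $\complTime{\cS'}{j_{i-1}}{\Mshared}$) but its right endpoint moved right only by $\varepsilon/2$, so it shrinks by $\varepsilon/2$, offset by its private interval growing by $\varepsilon/2$. For $\ell\in\{i+1,\ldots,k\}$ the left endpoint moves right by $\varepsilon/2^{\ell-i}$ (the shift of $\complTime{\cS}{j_{\ell-1}}{\Mshared}$) and the right endpoint by $\varepsilon/2^{\ell-i+1}$, so the shared interval shrinks by $\varepsilon/2^{\ell-i+1}$ and the private interval grows by the same amount. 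Next, on $\Mshared_r$ the intervals of $j_1,\ldots,j_{i-2}$ are untouched and each $j_\ell$ with $\ell\geq i$ is re-placed to start exactly where $j_{\ell-1}$ now finishes, so the intervals stay pairwise disjoint and in the same order; the only way feasibility could fail is a negative-length shared interval, which is ruled out precisely by \eqref{eq:epsp2}, bounding the shrinkage $\varepsilon/2^{\ell-i+1}$ of $j_\ell$'s shared interval by its current length (equality being the case where $j_\ell$ leaves $\Mshared_r$). Finally, normality: for $\ell\in\{i,\ldots,k\}$ we have $\complTime{\cS}{j_\ell}{\Mshared}=\complTime{\cS}{j_\ell}{\Mpriv}$ and both completion times move by $\varepsilon/2^{\ell-i+1}$, so these jobs stay synchronized, while for $j_{i-1}$ the inequality $\complTime{\cS'}{j_{i-1}}{\Mshared}=\complTime{\cS}{j_{i-1}}{\Mshared}+\varepsilon\leq\complTime{\cS}{j_{i-1}}{\Mpriv}-\varepsilon=\complTime{\cS'}{j_{i-1}}{\Mpriv}$ is exactly \eqref{eq:epsp1}. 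All other processors are unchanged, so $\cS'$ is feasible and normal.

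For the overlap computation I would use that in a normal schedule a job's shared execution interval lies inside its private interval $(0,\complTime{\cS}{j}{\Mpriv})$, so the job's total overlap equals the length of its shared interval — and this remains true in $\cS'$. Hence, by the length changes above, the overlap of $j_{i-1}$ increases by $\varepsilon$, the overlap of each $j_\ell$ with $\ell\in\{i,\ldots,k\}$ decreases by $\varepsilon/2^{\ell-i+1}$, and every other job's overlap is unchanged. Multiplying by the weights and summing gives $\tct{\cS'}=\tct{\cS}+\varepsilon w_{j_{i-1}}-\varepsilon\sum_{\ell=i}^{k}w_{j_\ell}/2^{\ell-i+1}$, as claimed.

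The one point that needs care is the telescoping effect of the geometric shifts: moving $\complTime{\cS}{j_{i-1}}{\Mshared}$ right by $\varepsilon$ forces $j_i$'s shared interval to start $\varepsilon$ later, but we deliberately let it finish only $\varepsilon/2$ later, which both halves the perturbation inherited by $j_{i+1}$ (so the constraints \eqref{eq:epsp2} weaken geometrically down the chain) and turns the net overlap loss into the convergent sum above rather than a flat $\varepsilon$ per job; making these two accountings consistent is the crux, and everything else is substitution into the definitions of a feasible schedule and of $\tct{\cdot}$. The companion Lemma~\ref{lem:pulling} is proved by the identical calculation with the roles of ``grows'' and ``shrinks'' interchanged.
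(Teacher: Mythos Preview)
Your proposal is correct and is exactly the direct verification from the definition that the paper intends; the paper itself gives no proof beyond the phrase ``From this definition we have'' and a \qed, so you have simply written out the bookkeeping the authors left implicit. The key observations you identify --- that the geometric halving of the shift keeps each $j_\ell$'s shared interval nonnegative via \eqref{eq:epsp2}, that \eqref{eq:epsp1} preserves normality for $j_{i-1}$, and that in a normal schedule the overlap of a job equals the length of its shared interval --- are precisely what is needed, and your length accounting is accurate.
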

We also note that if one first makes pulling of some job $j'$ by $\varepsilon$ in a schedule $\cS$ which results in a schedule $\cS'$ in which the same job precedes $j'$ on the shared processor both in $\cS$ and $\cS'$, then pushing of $j'$ by $\varepsilon$ in $\cS'$ results in returning back to $\cS$.

We are now ready to prove our main result of this section, which will allow us to work only with synchronized schedules in the sections that follow.
\begin{lemma} \label{lem:synchronized}
There exists an optimal synchronized schedule.
\end{lemma}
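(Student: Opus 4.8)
The plan is to argue by an extremal choice of schedule. Among all optimal ordered schedules (which exist by Observation~\ref{lem:ordered}) I would pick one, $\cS$, that minimizes the ``desynchronization'' quantity
\[\Phi(\cS)=\sum_{j}\bigl(\complTime{\cS}{j}{\Mpriv}-\complTime{\cS}{j}{\Mshared}\bigr),\]
the sum taken over all jobs $j$ that execute on some shared processor in $\cS$. Since $\cS$ is normal every summand is nonnegative, so $\Phi(\cS)\geq 0$, and $\Phi(\cS)=0$ holds exactly when $\cS$ is synchronized (recall $\cS$ is already normal and non-preemptive). So I assume $\Phi(\cS)>0$ and aim to produce an optimal ordered schedule with strictly smaller $\Phi$, a contradiction.

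If $\Phi(\cS)>0$ then some shared processor $\Mshared_r$ carries a job that does not complete at the same time on $\Mshared_r$ and on its private processor; write $j_1,\ldots,j_k$ for the jobs on $\Mshared_r$ in increasing order of their completion times on $\Mshared_r$. As noted in the setups of Lemmas~\ref{lem:pulling} and~\ref{lem:pushing}, the last job $j_k$ completes at the same time on both processors because $\cS$ is optimal, so there is a largest index $i\in\{2,\ldots,k\}$ with $\complTime{\cS}{j_{i-1}}{\Mshared}<\complTime{\cS}{j_{i-1}}{\Mpriv}$, and then $\complTime{\cS}{j_\ell}{\Mshared}=\complTime{\cS}{j_\ell}{\Mpriv}$ for all $\ell\in\{i,\ldots,k\}$. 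This is exactly the configuration required for both the pulling and the pushing operation. Applying Lemma~\ref{lem:pulling} with a small $\varepsilon'>0$ gives a feasible schedule of total weighted overlap $\tct{\cS}-\varepsilon' w_{j_{i-1}}+\varepsilon'\sum_{\ell=i}^{k}w_{j_\ell}/2^{\ell-i+1}$, so optimality of $\cS$ forces $\sum_{\ell=i}^{k}w_{j_\ell}/2^{\ell-i+1}\leq w_{j_{i-1}}$; symmetrically, Lemma~\ref{lem:pushing} with a small $\varepsilon>0$ forces the reverse inequality. Hence $w_{j_{i-1}}=\sum_{\ell=i}^{k}w_{j_\ell}/2^{\ell-i+1}$, and by Lemma~\ref{lem:pushing} pushing $j_i$ by \emph{any} admissible $\varepsilon$ leaves the total weighted overlap unchanged, hence produces another optimal schedule.

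Next I would push $j_i$ by a sufficiently small $\varepsilon>0$ — small enough that \eqref{eq:epsp1} and \eqref{eq:epsp2} hold strictly, so that no job is removed from $\Mshared_r$ — and call the result $\cS'$. In $\cS'$ the only $\Phi$-summand that changes is that of $j_{i-1}$: its completion on $\Mshared_r$ rises by $\varepsilon$ and its completion on $\Mpriv_{j_{i-1}}$ drops by $\varepsilon$, so that summand decreases by $2\varepsilon$; each $j_\ell$ with $\ell\geq i$ has both of its completion times shifted by the same amount $\varepsilon/2^{\ell-i+1}$ and so keeps a zero summand, and all other jobs are untouched. Thus $\cS'$ is feasible, normal, non-preemptive, optimal, and $\Phi(\cS')=\Phi(\cS)-2\varepsilon<\Phi(\cS)$. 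Provided $\cS'$ is also ordered, this contradicts the choice of $\cS$ and finishes the proof.

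The main obstacle is precisely that last point — keeping $\cS'$ \emph{ordered}. Pushing moves $j_{i-1}$'s private completion to the left while its completion on $\Mshared_r$ moves to the right, so if some job $j_t$ with $t\leq i-2$ had $\complTime{\cS}{j_t}{\Mpriv}=\complTime{\cS}{j_{i-1}}{\Mpriv}$, the push would create an order inversion on $\Mshared_r$. I would eliminate this possibility with a preliminary observation: in an optimal ordered schedule no two jobs assigned to the same shared processor complete at the same time on their private processors. Indeed, if two jobs $j_t,j_{t+1}$ consecutive on $\Mshared_r$ had equal private completion times, then orderedness would give $\complTime{\cS}{j_t}{\Mshared}<\complTime{\cS}{j_{t+1}}{\Mshared}$, so $j_t$ does not complete at the same time on both processors; lengthening $j_t$'s execution interval on $\Mshared_r$ by a small $\delta>0$ while shifting the execution intervals on $\Mshared_r$ and the private completion times of $j_{t+1},\ldots,j_k$ rightward by $\delta$ is feasible and raises the total weighted overlap by $w_{j_t}\delta$, contradicting optimality. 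With this fact in hand, choosing $\varepsilon$ also smaller than the smallest positive difference of private completion times among jobs on $\Mshared_r$ guarantees that no inversion is created, and a routine check of the finitely many pairs of jobs on $\Mshared_r$ shows that $\cS'$ remains ordered. The remaining verifications (feasibility and normality of $\cS'$, the exact bookkeeping of $\tct{}$ via Lemmas~\ref{lem:pulling} and~\ref{lem:pushing}) are mechanical.
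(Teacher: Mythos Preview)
Your overall strategy is sound and genuinely different from the paper's. The paper also uses pulling to obtain $w_{j_{i-1}}\ge\sum_{\ell\ge i}w_{j_\ell}/2^{\ell-i+1}$, but it then pushes by the \emph{maximal} admissible $\varepsilon'$, so that each push either synchronizes $j_{i-1}$ or eliminates some $j_\ell$ from the shared processor; termination comes from a counting argument (either the number of jobs on shared processors drops, or the number of synchronized jobs rises, hence at most $2|\jobs|$ pushes). Your extremal argument via $\Phi$ avoids this case split, and your observation that both the pulling and the pushing inequality hold (hence equality) is sharper than what the paper actually uses. The price is a tacit compactness step: you must know that the infimum of $\Phi$ over optimal ordered schedules is attained. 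It is --- for each fixed assignment and ordering of jobs to shared processors the feasible region is a polytope and both $\tct{\cdot}$ and $\Phi$ are linear in the interval lengths --- but you should say so.

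There is, however, a genuine error in your auxiliary argument that no two jobs on the same shared processor have equal private completion times. The transformation you describe is not well-defined: you cannot ``shift the private completion times of $j_{t+1},\ldots,j_k$ rightward by $\delta$'' while merely translating their shared intervals, because the private completion of $j_\ell$ equals $p_{j_\ell}$ minus the length of its shared interval, and that length has not changed. If instead you keep the private completions fixed and only shift the shared intervals, then any $j_\ell$ with $\ell>t$ that was already synchronized loses $\delta$ of overlap, so the net change need not be positive. Fortunately the claim you want follows in one line from orderedness alone, with no appeal to optimality: if $\complTime{\cS}{j_t}{\Mpriv}=\complTime{\cS}{j_{t+1}}{\Mpriv}$, then applying the biconditional in the definition of ``ordered'' in both directions forces $\complTime{\cS}{j_t}{\Mshared}=\complTime{\cS}{j_{t+1}}{\Mshared}$, which is impossible for two jobs occupying nonempty disjoint intervals on $\Mshared_r$. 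With this replacement (and the compactness remark above), your proof goes through.
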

\begin{proof}
Let $\cS$ be an optimal schedule. By Observation~\ref{lem:ordered}, we may assume without loss of generality that $\cS$ is ordered.
Suppose that $\cS$ is not synchronized. We will convert $\cS$ into a synchronized schedule by iteratively performing transformations described below.

Let $\Mshared_r$ be a shared processor such that there exists a job assigned to $\Mshared_r$ that completes earlier on $\Mshared_r$ than on its private processor.
Let $X_r=\{j_1,\ldots,j_k\}$, $k>1$, be jobs executed on  $\Mshared_r$ in $\cS$ 
and ordered according to increasing order of their completion times on $\Mshared_r$.
Let $i\in\{2,\ldots,k\}$ be the minimum index such that $\complTime{\cS}{j_{\ell}}{\Mshared}=\complTime{\cS}{j_{\ell}}{\Mpriv}$ for each $\ell\in\{i,\ldots,k\}$.
Since $\cS$ is not synchronized and $\complTime{\cS}{j_k}{\Mshared}=\complTime{\cS}{j_k}{\Mpriv}$ in an optimal schedule, the index $i$ is well defined and $\complTime{\cS}{j_{i-1}}{\Mshared}<\complTime{\cS}{j_{i-1}}{\Mpriv}$ by the minimality of $i$.
We first argue that
\begin{equation} \label{eq:synchronized:ineq}
 w_{j_{i-1}}\geq\sum_{\ell=i}^k \frac{w_{j_{\ell}}}{2^{\ell-i+1}}.
\end{equation}
Consider pulling of $j_i$ by $\varepsilon=\complTime{\cS}{j_{i-1}}{\Mshared}-\startTime{\cS}{j_{i-1}}{\Mshared}$ in $\cS$ that produces a schedule $\cS'$.
Then, by Lemma~\ref{lem:pulling} and the optimality of $\cS$,
\[\tct{\cS'}=\tct{\cS}-\varepsilon w_{j_{i-1}}+\varepsilon\sum_{\ell=i}^k \frac{w_{j_{\ell}}}{2^{\ell-i+1}}
            =\tct{\cS}-\varepsilon\left( w_{j_{i-1}}-\sum_{\ell=i}^k \frac{w_{j_{\ell}}}{2^{\ell-i+1}}\right)\leq\tct{\cS},\]
which proves~\eqref{eq:synchronized:ineq}.

Following~\eqref{eq:epsp1} and~\eqref{eq:epsp2}, define
\[\varepsilon'=\min\left\{\left(\complTime{\cS}{j_{i-1}}{\Mpriv}-\complTime{\cS}{j_{i-1}}{\Mshared}\right)/2,
               \min\left\{2^{\ell-i+1}\left(\complTime{\cS}{j_{\ell}}{\Mshared}-\startTime{\cS}{j_{\ell}}{\Mshared}\right)\st \ell=i,\ldots,k\right\}\right\}.\]
Since $\cS$ is normal, by the choice of $i$ we have $\varepsilon'>0$.
Obtain a schedule $\cS'$ by performing pushing of $j_i$ by $\varepsilon'$ in $\cS$.
Note that if $\varepsilon'=(\complTime{\cS}{j_{i-1}}{\Mpriv}-\complTime{\cS}{j_{i-1}}{\Mshared})/2$, then $j_{i-1}$ completes at the same time on the shared and private processors in $\cS'$.
If, on the other hand, $\varepsilon'=2^{\ell-i+1}(\complTime{\cS}{j_{\ell}}{\Mshared}-\startTime{\cS}{j_{\ell}}{\Mshared})$ for some $\ell\in\{i,\ldots,k\}$, then $j_{\ell}$ is eliminated from the shared processor, i.e., $j_{\ell}$ executes only on its private processor in $\cS'$.
By Lemma~\ref{lem:pushing} and \eqref{eq:synchronized:ineq},
\[\tct{\cS'}=\tct{\cS}+\varepsilon' w_{j_{i-1}}- \varepsilon'\sum_{\ell=i}^k \frac{w_{j_{\ell}}}{2^{\ell-i+1}}
            =\tct{\cS}+\varepsilon'\left( w_{j_{i-1}}-\sum_{\ell=i}^k \frac{w_{j_{\ell}}}{2^{\ell-i+1}}\right)\geq\tct{\cS}.\]
Moreover, $\cS'$ satisfies the following two conditions:
\begin{enumerate} [label={\normalfont{(\alph*)}}]
 \item\label{eq:item:syn1} for each $\ell\in\{i,\ldots,k\}$, if $j_{\ell}$ is assigned to $\Mshared_r$ in $\cS'$, then $\complTime{\cS'}{j_{\ell}}{\Mshared}=\complTime{\cS'}{j_{\ell}}{\Mpriv}$,
 \item\label{eq:item:syn2} if all jobs $j_{\ell},\ldots,j_k$ are assigned to $\Mshared_r$ in $\cS'$, then $\complTime{\cS'}{j_{i-1}}{\Mshared}=\complTime{\cS'}{j_{i-1}}{\Mpriv}$.
\end{enumerate}
Set $\cS:=\cS'$ and repeat the transformation.

Condition~\ref{eq:item:syn1} ensures that if a job completes at the same time on private and shared processors in $\cS$, then this property is either preserved by the transformation or the job is executed only on its private processor in the new schedule $\cS'$.
Note that in the latter case, such a job will remain on its private processor during future transformations, never `returning' back to any shared processor; this follows directly from the pushing transformation.
Thus, in each transformation either  the number of jobs executing on shared processors decreases or, due to \ref{eq:item:syn2}, if this number does not decrease, then the number of jobs that complete at the same time on private and shared processors increases by one.
Hence it follows that after at most $2|\jobs|$ transformations we obtain an optimal schedule that is synchronized.
\end{proof}

In a synchronized schedule the order $j_1,\ldots,j_k$ of job executions on a processor $\Mshared_r$ uniquely determines the schedule on $\Mshared_r$ $r\in\{1,\ldots,m\}$.

\section{V-shapeness and duality of some instances} \label{sec:Vshape}
Our main goal in the section is to introduce special classes of instances that will provide a key to the complexity analysis of the problem in the next section. The following observation was made by \cite{VairaktarakisAydinliyim07} for a single shared processor non-preemptive problem and extended to preemptive one in \cite{HK15}. It will be used often in the remainder of the paper.
\begin{observation}
\label{obs:T_i}
If jobs $j_1,\ldots,j_k$ with processing times $p_1,\ldots,p_k$, respectively, are executed on a shared processor in some synchronized schedule in that order,
then the job $j_i$ executes in time interval $(T_i,T_{i+1})$ of length $\bar{t_i}$, where $T_1=0$ and
\[T_i=\sum_{\ell=1}^{i-1} \frac{p_{\ell}}{2^{i-\ell}}\textup{ for each }i\in\{2,\ldots,k+1\}, \quad \bar{t_i}=\frac{p_i}{2}-\sum_{\ell=1}^{i-1}\frac{p_{\ell}}{2^{i-\ell+1}}=\frac{p_i}{2}-\frac{T_i}{2}\textup{ for each }i\in\{1,\ldots,k\}.\]
\qed
\end{observation}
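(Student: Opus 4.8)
The plan is to turn the two defining features of a synchronized schedule into a first‑order linear recurrence for the $T_i$'s and then solve it by induction. Write $T_i:=\startTime{\cS}{j_i}{\Mshared}$. Since the schedule is non‑preemptive, $j_i$ occupies a single interval on the shared processor, and since there is no idle time on that processor up to the last completion (Observation~\ref{obs:idle-time}), this interval is exactly $(T_i,T_{i+1})$ with $T_1=0$; hence $\bar{t}_i=T_{i+1}-T_i$ equals the amount of $j_i$ processed on the shared processor, and $\complTime{\cS}{j_i}{\Mshared}=T_{i+1}$.

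Next I would combine synchronization with feasibility condition~\ref{it:f3}. By~\ref{it:f3} the job $j_i$ runs on its private processor exactly in the interval $(0,\complTime{\cS}{j_i}{\Mpriv})$, so it receives $\complTime{\cS}{j_i}{\Mpriv}$ units of processing there; and since $\cS$ is synchronized, $\complTime{\cS}{j_i}{\Mpriv}=\complTime{\cS}{j_i}{\Mshared}=T_{i+1}$. Work conservation for $j_i$ therefore reads
\[p_i=\bar{t}_i+T_{i+1}=(T_{i+1}-T_i)+T_{i+1},\qquad\text{i.e.}\qquad T_{i+1}=\frac{p_i+T_i}{2},\]
for every $i\in\{1,\ldots,k\}$, which yields $T_2,\ldots,T_{k+1}$.

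Finally I would solve the recurrence. Induction on $i$ with base $T_1=0$ (the empty sum) gives $T_i=\sum_{\ell=1}^{i-1}p_\ell/2^{i-\ell}$: assuming this for $i$, $T_{i+1}=\tfrac{1}{2}p_i+\tfrac{1}{2}\sum_{\ell=1}^{i-1}p_\ell/2^{i-\ell}=\sum_{\ell=1}^{i}p_\ell/2^{(i+1)-\ell}$, which is the asserted formula for index $i+1$. The expression for $\bar{t}_i$ then follows at once: $\bar{t}_i=T_{i+1}-T_i=\frac{p_i+T_i}{2}-T_i=\frac{p_i}{2}-\frac{T_i}{2}$, and substituting the closed form for $T_i$ and halving term by term yields $\bar{t}_i=\frac{p_i}{2}-\sum_{\ell=1}^{i-1}p_\ell/2^{i-\ell+1}$.

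There is no real obstacle here beyond bookkeeping; the one point deserving care is the identification of the shared interval of $j_i$ with the contiguous interval $(T_i,T_{i+1})$. This rests on non‑preemptiveness (a single shared interval) together with the absence of idle time on the shared processor before the last completion, i.e. Observation~\ref{obs:idle-time}, a property of the optimal normal schedules under consideration that is moreover preserved by the pulling and pushing transformations used to produce synchronized schedules.
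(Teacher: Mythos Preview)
Your argument is correct and is exactly the natural derivation: set up the recurrence $T_{i+1}=(p_i+T_i)/2$ from synchronization plus work conservation, then unroll it. The paper itself does not prove this observation at all---it is stated with an immediate \qed and attributed to prior work~\cite{VairaktarakisAydinliyim07,HK15}---so there is nothing to compare against beyond confirming that your derivation matches the standard one.

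One remark on the point you flag yourself: your appeal to Observation~\ref{obs:idle-time} is, strictly speaking, about \emph{optimal} normal schedules, whereas Observation~\ref{obs:T_i} is phrased for an arbitrary synchronized schedule. In the paper's usage, ``synchronized'' schedules are tacitly taken to be idle-free on each shared processor (the sentence right after Lemma~\ref{lem:synchronized} says the job order on a shared processor ``uniquely determines'' the schedule there), so the contiguity you need is really being assumed as part of the notion rather than derived from optimality. Your last paragraph already gestures at this, but it would be cleaner to simply say that the observation is computing the unique idle-free synchronized schedule induced by the order, rather than routing through Observation~\ref{obs:idle-time}.
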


\subsection{Consecutive job exchange for processing-time-inclusive instances} \label{lem:V:general}

 We begin with a lemma which allows us to calculate the difference in total weighted overlaps of two schedules, one of which is obtained from the other by exchanging two consecutive jobs on a shared processor.
This exchange is complicated by the fact that for the job that gets later in the permutation after the exchange  it may no longer be possible to execute on the shared processor  since the job may prove too short for that. Generally, the test whether this actually happens depends not only on the processing times of the jobs that precede the later job but also on their \emph{order}. Instead, we
would like to be able to select an arbitrary subset $A$ of $\jobs$, take any permutation of the jobs in $A$, and always guarantee that there exists  a synchronized schedule that has exactly the jobs in $A$  that appear in the order determined by the permutation on the shared processor. Clearly, this freedom cannot be ensured for arbitrary instances. Therefore we introduce
an easy to test sufficient condition that would always guarantee the validity of the job exchange.

Consider a set of jobs $A=\{j_1,\ldots,j_k\}$, where we assume $p_{1}\leq p_{2}\leq\cdots\leq p_{k}$; here we take $p_i$ to be the processing time of the job $j_i$, $i\in\{1,\ldots,k\}$.
We say that the set of jobs $A$ is \emph{processing-time-inclusive} if
\[x=T_{|A|}=\sum_{\ell=1}^{|A|-1} \frac{p_{\ell+1}}{2^{|A|-\ell}} < p_1.\]
Note that $x$ is the makespan of a schedule on a shared processor for jobs in $A\setminus\{j_1\}$ when the jobs are scheduled in ascending order of their processing times, i.e., the order $j_2,\ldots,j_k$.
By~\cite{VairaktarakisAydinliyim07}, the ascending order of processing times of jobs in $A\setminus\{j_1\}$ provides the longest schedule on the shared processor.
Thus, in other words, for processing-time-inclusive jobs $A$, the makespan $x$ is shorter than the shortest job in $A$.
The condition can be checked in time $O(|\jobs| \log |\jobs|)$.

Finally, for a permutation $j_1,\ldots,j_k$ of jobs with weights $w_1,\ldots,w_k$, respectively,
define 
\begin{equation} \label{W}
W_i=\sum_{\ell=i+2}^{k} \frac{w_{\ell}}{2^{\ell-i-1}}
\end{equation}
for each $i\in\{-1,0,1,\ldots,k-2\}$. 
\begin{lemma} \label{lem:switching}
Let $\cS$ be a synchronized schedule that executes processing-time-inclusive jobs $j_1,\ldots,j_k$ with processing times $p_1,\ldots,p_k$ and weights $w_1,\ldots,w_k$, respectively,  in the order $j_1,\ldots,j_k$  on a shared processor $\cM$, and let $i\in\{1,\ldots,k-1\}$.
Let $\cS'$ be a synchronized schedule obtained by exchanging jobs $j_i$ and $j_{i+1}$ in $\cS$ so that all jobs are executed in the order $j_1,\ldots,j_{i-1},j_{i+1},j_i,j_{i+2},\ldots,j_k$ on $\cM$.
Then,
\[\tct{\cS}-\tct{\cS'}= \frac{(w_{i+1}-w_i)T_i}{4} + \frac{(p_i-p_{i+1})W_{i+1}}{4} + \frac{w_ip_{i+1}}{4} - \frac{w_{i+1}p_i}{4}.\]
\end{lemma}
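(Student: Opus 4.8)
The plan is to evaluate $\tct{\cS}$ and $\tct{\cS'}$ directly from Observation~\ref{obs:T_i} and subtract. Recall that in a synchronized schedule the overlap of a job equals the length of its interval on the shared processor, so if a list of jobs with processing times $q_1,\dots,q_k$ occupies $\cM$ in that order, the job in position $\ell$ has overlap $\bar t_\ell=\tfrac12(q_\ell-\Theta_\ell)$, where $\Theta_\ell=\sum_{m=1}^{\ell-1}q_m/2^{\ell-m}$ is the (order-dependent) prefix sum; this $\Theta_\ell$ is exactly the quantity $T_\ell$ of Observation~\ref{obs:T_i} for whichever order is in force. Processing-time-inclusiveness is precisely what makes $\cS'$ a legitimate synchronized schedule carrying all $k$ jobs on $\cM$: every prefix sum is bounded by the makespan of at most $k-1$ of the jobs, which by~\cite{VairaktarakisAydinliyim07} and monotonicity is at most $x<p_1\le q_\ell$, so each job keeps a strictly positive shared-processor interval whatever the order. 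Since the exchange leaves every other shared and private processor untouched, $\tct{\cS}-\tct{\cS'}$ equals the difference of the weighted overlaps of $j_1,\dots,j_k$ alone, and I would split this into the three blocks of positions $1,\dots,i-1$, then $i,i+1$, then $i+2,\dots,k$, treated separately.

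Block one is trivial: positions $1,\dots,i-1$ carry $j_1,\dots,j_{i-1}$ in the same order in both schedules, so their prefix sums, overlaps and weighted overlaps coincide and cancel. For the middle block I substitute the prefix sums directly. In $\cS$, job $j_i$ sits in position $i$ with prefix sum $T_i$ (this is the $T_i$ of the statement, as it involves only $p_1,\dots,p_{i-1}$) and $j_{i+1}$ sits in position $i+1$ with prefix sum $T_i/2+p_i/2$; in $\cS'$, job $j_{i+1}$ sits in position $i$ with the same prefix sum $T_i$, while $j_i$ sits in position $i+1$ with prefix sum $T_i/2+p_{i+1}/2$. Writing each overlap as $\tfrac12(q_\ell-\Theta_\ell)$, weighting, and subtracting, the terms collect into $\tfrac14(w_{i+1}-w_i)T_i+\tfrac14 w_i p_{i+1}-\tfrac14 w_{i+1}p_i$ — exactly three of the four summands of the claimed identity.

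The last block carries the real content. Positions $i+2,\dots,k$ hold the very same jobs $j_{i+2},\dots,j_k$ in the very same order in both schedules, so only their prefix sums change, and only because $p_i$ and $p_{i+1}$ swap the positions $i$ and $i+1$ among the summands of $\Theta_\ell$. A one-line computation gives, for every $\ell\ge i+2$,
\[
\Theta_\ell^{\cS}-\Theta_\ell^{\cS'}=\Bigl(\frac{p_i}{2^{\ell-i}}+\frac{p_{i+1}}{2^{\ell-i-1}}\Bigr)-\Bigl(\frac{p_{i+1}}{2^{\ell-i}}+\frac{p_i}{2^{\ell-i-1}}\Bigr)=\frac{p_{i+1}-p_i}{2^{\ell-i}},
\]
hence $\bar t_\ell^{\cS}-\bar t_\ell^{\cS'}=(p_i-p_{i+1})/2^{\ell-i+1}$, and the weighted contribution of this block to $\tct{\cS}-\tct{\cS'}$ is $(p_i-p_{i+1})\sum_{\ell=i+2}^{k}w_\ell/2^{\ell-i+1}$. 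The concluding move is to recognise this geometric sum, through the definition~\eqref{W}, as the term $\tfrac14(p_i-p_{i+1})W_{i+1}$ of the statement; summing the three block contributions then yields the lemma.

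The step I expect to be the main obstacle is the bookkeeping in that last block — being scrupulous about which powers of $2$ multiply $p_i$ versus $p_{i+1}$ as they exchange positions, and then matching the resulting geometric series to~\eqref{W} with the correct index shift. A minor but necessary point is to actually invoke processing-time-inclusiveness, so that $\cS'$ is guaranteed to exist with all $k$ jobs still present on $\cM$ after the exchange; without it the identity would be meaningless.
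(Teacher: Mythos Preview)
Your proposal is correct and follows essentially the same route as the paper's own proof: both compute the weighted overlap block by block (positions $1,\dots,i-1$ cancel; positions $i,i+1$ yield the three terms $\tfrac14(w_{i+1}-w_i)T_i+\tfrac14 w_ip_{i+1}-\tfrac14 w_{i+1}p_i$; positions $i+2,\dots,k$ contribute the geometric sum identified with $\tfrac14(p_i-p_{i+1})W_{i+1}$ via~\eqref{W}), with processing-time-inclusiveness invoked only to ensure $\cS'$ is feasible. Your block decomposition and the direct use of $\bar t_\ell=\tfrac12(q_\ell-\Theta_\ell)$ is arguably a shade cleaner than the paper's computation via $T_{\ell+1}-T_\ell$ and the auxiliary quantities $f(w_i),f(w_{i+1})$, but the two arguments are line-for-line equivalent.
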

\begin{proof}
Note that the construction of $\cS'$ is valid since the jobs are processing-time-inclusive.
We calculate the values of $\cS$ and $\cS'$ on $\cM$ only since the schedules on other shared processors remain unchanged and thus contribute the same amount $\sigma$ to the total weighted overlap of both schedules.
By Observation~\ref{obs:T_i} we have
\begin{equation} \label{eq:S:sum}
\tct{\cS} = \sigma + \sum_{\ell=1}^k \bar{t_{\ell}}w_{\ell},
\end{equation}
for $\cS$ and  
\begin{equation} \label{eq:S':sum}
\tct{\cS'} = \sigma + \sum_{\ell=1}^{i-1} \bar{t_{\ell}}w_{\ell}+ \frac{(p_{i+1}-T_i)w_{i+1}}{2} + \frac{(p_{i}-p_{i+1}/2-T_i/2)w_{i}}{2} + \sum_{\ell=i+2}^{k} (T_{\ell+1}'-T_{\ell}')w_{\ell},
\end{equation}
for $\cS'$, where
\[T_{\ell}'=T_i+\frac{p_{i+1}}{2^{\ell-i}}+\frac{p_i}{2^{\ell-i-1}}+\sum_{\ell'=i+2}^{\ell-1}\frac{p_{\ell'}}{2^{\ell-\ell'}}\]
for each $\ell\in\{i+2,\ldots,k+1\}$.
We obtain 
\[T_{\ell}-T_{\ell}'=\frac{p_i}{2^{\ell-i}} + \frac{p_{i+1}}{2^{\ell-i-1}} - \frac{p_{i+1}}{2^{\ell-i}} - \frac{p_i}{2^{\ell-i-1}} = \frac{p_{i+1}-p_i}{2^{\ell-i}},\]
for each $\ell\in\{i+2,\ldots,k+1\}$.
Thus,
\begin{equation} \label{eq:W_i}
\left(\sum_{\ell=i+2}^k (T_{\ell+1}-T_{\ell})w_{\ell}-\sum_{\ell=i+2}^{k} (T_{\ell+1}'-T_{\ell}')w_{\ell}\right) = (p_i-p_{i+1})\sum_{\ell=i+2}^{k}\frac{w_{\ell}}{2^{\ell+1-i}}.
\end{equation}
For notational brevity set 
\begin{equation} \label{eq:fw_i}
f(w_i) := w_i\left( T_{i+1}-T_i+\frac{T_i}{4}-\frac{p_i}{2}+\frac{p_{i+1}}{4}\right) = w_i\left(\frac{T_i}{2}+\frac{p_i}{2}-\frac{3T_i}{4}-\frac{p_i}{2}+\frac{p_{i+1}}{4} \right) = w_i\left(\frac{p_{i+1}}{4}-\frac{T_i}{4}\right),
\end{equation}
and 
\begin{equation} \label{eq:fw_i+1}
f(w_{i+1}) := w_{i+1}\left( T_{i+2}-T_{i+1}-\frac{p_{i+1}}{2}+\frac{T_i}{2} \right) = w_{i+1}\left( \frac{T_i}{4}+\frac{p_{i+1}}{2}+\frac{p_i}{4} -\frac{p_i}{2}-\frac{p_{i+1}}{2} \right) = w_{i+1}\left(\frac{T_i}{4}-\frac{p_i}{4}\right).
\end{equation}
By \eqref{eq:S:sum}, \eqref{eq:S':sum}, \eqref{eq:W_i}, \eqref{eq:fw_i} and \eqref{eq:fw_i+1} we obtain
\begin{eqnarray*} 
\tct{\cS}-\tct{\cS'} & = & \left((T_{i+1}-T_{i})w_{i}-\frac{(p_{i+1}-T_i)w_{i+1}}{2}\right) + \left((T_{i+2}-T_{i+1})w_{i+1}-\frac{(p_{i}-p_{i+1}/2-T_i/2)w_{i}}{2}\right) \\
                     & + & \left(\sum_{\ell=i+2}^k (T_{\ell+1}-T_{\ell})w_{\ell}-\sum_{\ell=i+2}^{k} (T_{\ell+1}'-T_{\ell}')w_{\ell}\right) \\
                     & = & f(w_i)+f(w_{i+1}) + (p_i-p_{i+1})\sum_{\ell=i+2}^{k} \frac{w_{\ell}}{2^{\ell+1-i}} \\
                     & = & \frac{(w_{i+1}-w_i)T_i}{4} + \frac{(p_i-p_{i+1})W_{i+1}}{4} + \frac{w_ip_{i+1}}{4} - \frac{w_{i+1}p_i}{4}
\end{eqnarray*}
as required.
\end{proof}

The next observation that follows directly from Lemma~\ref{lem:pulling} (see also  (\ref{eq:synchronized:ineq})).
\begin{observation} \label{obs:wW}
Let $\cS$ be an optimal synchronized schedule that executes jobs $j_1,\ldots,j_k$ with 
weights $w_1,\ldots,w_k$, respectively,
on a shared processor in the order $j_1,\ldots,j_k$.
Then, $w_i\geq W_{i-1}$ for each $i\in\{1,\ldots,k-1\}$.
\qed
\end{observation}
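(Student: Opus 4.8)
The plan is to read Observation~\ref{obs:wW} as nothing more than inequality~\eqref{eq:synchronized:ineq} from the proof of Lemma~\ref{lem:synchronized}, applied at each position and with the index shifted by one. Fix $i\in\{1,\ldots,k-1\}$ and work on the shared processor carrying $j_1,\ldots,j_k$ in that order. Since $\cS$ is synchronized, every job on that processor satisfies $\complTime{\cS}{j_{\ell}}{\Mshared}=\complTime{\cS}{j_{\ell}}{\Mpriv}$, so in particular $\complTime{\cS}{j_{\ell}}{\Mshared}=\complTime{\cS}{j_{\ell}}{\Mpriv}$ holds for all $\ell\in\{i+1,\ldots,k\}$; this is exactly the hypothesis needed to define the pulling of $j_{i+1}$ (the role of the index ``$i$'' in the definition of pulling being played here by $i+1$). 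Moreover, by Observation~\ref{obs:idle-time} there is no idle time on the processor up to $\complTime{\cS}{j_k}{\Mshared}$, so $j_i$ occupies an interval of positive length there, i.e.\ $\startTime{\cS}{j_{i+1}}{\Mshared}-\startTime{\cS}{j_i}{\Mshared}=\complTime{\cS}{j_i}{\Mshared}-\startTime{\cS}{j_i}{\Mshared}>0$.

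Next I would apply Lemma~\ref{lem:pulling} with $\varepsilon$ taken equal to the full length $\complTime{\cS}{j_i}{\Mshared}-\startTime{\cS}{j_i}{\Mshared}$ of the execution interval of $j_i$ on the shared processor (which is a legal choice of $\varepsilon$ by the displayed bound preceding the definition of pulling). This produces a feasible schedule $\cS'$ with
\[
\tct{\cS'}=\tct{\cS}-\varepsilon\, w_i+\varepsilon\sum_{\ell=i+1}^{k}\frac{w_{\ell}}{2^{\ell-i}}
=\tct{\cS}-\varepsilon\Bigl(w_i-\sum_{\ell=i+1}^{k}\frac{w_{\ell}}{2^{\ell-i}}\Bigr).
\]
Since $\cS$ is optimal, $\tct{\cS'}\leq\tct{\cS}$, and because $\varepsilon>0$ this forces $w_i\geq\sum_{\ell=i+1}^{k}\frac{w_{\ell}}{2^{\ell-i}}$. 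Finally, unwinding the definition~\eqref{W} gives $W_{i-1}=\sum_{\ell=i+1}^{k}\frac{w_{\ell}}{2^{\ell-i}}$ (valid since $i-1\in\{0,\ldots,k-2\}$), so the displayed inequality is precisely $w_i\geq W_{i-1}$, as claimed. Ranging over all $i\in\{1,\ldots,k-1\}$ completes the argument.

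There is essentially no obstacle here beyond bookkeeping: the only points requiring care are checking that the pulling transformation is applicable (which is immediate from synchronicity, since all completion times coincide) and that the summation index in Lemma~\ref{lem:pulling} matches the index in the definition of $W_{i-1}$ after the shift $i\mapsto i+1$. No new estimate is needed; the whole statement is a one-line corollary of Lemma~\ref{lem:pulling} together with optimality of $\cS$.
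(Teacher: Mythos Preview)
Your proposal is correct and follows exactly the route the paper indicates: the observation is just inequality~\eqref{eq:synchronized:ineq} obtained from Lemma~\ref{lem:pulling} together with optimality of $\cS$, and you have spelled out the index shift $i\mapsto i+1$ and the identification $W_{i-1}=\sum_{\ell=i+1}^{k}w_{\ell}/2^{\ell-i}$ correctly.
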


We finish this section with the following feature of optimal synchronized schedules.
\begin{observation} \label{obs:TW}
Let $\cS$ be an optimal synchronized schedule that executes 
jobs $j_1,\ldots,j_k$ with processing times $p_1,\ldots,p_k$ and weights $w_1,\ldots,w_k$, respectively, on a shared processor in the order $j_1,\ldots,j_k$.
Then,
\[0=T_1< T_2 < \cdots < T_{k+1},\]
and
\[W_{-1}\geq W_0\geq W_1\geq\cdots\geq W_{k-2}.\]
\end{observation}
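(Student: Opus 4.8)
The plan is to establish the two chains of inequalities separately, each from a monotonicity property of optimal synchronized schedules already recorded in the excerpt. For the $T$-chain, recall from Observation~\ref{obs:T_i} that $T_{i+1}-T_i = \bar t_i + (\text{something})$; more directly, $T_{i+1} = T_i/2 + p_i/2$, so $T_{i+1} - T_i = (p_i - T_i)/2 = \bar t_i$, the length of the interval in which $j_i$ runs on the shared processor. Hence $T_1 < T_2 < \cdots < T_{k+1}$ is equivalent to $\bar t_i > 0$ for every $i \in \{1,\ldots,k\}$. In an optimal schedule every job that appears on the shared processor must actually occupy a nonempty interval there — otherwise it could be deleted from the shared processor's job list without changing anything, and more to the point a job with $\bar t_i = 0$ contributes nothing, so we may as well assume each listed $j_i$ has $\bar t_i > 0$; a cleaner justification is that if some $\bar t_i = 0$ then by Observation~\ref{obs:idle-time} (no idle time) and normality the configuration is degenerate and $j_i$ is not genuinely on the processor. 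I would phrase the $T$-part as: since $\cS$ is a synchronized schedule in which all of $j_1,\ldots,j_k$ genuinely execute on the shared processor, each $\bar t_i = T_{i+1}-T_i > 0$, giving the strict chain.

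For the $W$-chain, the key input is Observation~\ref{obs:wW}, which states $w_i \ge W_{i-1}$ for each $i \in \{1,\ldots,k-1\}$. From the definition~\eqref{W} we have the one-step recursion
\[
W_{i-1} = \frac{w_{i+1}}{2} + \frac{W_i}{2},
\]
valid for $i \in \{0,1,\ldots,k-2\}$ (one checks this by reindexing the sum). Therefore
\[
W_{i-1} - W_i = \frac{w_{i+1}}{2} - \frac{W_i}{2} = \frac{w_{i+1} - W_i}{2} \ge 0,
\]
where the last inequality is Observation~\ref{obs:wW} applied with index $i+1$ (which is legitimate precisely when $i+1 \le k-1$, i.e. $i \le k-2$, matching the range of the claimed chain $W_{-1} \ge W_0 \ge \cdots \ge W_{k-2}$). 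So the monotonicity of the $W_i$'s is an immediate consequence of the per-index inequality $w_i \ge W_{i-1}$ together with the recursion.

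The only genuinely delicate point is the $T$-chain: one must be careful about what "the schedule executes jobs $j_1,\ldots,j_k$ on a shared processor" is taken to mean. If a listed job has a zero-length interval on the shared processor it is not really "executed" there, and $\bar t_i > 0$ is what we need. I would resolve this with one sentence pointing out that in an optimal (hence normal, non-preemptive) synchronized schedule a job contributing nothing to the shared processor can be removed from its job list, so without loss of generality every $j_i$ in the list has $\bar t_i > 0$; alternatively invoke Observation~\ref{obs:idle-time} to rule out the degenerate case directly. The $W$-chain, by contrast, is a purely algebraic corollary of Observation~\ref{obs:wW} and needs only the recursion above, so I expect no obstacle there beyond writing out the reindexing of~\eqref{W}.
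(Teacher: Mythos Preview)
Your proposal is correct and follows essentially the same route as the paper. For the $T$-chain the paper writes $T_{i+1}=p_i/2+T_i/2$ and notes $p_i>T_i$ because $j_i$ is actually executed on the shared processor, giving $T_{i+1}>T_i$; for the $W$-chain it uses the same one-step recursion (with the index shifted by one) together with Observation~\ref{obs:wW} to conclude $W_i\geq W_{i+1}$ --- exactly your argument.
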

\begin{proof}
By Observation \ref {obs:T_i} we have $T_{i+1}=\frac{p_i}{2}+\frac{T_i}{2}$ for each $i\in\{1,\ldots,k\}$. Since $j_i$ is executed on the shared processor, we have $p_i>T_i$.
Thus, $T_{i+1}>\frac{T_i}{2}+\frac{T_i}{2}=T_i$ for each $i\in\{1,\ldots,k\}$.

By~\eqref{W} we have $W_i=\frac{w_{i+2}}{2}+\frac{W_{i+1}}{2}$ for each $i\in\{-1,0,1,\ldots,k-3\}$.
By Observation \ref{obs:wW}, $w_{i+2}\geq W_{i+1}$ for each $i\in\{-1,0,1,\ldots,k-3\}$.
Thus, $W_i\geq \frac{W_{i+1}}{2}+\frac{W_{i+1}}{2}=W_{i+1}$ for each $i\in\{-1,0,1,\ldots,k-3\}$.
\end{proof}

\subsubsection{Instances with $p_i=w_i$} \label{lem:V:restricted}

Let $\cS$ be a synchronized schedule that executes jobs $j_1,\ldots,j_k$ with processing times $p_1,\ldots,p_k$ on a shared processor in the order $j_1,\ldots,j_k$. The schedule $\cS$ is called \emph{V-shaped} if, for each shared processor, there exists an index $\ell\in\{1,\ldots,k\}$ such that $p_1\geq p_2\geq\cdots\geq p_{\ell}$ and $p_{\ell}\leq p_{\ell+1}\leq\cdots\leq p_{k}$.

\begin{lemma} \label{lem:Vshape}
Optimal synchronized schedules for instances with processing-time-inclusive jobs $\jobs$ and with  $p_i=w_i$ for $i\in \jobs$,
are V-shaped.
\end{lemma}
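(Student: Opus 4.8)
The plan is to show that if an optimal synchronized schedule for a processing-time-inclusive instance with $p_i=w_i$ is \emph{not} V-shaped on some shared processor $\cM$, then a consecutive exchange of two jobs strictly increases the total weighted overlap, contradicting optimality. A schedule fails to be V-shaped on $\cM$ exactly when there is an index $i\in\{2,\ldots,k-1\}$ with $p_{i-1}<p_i>p_{i+1}$, i.e. an interior ``peak''. So I would first argue that it suffices to find, for any such peak, a single adjacent transposition that does not decrease $\tct{}$, and then to push this to a strict gain (or to iterate the non-strict version toward a V-shape, arguing the process terminates — this is the routine bookkeeping part).

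The heart of the argument is Lemma~\ref{lem:switching}. Setting $p_i=w_i$ and $p_{i+1}=w_{i+1}$ in the formula there, the terms $\frac{w_ip_{i+1}}{4}-\frac{w_{i+1}p_i}{4}$ cancel, leaving
\[\tct{\cS}-\tct{\cS'}=\frac{(p_{i+1}-p_i)T_i}{4}+\frac{(p_i-p_{i+1})W_{i+1}}{4}=\frac{(p_i-p_{i+1})(W_{i+1}-T_i)}{4}.\]
So if $\cS'$ is obtained from $\cS$ by swapping $j_i$ and $j_{i+1}$, then $\tct{\cS'}\geq\tct{\cS}$ precisely when $(p_i-p_{i+1})(W_{i+1}-T_i)\le 0$, i.e. when $p_i>p_{i+1}$ and $W_{i+1}\ge T_i$, or when $p_i<p_{i+1}$ and $W_{i+1}\le T_i$. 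The job of the proof is thus to locate, at any peak $p_{i-1}<p_i>p_{i+1}$, one of these two favorable sign patterns. I would try the swap of $j_i$ with its \emph{smaller} neighbor. Say $p_{i+1}\le p_{i-1}<p_i$, so I want to swap $j_i$ and $j_{i+1}$; here $p_i-p_{i+1}>0$, so I need $W_{i+1}\ge T_i$. By Observation~\ref{obs:wW} applied to the optimal schedule, $w_i\ge W_{i-1}$, and since $w_i=p_i>T_i$ is already known from Observation~\ref{obs:TW}, I would instead compare $W_{i+1}$ to $T_i$ directly: $T_i=\frac{p_{i-1}}{2}+\frac{T_{i-1}}{2}$ by Observation~\ref{obs:T_i}, and $W_{i+1}=\frac{w_{i+3}}{2}+\frac{W_{i+2}}{2}$ — or more usefully, I would use $W_{i-1}=\frac{w_{i+1}}{2}+\frac{W_i}{2}$ and $W_i=\frac{w_{i+2}}{2}+\frac{W_{i+1}}{2}$ together with Observation~\ref{obs:wW}'s chain $w_i\ge W_{i-1}\ge W_i\ge W_{i+1}$ from Observation~\ref{obs:TW}. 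The cleanest route: since $w_i=p_i$ and $w_i\ge W_{i-1}\ge W_{i+1}$, and separately one needs $W_{i+1}\ge T_i$ — this is the inequality that must be extracted from the peak condition $p_i>p_{i-1}$ and the recursions for $T$ and $W$.

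The main obstacle, then, is precisely establishing the inequality $W_{i+1}\ge T_i$ (equivalently, in the mirror case, $W_{i+1}\le T_i$) at a peak, using only Observations~\ref{obs:T_i}, \ref{obs:wW}, \ref{obs:TW}. I expect the right tactic is to compare the two recursions $T_{i+1}=\tfrac12(p_i+T_i)$ and $W_i=\tfrac12(w_{i+2}+W_{i+1})$ and to track the quantity $W_\ell-T_{\ell+2}$ or $W_{\ell-1}-T_{\ell+1}$ along the sequence, showing that at a peak the feasibility of $j_{i+1}$ (or $j_i$) on the shared processor together with the optimality constraints forces the sign. The processing-time-inclusive hypothesis is used twice: to guarantee that the post-exchange schedule $\cS'$ is itself a valid synchronized schedule (so Lemma~\ref{lem:switching} applies — the short job $j_{i+1}$ does not ``fall off'' the shared processor after being moved earlier), and implicitly through Observation~\ref{obs:TW} which needs $p_\ell>T_\ell$ for all $\ell$. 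Once the favorable exchange is found, I would finish by noting it does not decrease $\tct{}$ and strictly decreases the number of peaks (or, choosing $\varepsilon$-type strictness, strictly increases $\tct{}$ contradicting optimality outright); iterating over all shared processors yields an optimal synchronized schedule that is V-shaped, and since the instance was assumed to already have an optimal synchronized schedule by Lemma~\ref{lem:synchronized}, this completes the proof.
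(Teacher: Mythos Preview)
Your derivation of the swap formula is right: with $p_i=w_i$, Lemma~\ref{lem:switching} collapses to $\tct{\cS}-\tct{\cS'}=\frac{(p_{i+1}-p_i)(T_i-W_{i+1})}{4}$. There is a sign slip immediately after (for $p_i>p_{i+1}$, the swap does not decrease the overlap iff $W_{i+1}\leq T_i$, not $\geq$), but the real gap is structural. Because $\cS$ is optimal, \emph{every} adjacent swap already satisfies $\tct{\cS}-\tct{\cS'}\geq 0$, i.e.\ $(p_{i+1}-p_i)(T_i-W_{i+1})\geq 0$ for each $i$. So at a descent $p_i>p_{i+1}$, optimality itself forces $T_i\leq W_{i+1}$ --- precisely the \emph{negation} of what you would need (after fixing the sign) to make that swap improving. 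You are attempting to extract, from the $T$- and $W$-recursions alone, an inequality at a single position that contradicts the optimality constraint at that very position; this cannot succeed, and the ``main obstacle'' you identify is in fact insurmountable along this route.

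The paper's proof turns the logic around: it \emph{uses} the optimality inequalities $(p_{i+1}-p_i)(T_i-W_{i+1})\geq 0$ as given, at \emph{two} positions simultaneously. An ascent ($p_{i+1}>p_i$) forces $T_i\geq W_{i+1}$; a descent forces $T_i\leq W_{i+1}$. If an ascent at position $r$ precedes a descent at position $l>r$, then Observation~\ref{obs:TW} (strictly increasing $T_i$, non-increasing $W_i$) yields $T_l>T_r\geq W_{r+1}\geq W_{l+1}\geq T_l$, a contradiction; hence every descent precedes every ascent, which is exactly V-shape. Equivalently, localized at a peak $p_{i-1}<p_i>p_{i+1}$, the optimality constraints at positions $i-1$ and $i$ give $T_{i-1}\geq W_i$ and $T_i\leq W_{i+1}$, whence $T_{i-1}\geq W_i\geq W_{i+1}\geq T_i>T_{i-1}$. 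The missing idea is to combine the optimality constraints at an ascent and a descent via the monotonicity of $T$ and $W$, rather than to try to force a single swap to improve.
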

\begin{proof}
Let $\cS$ be an optimal synchronized schedule for $\jobs$.
Take an arbitrary shared processor, and let $j_1,\ldots,j_k$ be the order of jobs on this processor.
Since $\cS$ is optimal, by Lemma~\ref{lem:switching} for each $i\in\{1,\ldots,k-1\}$ (note that the jobs in $\jobs$ are processing-time-inclusive by assumption as required in the lemma),
\begin{equation} \label{eq:non-positive}
\frac{(p_{i+1}-p_i)(T_i-W_{i+1})}{4} \geq 0
\end{equation}
because $w_i=p_i$ and $w_{i+1}=p_{i+1}$.
Denote $b_i=(p_{i+1}-p_i)/|p_{i+1}-p_i|$ for each $i\in\{1,\ldots,k-1\}$. By definition $b_i=0$ for $p_{i+1}=p_i$.
Note that if all $b_i$'s are non-negative or all of them are non-positive, then $p_1\leq\cdots\leq p_k$ and $p_1\geq\cdots\geq p_k$, respectively, and hence $\cS$ is V-shaped.
Define $l=\max\{i\st b_i=-1\}$ and $r=\min\{i\st b_i=1\}$.
Note that $l<r$ implies
\[ p_1\geq p_2\geq\cdots\geq p_l \quad\textup{and}\quad p_{l}\leq p_{l+1}\leq\cdots\leq p_k\]
and thus $\cS$ is V-shaped.
Hence, it remains to argue that $l<r$.
Suppose for a contradiction that $l>r$ (note that $l\neq r$ by definition).
By \eqref{eq:non-positive},
\[\frac{b_r(T_r-W_{r+1})}{4} \geq 0\]
which implies that $T_r\geq W_{r+1}$.
By Observation~\ref{obs:TW}, $W_{r+1}\geq W_{l+1}$ and $T_r<T_l$, which implies $T_l> W_{l+1}$.
Since $b_l=-1$, this gives $b_l(T_l-W_{l+1})<0$.
Hence,
\[\frac{(p_{l+1}-p_l)(T_l-W_{l+1})}{4} < 0\]
which contradicts \eqref{eq:non-positive} and completes the proof.
\end{proof}

\subsection{Duality and Reversibility}

This section introduces duality of processing times and weights in the WSMP problem. This duality will be used in  the next section to prove the problem strong NP-hardness.
The duality is particularly easy to observe from a matrix representation of the total weighted overlap of a synchronized schedule.
Let us define two $k\times k$ matrices
\[
\vectL_k = \begin{bmatrix}
            0          & 0          & 0          & \cdots & 0        & 0 \\
            2^{-1}     & 0          & 0          & \cdots & 0        & 0 \\
            2^{-2}     & 2^{-1}     & 0          & \cdots & 0        & 0 \\
            \vdots     & \vdots  & \vdots  & \ddots & \vdots   & \vdots \\
            2^{-(k-2)} & 2^{-(k-3)} & 2^{-(k-2)} & \cdots & 0        & 0 \\
            2^{-(k-1)} & 2^{-(k-2)} & 2^{-(k-3)} & \cdots & 2^{-1}   & 0 \\
           \end{bmatrix},
\quad \textup{and}\quad
\vectU_k= \begin{bmatrix}
            0       & 2^{-1}     & \cdots & 2^{-(k-3)}  & 2^{-(k-2)} & 2^{-(k-1)} \\
            0       & 0          & \cdots & 2^{-(k-4)}  & 2^{-(k-3)} & 2^{-(k-2)} \\
            0       & 0          & \cdots & 2^{-(k-5)}  & 2^{-(k-4)} & 2^{-(k-3)}\\
            \vdots  & \vdots     & \ddots & \vdots      & \vdots     & \vdots \\
            0       & 0          & \cdots & 0           & 0          & 2^{-1} \\
            0       & 0          & \cdots & 0           & 0          & 0 \\
           \end{bmatrix}.
\]
Let $\vectW$ be the vector of weights, $\vectW=[w_1\,\ldots\,w_k]$, and $\vectP$ be the vector of processing times, $\vectP=[p_1\,\ldots\,p_k]$.
Since $(\vectA\cdot\vectB)^{\T}=\vectB^{\T}\cdot\vectA^{\T}$, we observe the following.
\begin{observation} \label{obs:matrices}
It holds $\vectW\cdot\vectL_k\cdot\vectP^{\T}=\vectP\cdot\vectU_k\cdot\vectW^{\T}$.
\qed
\end{observation}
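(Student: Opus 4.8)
The plan is to use the hint already in place, the identity $(\vectA\cdot\vectB)^{\T}=\vectB^{\T}\cdot\vectA^{\T}$, together with the elementary remark that both $\vectW\cdot\vectL_k\cdot\vectP^{\T}$ and $\vectP\cdot\vectU_k\cdot\vectW^{\T}$ are $1\times 1$ matrices, hence scalars, and a scalar equals its own transpose. So it suffices to show that transposing the product $\vectW\cdot\vectL_k\cdot\vectP^{\T}$ turns it into $\vectP\cdot\vectU_k\cdot\vectW^{\T}$.

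First I would apply the transpose-of-a-product rule twice to obtain
\[\left(\vectW\cdot\vectL_k\cdot\vectP^{\T}\right)^{\T}=\left(\vectP^{\T}\right)^{\T}\cdot\vectL_k^{\T}\cdot\vectW^{\T}=\vectP\cdot\vectL_k^{\T}\cdot\vectW^{\T}.\]
Since the left-hand side is a scalar it equals its transpose, and therefore $\vectW\cdot\vectL_k\cdot\vectP^{\T}=\vectP\cdot\vectL_k^{\T}\cdot\vectW^{\T}$. It then remains only to check that $\vectL_k^{\T}=\vectU_k$. Reading the two displayed matrices entrywise, the $(i,j)$ entry of $\vectL_k$ equals $2^{-(i-j)}$ when $i>j$ and $0$ otherwise, so the $(i,j)$ entry of $\vectL_k^{\T}$ equals $2^{-(j-i)}$ when $j>i$ and $0$ otherwise, which is exactly the $(i,j)$ entry of $\vectU_k$. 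Substituting $\vectL_k^{\T}=\vectU_k$ into the equality above yields the claim.

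There is essentially no obstacle here: the argument is one line of linear algebra followed by an entrywise inspection of the two matrices. The only point that warrants a moment's care is confirming that $\vectU_k$ is genuinely the transpose of $\vectL_k$ rather than, say, a row- or column-reversal of it; this is immediate once both matrices are read as the natural triangular arrays whose nonzero entries are the negative powers of $2$ indexed by the distance from the main diagonal.
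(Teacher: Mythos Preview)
Your argument is correct and is exactly the approach the paper has in mind: the paper records only the hint $(\vectA\cdot\vectB)^{\T}=\vectB^{\T}\cdot\vectA^{\T}$ and a \qed, and your proposal simply unpacks that hint, together with the observation $\vectL_k^{\T}=\vectU_k$, into a full justification.
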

The above matrix notation can be conveniently used to express the total weighted overlap of a given schedule as stated in the next lemma.
\begin{lemma} \label{lem:SviaMatrices}
Let $\cS$ be a synchronized schedule that executes jobs $j_1,\ldots,j_k$ with processing times $p_1,\ldots,p_k$ and weights $w_1,\ldots,w_k$, respectively, on a single shared processor in the order $j_1,\ldots,j_k$.
Then, 
\[\tct{\cS}=\frac{1}{2}\vectP\cdot\vectI_k\cdot\vectW^{\T}-\frac{1}{2}\vectW\cdot\vectL_k\cdot\vectP^{\T}=\frac{1}{2}\vectW\cdot\vectI_k\cdot\vectP^{\T}-\frac{1}{2}\vectP\cdot\vectU_k\cdot\vectW^{\T},\]
 where $\vectI_k$ is the $k\times k$ identity matrix.
\end{lemma}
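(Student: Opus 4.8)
The plan is to read off the overlap of each job from Observation~\ref{obs:T_i} and then recognize the two resulting sums as bilinear forms in $\vectP$ and $\vectW$. Concretely, in a synchronized schedule job $j_i$ occupies the interval $(T_i,T_{i+1})$ on the shared processor, so its private interval is $(0,p_i-\bar t_i)=(0,T_{i+1})$ and its overlap is exactly the shared interval $(T_i,T_{i+1})$, of length $\bar t_i=\frac{p_i}{2}-\frac{T_i}{2}$. Hence
\[\tct{\cS}=\sum_{i=1}^k \bar t_i\,w_i=\frac{1}{2}\sum_{i=1}^k p_iw_i-\frac{1}{2}\sum_{i=1}^k T_iw_i.\]
The first sum is visibly $\vectP\cdot\vectI_k\cdot\vectW^{\T}=\vectW\cdot\vectI_k\cdot\vectP^{\T}$, so the whole claim reduces to identifying $\sum_{i=1}^k T_iw_i$ with $\vectW\cdot\vectL_k\cdot\vectP^{\T}$.

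For that, I would first record the entry formula for $\vectL_k$: it is the strictly lower triangular matrix with $(\vectL_k)_{i,\ell}=2^{-(i-\ell)}$ for $\ell<i$ and $(\vectL_k)_{i,\ell}=0$ for $\ell\geq i$ (this is the pattern displayed, modulo the evident typo in one entry of the matrix). Substituting the expression $T_i=\sum_{\ell=1}^{i-1}p_\ell/2^{i-\ell}$ from Observation~\ref{obs:T_i} gives
\[\sum_{i=1}^k T_iw_i=\sum_{i=1}^k w_i\sum_{\ell=1}^{i-1}\frac{p_\ell}{2^{i-\ell}}=\sum_{i=1}^k\sum_{\ell=1}^k w_i\,(\vectL_k)_{i,\ell}\,p_\ell=\vectW\cdot\vectL_k\cdot\vectP^{\T},\]
where the middle equality just uses that the vanishing of $(\vectL_k)_{i,\ell}$ for $\ell\geq i$ lets one extend the inner sum to all $\ell$. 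This yields the first claimed identity $\tct{\cS}=\tfrac12\vectP\cdot\vectI_k\cdot\vectW^{\T}-\tfrac12\vectW\cdot\vectL_k\cdot\vectP^{\T}$. The second identity then follows for free from Observation~\ref{obs:matrices}, which gives $\vectW\cdot\vectL_k\cdot\vectP^{\T}=\vectP\cdot\vectU_k\cdot\vectW^{\T}$, together with $\vectP\cdot\vectI_k\cdot\vectW^{\T}=\vectW\cdot\vectI_k\cdot\vectP^{\T}$; substituting both into the first identity produces $\tct{\cS}=\tfrac12\vectW\cdot\vectI_k\cdot\vectP^{\T}-\tfrac12\vectP\cdot\vectU_k\cdot\vectW^{\T}$.

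There is no real obstacle here: all the mathematical content sits in Observation~\ref{obs:T_i}, and the rest is bookkeeping. The only point deserving care is lining up the subdiagonal pattern of $\vectL_k$ with the coefficient $2^{-(i-\ell)}$ of $p_\ell$ in $T_i$; I would state the entry formula for $\vectL_k$ explicitly at the start of the proof so that this identification — and the reduction of the double sum to the range $\ell<i$ — is transparent, and so that the reader is not thrown off by the misprinted entry in the displayed matrix.
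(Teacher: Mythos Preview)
Your proof is correct and is essentially the same as the paper's: both read off the overlap from Observation~\ref{obs:T_i}, split $\tct{\cS}$ into $\tfrac12\sum p_iw_i$ minus a strictly lower-triangular double sum, identify the latter with $\tfrac12\vectW\cdot\vectL_k\cdot\vectP^{\T}$, and then invoke Observation~\ref{obs:matrices} for the second identity. The only cosmetic difference is that you use the closed form $\bar t_i=\tfrac{p_i}{2}-\tfrac{T_i}{2}$ directly, whereas the paper expands $T_{\ell+1}-T_\ell$ first before simplifying to the same expression.
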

\begin{proof}
By Observation \ref{obs:T_i} we have
\[
\tct{\cS}  =  \sum_{\ell=1}^k (T_{\ell+1}-T_{\ell})w_{\ell} 
           =  \sum_{\ell=1}^k \left(\sum_{t=1}^{\ell}\frac{p_t}{2^{\ell+1-t}} - \sum_{t=1}^{\ell-1}\frac{p_t}{2^{\ell-t}}\right) w_{\ell} 
           =  \sum_{\ell=1}^k \frac{p_{\ell}w_{\ell}}{2} - \sum_{\ell=1}^k \sum_{t=1}^{\ell-1} \frac{p_t w_{\ell}}{2^{\ell+1-t}}.
\]
Note that
\[\sum_{\ell=1}^k \frac{p_{\ell}w_{\ell}}{2} = \frac{1}{2}\vectP\cdot\vectI_k\cdot\vectW^{\T}=\frac{1}{2}\vectW\cdot\vectI_k\cdot\vectP^{\T}\]
and, by Observation~\ref{obs:matrices},
\[\sum_{\ell=1}^k \sum_{t=1}^{\ell-1} \frac{p_t w_{\ell}}{2^{\ell+1-t}} = \frac{1}{2}\vectW\cdot\vectL_k\cdot\vectP^{\T}= \frac{1}{2}\vectP\cdot\vectU_k\cdot\vectW^{\T},\]
which completes the proof.
\end{proof}

This lemma points out at a duality of processing times and weights in the WSMP  problem, namely, $\frac{1}{2}\vectW\cdot\vectI_k\cdot\vectP^{\T}-\frac{1}{2}\vectP\cdot\vectU_k\cdot\vectW^{\T}$ is a transposition
of $\frac{1}{2}\vectP\cdot\vectI_k\cdot\vectW^{\T}-\frac{1}{2}\vectW\cdot\vectL_k\cdot\vectP^{\T}$. The former takes the weights for processing times and the processing times for the weights from the latter, and the $\vectU_k=\vectL_k^{\T}$ reverses the order of jobs from $1,\ldots,k$ to $k,\ldots,1$. Unfortunately, the reversed order may not result in a feasible schedule on the shared processor in general since  it may no longer be possible to execute some jobs on the shared processor according to that order because the jobs may prove too short for that. Recall that we exchanged processing times for weights and vice versa besides reversing the order.
Again, the test whether this actually happens depends not only on the weights of  jobs but also on their \emph{order}. Therefore we introduce
an easy to test sufficient condition that would always guarantee the validity of the reversed order.

We now introduce a concept analogous to the one of processing-time-inclusive jobs but for the weights.
Consider a set of jobs $A=\{j_1,\ldots,j_k\}$ such that $w_{1}\leq w_{2}\leq\cdots\leq w_{k}$, where $w_i$ is the weight of $j_i$, $i\in\{1,\ldots,k\}$.
We say that the set of jobs $A$ is \emph{weight-inclusive} if
\[x=\sum_{\ell=1}^{|A|-1} \frac{w_{\ell+1}}{2^{|A|-\ell}} < w_1.\]
Note that $x$ is the makespan of a schedule on a shared processor for jobs in $A\setminus\{j_1\}$ when the jobs are scheduled in ascending order of their weights, i.e., the order $j_2,\ldots,j_k$, and the processing time of $j_i$ equals its weight $w_i$ for each $i\in\{2,\ldots,k\}$.
Again, by~\cite{VairaktarakisAydinliyim07}, this order of jobs in $A\setminus\{j_1\}$ provides the longest schedule on the shared processor.
The condition can be checked in time $O(|\jobs| \log |\jobs|)$.
We remark that if $A'$ is a set of $k$ jobs such that the processing time of the $i$-th job in $A'$ equals $w_i$, then $A$ is weight-inclusive if and only if $A'$ is processing-time-inclusive.

We have the following duality lemma.
\begin{lemma} \label{lem:reversability}
Let $\mathcal{M}$ be a shared processor with jobs $\jobs$. Suppose that $\jobs$ is both processing-time-inclusive and weight-inclusive.
Let $\cS$ be any synchronized schedule and let $\cS'$ be a synchronized schedule obtained from $\cS$ by reversing the order of jobs on $\mathcal{M}$, and by exchanging the processing times for weights and the weights for processing times.
Then, $\tct{\cS}=\tct{\cS'}$.
\qed
\end{lemma}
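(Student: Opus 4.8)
The plan is to invoke Lemma~\ref{lem:SviaMatrices} twice and then Observation~\ref{obs:matrices} once, with the two inclusiveness hypotheses serving only to guarantee that both schedules named in the statement actually exist. First I would note that since $\jobs$ is processing-time-inclusive, the schedule $\cS$ (with its jobs in whatever order they sit on $\mathcal{M}$) is realizable on the shared processor — indeed, any order of a processing-time-inclusive set yields a valid synchronized schedule, as observed in Section~\ref{lem:V:general}. Symmetrically, since $\jobs$ is weight-inclusive, the "dual instance" $\jobs'$ in which the $i$-th job's processing time is set to $w_i$ is processing-time-inclusive (this is exactly the remark preceding the lemma), so reversing the order and swapping $\vectP \leftrightarrow \vectW$ produces a valid synchronized schedule $\cS'$. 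Thus both $\cS$ and $\cS'$ are well-defined feasible synchronized schedules, and it remains only to compare their total weighted overlaps.

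Next I would write the two values using the matrix formulas. Label the jobs on $\mathcal{M}$ in $\cS$ as $j_1,\ldots,j_k$ with processing-time vector $\vectP=[p_1\,\ldots\,p_k]$ and weight vector $\vectW=[w_1\,\ldots\,w_k]$. By Lemma~\ref{lem:SviaMatrices},
\[
\tct{\cS}=\frac12\vectW\cdot\vectI_k\cdot\vectP^{\T}-\frac12\vectP\cdot\vectU_k\cdot\vectW^{\T}.
\]
For $\cS'$ the jobs appear in the reversed order $j_k,\ldots,j_1$, and the role of processing times is played by the weights and vice versa; hence its processing-time vector is $\tilde\vectW=[w_k\,\ldots\,w_1]$ and its weight vector is $\tilde\vectP=[p_k\,\ldots\,p_1]$. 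Applying Lemma~\ref{lem:SviaMatrices} to $\cS'$ (using the first of the two equivalent expressions there) gives
\[
\tct{\cS'}=\frac12\tilde\vectP\cdot\vectI_k\cdot\tilde\vectW^{\T}-\frac12\tilde\vectW\cdot\vectL_k\cdot\tilde\vectP^{\T}.
\]

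The final step is a bookkeeping identity: reversing a vector is conjugation by the $k\times k$ anti-identity (reversal) matrix $\vectR$ (with $\vectR^{\T}=\vectR$, $\vectR^2=\vectI_k$), so $\tilde\vectW^{\T}=\vectR\vectW^{\T}$ and $\tilde\vectP^{\T}=\vectR\vectP^{\T}$. The diagonal term is immediately invariant, $\tilde\vectP\cdot\vectI_k\cdot\tilde\vectW^{\T}=\vectP\vectR\vectR\vectW^{\T}=\vectP\cdot\vectI_k\cdot\vectW^{\T}=\vectW\cdot\vectI_k\cdot\vectP^{\T}$. For the other term, one checks directly from the displayed entries of $\vectL_k$ and $\vectU_k$ that $\vectR\,\vectL_k\,\vectR=\vectU_k$ (conjugating by $\vectR$ sends entry $(a,b)$ to $(k+1-a,k+1-b)$, which turns the strictly-lower triangle of weights $2^{-(a-b)}$ into the strictly-upper triangle of weights $2^{-(b-a)}$, i.e.\ exactly $\vectU_k$). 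Hence $\tilde\vectW\cdot\vectL_k\cdot\tilde\vectP^{\T}=\vectW\vectR\vectL_k\vectR\vectP^{\T}=\vectW\cdot\vectU_k\cdot\vectP^{\T}$, which by Observation~\ref{obs:matrices} (applied with the roles of $\vectW$ and $\vectP$ swapped, i.e.\ $\vectW\vectU_k\vectP^{\T}=(\vectP\vectU_k^{\T}\vectW^{\T})=\vectP\vectL_k\vectW^{\T}$) equals $\vectP\cdot\vectU_k\cdot\vectW^{\T}$ after one more transposition. Combining, $\tct{\cS'}=\frac12\vectW\cdot\vectI_k\cdot\vectP^{\T}-\frac12\vectP\cdot\vectU_k\cdot\vectW^{\T}=\tct{\cS}$.

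I expect the only genuinely delicate point to be the algebraic identity $\vectR\,\vectL_k\,\vectR=\vectU_k$ together with keeping straight which of the two equivalent expressions from Lemma~\ref{lem:SviaMatrices} to use for each schedule; everything else is either a direct citation (existence of the schedules from the two inclusiveness conditions, Observation~\ref{obs:matrices}) or a one-line transpose manipulation. The statement carries a \qed, so presumably the authors regard even this as routine; the plan above makes the mechanism explicit.
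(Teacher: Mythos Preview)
Your plan is exactly the paper's intended argument (the lemma carries a bare \qed\ because the authors regard it as immediate from Lemma~\ref{lem:SviaMatrices} and the surrounding discussion), and the key identity $\vectR\,\vectL_k\,\vectR=\vectU_k$ is the right mechanism. However, your execution contains a bookkeeping slip that makes the final step illegitimate.

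When you apply Lemma~\ref{lem:SviaMatrices} to $\cS'$, the processing-time vector is $\tilde\vectW$ and the weight vector is $\tilde\vectP$, so the \emph{first} form of the lemma reads
\[
\tct{\cS'}=\tfrac12\,\tilde\vectW\cdot\vectI_k\cdot\tilde\vectP^{\T}-\tfrac12\,\tilde\vectP\cdot\vectL_k\cdot\tilde\vectW^{\T},
\]
with $\tilde\vectP$ and $\tilde\vectW$ interchanged relative to what you wrote. With the correct substitution the second term is $\tilde\vectP\,\vectL_k\,\tilde\vectW^{\T}=\vectP\,\vectR\,\vectL_k\,\vectR\,\vectW^{\T}=\vectP\,\vectU_k\,\vectW^{\T}$ directly, and you are done without invoking Observation~\ref{obs:matrices} at all. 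Your version instead lands on $\vectW\,\vectU_k\,\vectP^{\T}=\vectP\,\vectL_k\,\vectW^{\T}$, and the claimed ``one more transposition'' to reach $\vectP\,\vectU_k\,\vectW^{\T}$ is false: transposing the scalar $\vectP\,\vectL_k\,\vectW^{\T}$ just returns $\vectW\,\vectU_k\,\vectP^{\T}$, not $\vectP\,\vectU_k\,\vectW^{\T}$. Fix the substitution and drop that last step; the rest is correct and matches the paper.
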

Observe that for the case $p_i=w_i$, the processing-time-inclusion for $\jobs$ on $\mathcal{M}_{\ell}$ implies the weight-inclusion for $\jobs$ on $\mathcal{M}_{\ell}$, and the duality reduces to a schedule reversibility.

\section{WSMP is NP-hard in the strong sense} \label{sec:NPC}

In this section we prove, by a transformation from the Numerical 3-Dimensional Matching (N3DM) \cite{GareyJohnson79}, that the decision version of weighted multiple shared-processors (WMSM) problem is strongly NP-hard even if for each job its processing time and weight are equal.
The N3DM problem input consists of three multisets of integers $X =\{x_{1} ,\ldots  ,x_{n}\} ,Y =\{y_{1} ,\ldots  ,y_{n}\}$ and $Z =\{z_{1} ,\ldots  ,z_{n}\}$, and an integer $b$.
The decision question is: does there exist multisets $S_1,\ldots,S_n$, each of size $3$, such that $\bigcup_{i=1}^n S_i=X\cup Y\cup Z$ and for each $i\in\{1,\ldots,n\}$ it holds $\sum_{a\in S_i}a=b$, $|X\cap S_i|=1$, $|Y\cap S_i|=1$ and $|Z\cap S_i|=1$?
In this section we use $\xi(\mathbf{A})$ to denote the sum of all entries of a matrix $\mathbf{A}$.

We construct an instance of the WMSM problem as follows.
The weights are equal to processing times for all jobs.
There are $3 n$ jobs and $n$ shared processors.
The jobs are split into three sets $A ,B$ and $C$ of equal size $n$.
The jobs in $A$ have processing times
\begin{equation*}
s_{i} =2 (M +m +x_{i}) =2 a_{i},
\end{equation*}
the jobs in $B$ have processing times
\begin{equation*}
b_{i} =2 M +y_{i},
\end{equation*}
and the jobs in $C$ have processing times
\begin{equation*}
r_{i} =2 (M +m^{2} +z_{i}) =2 c_{i}.
\end{equation*}
We take the integers $M$ and $m$ as follows:
\begin{equation} \label{eq:Mm}
M >7(m^2+b) \quad \textup{ and }\quad m >\max \{b ,6\}.
\end{equation}
Informally speaking, the $M$ is to guarantee that each shared processor has exactly three jobs in an optimal schedule, the $m$ is to guarantee that each shared processor does exactly one job from each of the sets $A$, $B$ and $C$.

A synchronized schedule $\cS$ for the above instance is called \emph{equitable} if each shared processor executes exactly three jobs $i\in A$, $j\in B$ and $k\in C$ with the ordering $(i, j, k)$.

For brevity we define:
\[h(\Delta_1,\ldots,\Delta_n) := \sum _{l =1}^{n}\left(\frac{15}{8}a_{l}^2+\frac{3}{8}b_{l}^2 + \frac{15}{8}c_{l}^2\right) -\frac{1}{4} \sum_{l=1}^n\left(4 M +m +m^{2} +b-\Delta_l\right)^{2}\]
for any integers $\Delta_1,\ldots,\Delta_n$.
The lower bound in the decision counterpart of the WMSM is set to $h(0,\ldots,0)$.

The outline of the proof is as follows.
In Lemma~\ref{lem:total} we provide a formula for the total weighted overlap of a given equitable schedule.
Informally speaking, this lemma provides in particular a one-to-one correspondence between the total weighted overlaps of equitable schedules and the values of $h(\Delta_1,\ldots,\Delta_n)$.
This reduces the task of finding equitable schedules that maximize the total weighted overlap to finding values of $\Delta_1,\ldots,\Delta_n$ that maximize the function $h$.
These values are $\Delta_1=\cdots=\Delta_n=0$ as indicated above.  We use this observation in Lemma~\ref{lem:NPCiff}; this key lemma proves the correspondence between N3DM and WMSM but it works with equitable schedules only.
More precisely, we argue in Lemma~\ref{lem:NPCiff} that there exists a solution to the N3DM problem if and only if there exists an equitable schedule $\cS$ for the WMSM problem for which it holds $\tct{\cS}\geq h(0,\ldots,0)$.
Finally, Lemma~\ref{lem:equitable} justifies restricting attention to equitable schedules only: each optimal schedule for an instance of WMSM constructed from an input to N3DM problem is equitable.
Thus, these three lemmas prove our NP-hardness result stated in Theorem~\ref{thm:NPC}.

\begin{lemma} \label{lem:total}
For an equitable schedule $\cS$ it holds
\[\tct{\cS}= h(\Delta_1,\ldots,\Delta_n), \]
where $\Delta_{l} =b -(x_{i} +y_{j} +z_{k})$ and $i, j, k$ are jobs from $A ,B$ and $C$, respectively, done on shared processor $\Mshared_l$.
\end{lemma}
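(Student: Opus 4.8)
The plan is to compute $\tct{\cS}$ directly for an equitable schedule by summing the contribution of each shared processor, using the closed form from Observation~\ref{obs:T_i}. Fix a shared processor $\Mshared_l$ executing jobs $i\in A$, $j\in B$, $k\in C$ in the order $(i,j,k)$, with processing times $p_1=s_i=2a_i$, $p_2=b_j=2M+y_j$, $p_3=r_k=2c_k$ and, since $p_t=w_t$, the same values as weights. By Observation~\ref{obs:T_i}, $T_1=0$, $T_2=p_1/2$, $T_3=p_1/4+p_2/2$, $T_4=p_1/8+p_2/4+p_3/2$, and the overlaps are $\bar t_1=p_1/2-T_1/2$, $\bar t_2=p_2/2-T_2/2$, $\bar t_3=p_3/2-T_3/2$. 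The processor's contribution is $\bar t_1 w_1+\bar t_2 w_2+\bar t_3 w_3 = \frac12(p_1^2+p_2^2+p_3^2) - \frac12(T_1p_1+T_2p_2+T_3p_3)$; substituting the $T_t$ and collecting powers of two turns this into a quadratic form in $p_1,p_2,p_3$ with rational coefficients. This is exactly the per-processor computation already carried out in general in Lemma~\ref{lem:SviaMatrices}, specialized to $k=3$, so I would cite that lemma and just evaluate $\frac12\vectP\cdot\vectI_3\cdot\vectW^{\T}-\frac12\vectW\cdot\vectL_3\cdot\vectP^{\T}$ with $\vectW=\vectP=[p_1\,p_2\,p_3]$.

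Next I would substitute $p_1=2a_i$, $p_2=b_j$, $p_3=2c_k$ into this quadratic form and rewrite the cross terms. The key algebraic move is to recognize that the cross-term part is (up to sign and the factor $\tfrac14$) a perfect square: writing $b_j=2M+y_j$ and recalling $a_i=M+m+x_i$, $c_k=M+m^2+z_k$, one checks that $2a_i+b_j+2c_k = 6M+m+m^2+y_j+2x_i+2z_k$ — but the quantity that appears naturally after completing the square is $4M+m+m^2+b-\Delta_l$ where $\Delta_l=b-(x_i+y_j+z_k)$, since $4M+m+m^2+(x_i+y_j+z_k) = 4M+m+m^2+b-\Delta_l$. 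So the plan is: expand the quadratic form, separate the ``diagonal'' part $\frac{15}{8}a_i^2+\frac38 b_j^2+\frac{15}{8}c_k^2$ (verifying the coefficients $\frac{15}{8},\frac38,\frac{15}{8}$ come out of $\frac12 p_t^2$ minus the self-interaction terms from $T_tp_t$), and show the remaining terms assemble into $-\frac14(4M+m+m^2+b-\Delta_l)^2$. Summing over all $n$ processors and noting that $\sum_l$ ranges over a partition of $A\cup B\cup C$ (so $\sum_l a_i^2 = \sum_{l=1}^n a_l^2$, etc., after reindexing) yields precisely $h(\Delta_1,\ldots,\Delta_n)$.

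The main obstacle is purely bookkeeping: getting the coefficients of the quadratic form exactly right after substituting $p_1=2a_i$ and $p_3=2c_k$ (the factors of $2$ interact with the $2^{-t}$ weights, which is why $a_i^2$ and $c_k^2$ get coefficient $\frac{15}{8}$ while $b_j^2$ gets $\frac38$), and then correctly identifying which grouping of the mixed terms completes the square to $(4M+m+m^2+b-\Delta_l)^2$ rather than to some other linear combination of $a_i,b_j,c_k$. I would organize this by first writing $\tct{\cS}|_{\Mshared_l}$ as $\frac12(p_1^2+p_2^2+p_3^2)-\frac14 p_1p_2-\frac18 p_1p_3-\frac14 p_2p_3-\frac18 p_1\cdot\tfrac{?}{}\ldots$ — i.e.\ reading the entries of $\vectL_3$ off directly — then substituting, then completing the square once. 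No genuinely hard step is involved; the risk is an arithmetic slip, so I would double-check the final identity by testing the all-equal case $x_i=y_j=z_k=b/3$ (so $\Delta_l=0$) against the claimed value $h(0,\ldots,0)$, and by a dimension/degree check that both sides are quadratics in $M,m,x_i,y_j,z_k$ with matching leading behavior in $M$.
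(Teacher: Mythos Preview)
Your proposal is correct and follows essentially the same route as the paper: compute the per-processor overlap via the matrix identity of Lemma~\ref{lem:SviaMatrices}, substitute $s_i=2a_i$, $r_k=2c_k$, recognize that the cross terms assemble into $-\tfrac14(a_i+b_j+c_k)^2$ with $a_i+b_j+c_k=4M+m+m^2+b-\Delta_l$, and sum over $l$. The only stylistic difference is that the paper makes the perfect square transparent by the explicit decomposition $\mathbf{A}_l=\mathbf{B}_l+\mathbf{C}_l$ (where every entry of $\mathbf{B}_l$ is $\tfrac12$ times a product, so $\xi(\mathbf{B}_l)=\tfrac12(a_i+b_j+c_k)^2$ immediately), whereas you plan to complete the square by hand; this is a bookkeeping convenience, not a different idea.
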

\begin{proof}
Consider three jobs $i \in A ,j \in B ,k \in C$ scheduled with the ordering $(i, j, k)$ on some shared processor $\Mshared_l$. Denote this schedule by $\cS_l$. Let $\vectP_l=[s_i, b_j, r_k]$ be the vector of processing times of jobs $i, j, k$.
Since processing time equals the weight for each job, by Lemma~\ref{lem:SviaMatrices}
we have $\tct{\cS_l}=\frac{1}{2}\vectP_l\cdot\vectI_3\cdot\vectP_l^{\T}-\frac{1}{2}\vectP_l\cdot\vectL_3\cdot\vectP_l^{\T}=\frac{3}{4}\vectP_l\cdot\vectI_3\cdot\vectP_l^{\T}-\frac{1}{2}\xi(\mathbf{A}_l)$ where
\[
\mathbf{A}_l=\begin{bmatrix}
\frac{1}{2} s_{i} s_{i} & \frac{1}{4} s_{i} b_{j} & \frac{1}{8} s_{i} r_{k} \\
\frac{1}{4} b_{j} s_{i}  & \frac{1}{2} b_{j} b_{j} & \frac{1}{4} b_{j} r_{k} \\
\frac{1}{8}  r_{k} s_{i} & \frac{1}{4} r_{k} b_{j} & \frac{1}{2} r_{k} r_{k}
\end{bmatrix}
=
\begin{bmatrix}
2 a_{i} a_{i} & \frac{1}{2} a_{i} b_{j} & \frac{1}{2} a_{i} c_{k} \\
\frac{1}{2}  b_{j} a_{i} & \frac{1}{2} b_{j} b_{j} & \frac{1}{2} b_{j} c_{k} \\
\frac{1}{2} c_{k}  a_{i} & \frac{1}{2} c_{k} b_{j} & 2 c_{k} c_{k}
\end{bmatrix}
= \mathbf{B}_l + \mathbf{C}_l,\]
and where
\[
\mathbf{B}_l=\begin{bmatrix}
\frac{1}{2} a_{i} a_{i} & \frac{1}{2} a_{i} b_{j} & \frac{1}{2} a_{i} c_{k} \\
\frac{1}{2} b_{j}  a_{i} & \frac{1}{2} b_{j} b_{j} & \frac{1}{2} b_{j} c_{k} \\
\frac{1}{2} c_{k}  a_{i} & \frac{1}{2} c_{k} b_{j} & \frac{1}{2} c_{k} c_{k}
\end{bmatrix}
\textup{ and }
\mathbf{C}_l=\begin{bmatrix}
\frac{3}{2} a_{i} a_{i} & 0 & 0 \\
0 & 0 & 0 \\
0 & 0 & \frac{3}{2} c_{k} c_{k}
\end{bmatrix}.
\]
We have
\begin{equation*}
\xi(\mathbf{B}_l)=\frac{1}{2} (a_{i} +b_{j} +c_{k})^{2} =\frac{1}{2} (M +m +x_{i} +2 M +y_{j} +M +m^{2} +z_{k})^{2} =\frac{1}{2} (4 M +m +m^{2} +x_{i} +y_{j} +z_{k})^{2}
\end{equation*}
and
\begin{equation*}
\xi(\mathbf{C}_l) = \frac{3}{2} a_{i}^{2} +\frac{3}{2} c_{k}^{2}.
\end{equation*}
Therefore,
\begin{equation} \label{eq:total}
\sum_{l=1}^n\xi(\mathbf{A}_l) = \frac{1}{2}\sum_{l=1}^n (4 M +m +m^{2} +b - \Delta_l)^{2} +\frac{3}{2}\sum _{l =1}^{n}(a_{l}^{2} +c_{l}^{2}).
\end{equation}
We finally obtain:
\begin{eqnarray*}
\tct{\cS} & = & \sum_{l=1}^n \tct{\cS_l} =  \sum_{l=1}^n \frac{3}{4}\vectP_l\cdot\vectI_3\cdot\vectP_l^{\T} - \frac{1}{2}\sum_{l=1}^n \xi(\mathbf{A}_l) \\
          & = & \frac{3}{4}\sum_{l=1}^n \left( \frac{1}{2}s_{l}^2+\frac{1}{2}b_{l}^2 + \frac{1}{2}r_{l}^2\right) - \frac{1}{2}\sum_{l=1}^n \xi(\mathbf{A}_l) \\
          & = & \sum_{l=1}^n \left( \frac{15}{8}a_{l}^2+\frac{3}{8}b_{l}^2 + \frac{15}{8}c_{l}^2\right) - \frac{1}{4}\sum_{l=1}^n (4 M +m +m^{2} +b - \Delta_l)^{2} \\
          & = & h(\Delta_1,\ldots,\Delta_n).
\end{eqnarray*}
\end{proof}

\begin{lemma} \label{lem:NPCiff}
There exists a solution to the \textup{N3DM} problem with the input $X,Y,Z$ and $b$ if and only if for the set of jobs $A\cup B\cup C$ and $n$ shared processors there exists an equitable schedule $\cS$ such that $\tct{\cS} \geq h(0,\ldots,0)$.
\end{lemma}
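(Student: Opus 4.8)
The plan is to read off the lemma from Lemma~\ref{lem:total} after a short optimisation over the quantities $\Delta_1,\ldots,\Delta_n$. By Lemma~\ref{lem:total}, every equitable schedule $\cS$ satisfies $\tct{\cS}=h(\Delta_1,\ldots,\Delta_n)$, where $\Delta_l=b-(x_i+y_j+z_k)$ for the jobs $i\in A$, $j\in B$, $k\in C$ run on $\Mshared_l$. First I would separate the two structurally different parts of $h$. On the family of equitable schedules the leading sum $K:=\sum_{l=1}^{n}\bigl(\frac{15}{8}a_l^2+\frac{3}{8}b_l^2+\frac{15}{8}c_l^2\bigr)$ is a fixed constant: in an equitable schedule the jobs of $A$, of $B$, and of $C$ are each spread one per processor, so this sum equals $\frac{15}{8}\sum_{i\in A}a_i^2+\frac{3}{8}\sum_{j\in B}b_j^2+\frac{15}{8}\sum_{k\in C}c_k^2$ regardless of the assignment. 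For the same reason $\sum_{l=1}^{n}\Delta_l=nb-\sum_{i=1}^{n}(x_i+y_i+z_i)$ is also schedule-independent. We may assume $\sum_{i=1}^{n}(x_i+y_i+z_i)=nb$ --- this restriction leaves N3DM strongly NP-complete and otherwise the N3DM instance is infeasible --- so $\sum_{l=1}^{n}\Delta_l=0$ for every equitable schedule of the constructed instance.

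Next, writing $D=4M+m+m^2+b$ and expanding $(D-\Delta_l)^2$ while using $\sum_l\Delta_l=0$, I would obtain
\[
h(\Delta_1,\ldots,\Delta_n)=K-\frac{1}{4}\sum_{l=1}^{n}(D-\Delta_l)^2=K-\frac{1}{4}\Bigl(nD^2+\sum_{l=1}^{n}\Delta_l^2\Bigr)=h(0,\ldots,0)-\frac{1}{4}\sum_{l=1}^{n}\Delta_l^2.
\]
Hence $\tct{\cS}\le h(0,\ldots,0)$ for every equitable schedule, with equality exactly when $\Delta_l=0$ for all $l$. Therefore the bound $\tct{\cS}\ge h(0,\ldots,0)$ is attained by some equitable schedule if and only if there is an equitable schedule in which, on every shared processor, the three elements of $X,Y,Z$ indexed by that processor's jobs sum to $b$.

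It then remains to match this with solvability of N3DM. Given an N3DM solution $S_1,\ldots,S_n$, I would place on $\Mshared_l$ the three jobs indexed by the elements of $S_l$ lying in $X$, $Y$, $Z$, in the order one $A$-job, then one $B$-job, then one $C$-job; this is an equitable schedule with all $\Delta_l=0$, hence $\tct{\cS}=h(0,\ldots,0)$. Conversely, from an equitable $\cS$ with $\tct{\cS}\ge h(0,\ldots,0)$ the displayed identity forces all $\Delta_l=0$, so letting $S_l$ be the set of the three elements of $X\cup Y\cup Z$ indexed by the jobs on $\Mshared_l$ produces size-$3$ sets that partition $X\cup Y\cup Z$, meet each of $X$, $Y$, $Z$ in exactly one element (by equitability), and sum to $b$ (since $\Delta_l=0$); this is an N3DM solution. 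The argument is essentially bookkeeping once Lemma~\ref{lem:total} is in hand; the only point needing care is the reduction to N3DM instances with $\sum_i(x_i+y_i+z_i)=nb$, which is precisely what makes $h$, as a function of $(\Delta_1,\ldots,\Delta_n)$ over equitable schedules, concave with unique maximiser at the origin, and I foresee no genuine obstacle beyond that.
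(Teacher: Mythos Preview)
Your proposal is correct and follows essentially the same route as the paper's proof: both invoke Lemma~\ref{lem:total}, use $\sum_{l}\Delta_l=0$ to expand the quadratic and obtain $h(\Delta_1,\ldots,\Delta_n)=h(0,\ldots,0)-\tfrac{1}{4}\sum_{l}\Delta_l^{2}$, and conclude that the bound is met if and only if every $\Delta_l=0$. You make explicit two points the paper leaves implicit---that the leading sum in $h$ is independent of the particular equitable assignment, and that one may restrict to N3DM instances with $\sum_i(x_i+y_i+z_i)=nb$---but otherwise the arguments coincide.
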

\begin{proof}
($\Longrightarrow$)
Suppose that there exists a solution $S_1,\ldots,S_n$ to the \textup{N3DM} problem, where take for convenience $S_i=\{x_i,y_i,z_i\}$ for each $i\in\{1,\ldots,n\}$.
Construct a schedule $\cS$ such that the $i$-th shared processor executes the jobs with processing times $s_i,b_i,r_i$ in this order.
Since $\cS$ is equitable, Lemma~\ref{lem:total} implies that $\tct{\cS}=h(\Delta_1,\ldots,\Delta_n)$, where $\Delta_i=b-(x_1+y_i+z_i)$ for each $i\in\{1,\ldots,n\}$.
Since $S_1,\ldots,S_n$ is a solution to the \textup{N3DM} problem, $x_i+y_i+z_i=b$ for each $i\in\{1,\ldots,n\}$ and therefore $\tct{\cS}=h(0,\ldots,0)$ as required.

\medskip\noindent
($\Longleftarrow$)
Suppose there is an equitable schedule $\cS$ on $n$ shared processors with $\tct{\cS}\geq h(0,\ldots,0)$.
Recall that by definition of equitable schedule each shared processor does exactly three jobs, the first one from  $A$, the second from $B$ and the third from $C$.
By Lemma~\ref{lem:total}, $\tct{\cS}= h(\Delta_1,\ldots,\Delta_n)$, where $\Delta_{l} =b -(x_{l} +y_{l} +z_{l})$ and $s_{l} ,b_{l} ,r_{l}$ are the processing times of jobs from $A ,B$ and $C$, respectively, done on the $l$-th shared processor for each $l\in\{1,\ldots,n\}$.
Denote
\[g(\Delta_1,\ldots,\Delta_n) := \sum_{l=1}^n\left(4 M +m +m^{2} +b-\Delta_l\right)^{2}.\]
Since\begin{equation*}
\sum _{l =1}^{n} \Delta _{l} = 0
\end{equation*}
we have
\begin{equation*}
g(\Delta_1,\ldots,\Delta_n) = n (4 M +m +m^{2} +b)^{2} +\sum _{l=1}^{n} \Delta_{l}^{2}.
\end{equation*}
By definition,
\[h(\Delta_1,\ldots,\Delta_n)\geq h(0,\ldots,0) \quad\Leftrightarrow\quad g(\Delta_1,\ldots,\Delta_n)\leq g(0,\ldots,0).\]
Moreover,
\[g(\Delta_1,\ldots,\Delta_n)\leq g(0,\ldots,0) \quad\Leftrightarrow\quad \sum_{l=1}^{n}\Delta_{l}^2\leq 0.\]
Thus, $\Delta_{l} =0$ for each $l\in\{1,\ldots,n\}$ and hence $x_l+y_l+z_l=b$ for each $l\in\{1,\ldots,n\}$ which implies that $X,Y,Z$ and $b$ is a solution to N3DM.
\end{proof}

It remains to justify our earlier assumption that it is sufficient to limit ourselves to equitable schedules only.
\begin{lemma} \label{lem:equitable}
For the instance $\jobs=A\cup B\cup C$ on $n$ shared processors constructed from the input to the \textup{N3DM} problem, each optimal schedule is  equitable.
\end{lemma}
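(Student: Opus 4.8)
The plan is to show that any schedule which is not equitable cannot be optimal, by exhibiting for each ``defect'' a local improvement. By Lemma~\ref{lem:synchronized} we may restrict attention to synchronized schedules, and by Observation~\ref{lem:ordered} we may further assume the schedule is ordered. There are three ways a synchronized schedule can fail to be equitable: (1) some shared processor executes a number of jobs other than three; (2) a shared processor executes three jobs but not one from each of $A$, $B$, $C$; (3) a shared processor executes one job from each of $A$, $B$, $C$ but not in the order $(i,j,k)$ with $i\in A$, $j\in B$, $k\in C$. I would rule these out in this order, using the inequalities~\eqref{eq:Mm} on $M$ and $m$ throughout. The key tool for quantitative comparisons is Lemma~\ref{lem:SviaMatrices} (equivalently Observation~\ref{obs:T_i}), which expresses the total weighted overlap of a single processor as $\frac34\vectP\cdot\vectI_k\cdot\vectP^{\T}-\frac12\xi(\vectL_k\text{-weighted outer products})$ since $p_i=w_i$; the dominant term is $\frac14(\sum \text{processing times on that processor})^2$ plus lower-order corrections, so the objective is, to leading order, a sum of squares of processor loads, which is maximized by balancing the \emph{number} of jobs but (subject to that) concentrating processing time — this tension is exactly what the gap conditions~\eqref{eq:Mm} are calibrated to resolve.

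First I would handle the cardinality issue. Each of the $3n$ jobs has processing time of order $\approx 4M$ (more precisely $b_i\approx 2M$ and $s_i,r_i\approx 2M$, all close to $2M$ up to an additive $O(m^2+b)$), so a processor with $q$ jobs has total load $\approx 2qM$ and contributes $\approx \frac14(2qM)^2=q^2M^2$ to leading order, while the per-processor correction terms and the $\frac34\vectP\cdot\vectI\cdot\vectP^{\T}$ terms are $O(qM\cdot(m^2+b))=o(M^2)$ by~\eqref{eq:Mm}. Since $\sum_l q_l=3n$ over $n$ processors and $\sum q_l^2$ is convex, any deviation from $q_l=3$ for all $l$ strictly increases $\sum q_l^2$ by at least $1$; but wait — we are \emph{maximizing}, and $\frac14(\text{load})^2$ enters the objective with a \emph{minus} sign in the $\vectL_k$ part? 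No: re-reading Lemma~\ref{lem:SviaMatrices}, $\tct{\cS}=\frac12\vectP\vectI\vectW^{\T}-\frac12\vectW\vectL\vectP^{\T}$, and for $p_i=w_i$ on a three-job block the computation in Lemma~\ref{lem:total} gives $\tct{\cS_l}=\frac34\vectP_l\vectI_3\vectP_l^{\T}-\frac12\xi(\vectA_l)$ with $\xi(\vectB_l)=\frac12(\text{load})^2$ appearing with a minus sign — so a \emph{larger} load actually \emph{decreases} the objective. Hence, to leading order, optimality wants the loads as \emph{balanced and small} as possible, i.e.\ to spread jobs over processors; but there are exactly $3n$ jobs and $n$ processors and every job must be somewhere, so the leading term $-\frac{1}{4}\cdot 2^{2}\sum_l(\text{normalized loads})$ is minimized in spread precisely when every processor gets the same number of jobs, which forces $q_l=3$. (If some processor were empty or had one or two jobs, another would have four or more, and $\sum q_l^2$ — hence the penalty — strictly increases.) The $O(M(m^2+b))$ lower-order terms cannot compensate for the $\Theta(M^2)$ gap, which is exactly why $M>7(m^2+b)$ is imposed.

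With $q_l=3$ fixed, I would next show each processor takes one job from each class. Since $\sum s_i + \sum b_i + \sum r_i$ splits as $n$ triples of total load, and the ``square penalty'' $\frac14\sum(\text{load}_l)^2$ subject to fixed total is minimized when the loads are equal, the optimum wants each load as close to the common average $\frac1n(\sum s_i+\sum b_i+\sum r_i)=4M+m+m^2+b$ as possible. A processor taking, say, two jobs from $C$ and one from $A$ has load $\approx 2(2M+2m^2)+2M+2m=6M+2m^2+2m$ versus the balanced $4M+m+m^2+b$; the discrepancy is $\Theta(m^2)$, and squaring, the penalty increase is $\Theta(Mm^2)$, which by~\eqref{eq:Mm} ($M>7(m^2+b)$, $m>\max\{b,6\}$) strictly dominates all remaining $O(Mb)$ slack — so no optimal schedule does this. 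This is the step where the precise constant $7$ and the role of $m$ versus $b$ matter, and it is \textbf{the main obstacle}: one must carefully bound the second-order terms (the $\xi(\vectC_l)$-type corrections and the $\frac34\vectP\vectI\vectP^{\T}$ terms, which themselves depend on class composition through the $a_i,b_i,c_i$) and verify the net change is strictly negative for every non-class-balanced composition, using nothing stronger than~\eqref{eq:Mm}.

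Finally, assuming every processor runs exactly one $i\in A$, one $j\in B$, one $k\in C$, I would fix the order to $(i,j,k)$. By Observation~\ref{obs:wW} / Lemma~\ref{lem:switching}, swapping two consecutive jobs $j_\ell,j_{\ell+1}$ with $p_\ell=w_\ell$ changes $\tct{}$ by $\frac14(p_{\ell+1}-p_\ell)(T_\ell-W_{\ell+1})$, and by Lemma~\ref{lem:Vshape} any optimal synchronized schedule on such (processing-time-inclusive) instances is $V$-shaped on each processor; but on three jobs a $V$-shape with one job from each class, where $p$-values satisfy $b_j\approx 2M < s_i,r_k\approx 2M+\Theta(m)$ or $\Theta(m^2)$ respectively — actually $b_j\approx 2M$ is the \emph{smallest}, $s_i\approx 2M+2m$ next, $r_k\approx 2M+2m^2$ largest — the $V$-shape forces the order to have the smallest in the middle, i.e.\ $(i,j,k)$ reading $s_i, b_j, r_k$ with $b_j$ in the middle: exactly the equitable order (here one must also check the instance is processing-time-inclusive so Lemmas~\ref{lem:switching} and~\ref{lem:Vshape} apply — this follows since all processing times are within a $\Theta(m^2)\ll M$ band around $2M$, so the makespan of any two of them is $<$ the third). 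Combining the three reductions, every optimal schedule is equitable. \qed
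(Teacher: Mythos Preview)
Your proposal has a genuine gap: the leading-order formula you rely on is incorrect. For a shared processor executing $q$ jobs, each of processing time approximately $2M$ and weight equal to processing time, the total weighted overlap is, by Observation~\ref{obs:T_i}, approximately $4M^2(1-2^{-q})$, \emph{not} a constant minus $\tfrac14(\text{load})^2$. The quantity $\xi(\mathbf{B}_l)=\tfrac12(a_i+b_j+c_k)^2$ appearing in Lemma~\ref{lem:total} is specific to $k=3$ and exploits the deliberate scaling $s_i=2a_i$, $r_k=2c_k$ to make the $\tfrac18$ and $\tfrac14$ entries of $\vectL_3$ collapse into a perfect square; it is \emph{not} the square of the processor load and does not generalize to other $k$. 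With the correct leading term, your step (1) can be rescued (concavity of $1-2^{-q}$ under $\sum_l q_l=3n$ forces $q_l=3$), but your step (2) collapses: once every processor carries three jobs the $M^2$-order contribution is $\tfrac72 M^2$ per processor regardless of which three, so no ``load-balancing'' argument at that order can distinguish class compositions. One must descend to the $O(Mm^2)$ and then $O(m^4)$ corrections, and the constants in~\eqref{eq:Mm} are calibrated for the paper's explicit exchange computations rather than for such an expansion.

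The paper instead proves each part by a concrete local exchange: moving one job from a processor with more than three jobs to one with fewer (computing the change via Observation~\ref{obs:T_i}), then swapping a first-position job between two processors whose class compositions differ (using Lemma~\ref{lem:SviaMatrices} to write $\tct{\cS'}-\tct{\cS}$ as a product of two factors, each shown positive via~\eqref{eq:Mm} and~\eqref{eq:ABC}). Your step (3) is also incomplete: $V$-shapedness only excludes the \emph{longest} job from the middle position, so besides $(i,j,k)$ and its reverse it still permits the monotone orderings $(j,i,k)$ and $(k,i,j)$; the paper eliminates these by a final direct comparison $\tct{\cS}-\tct{\cS'}=\tfrac{r_k}{8}(s_i-b_j)>0$, which you omit.
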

\begin{proof}
We first prove that each shared processor does exactly three jobs in any optimal schedule.
Suppose for a contradiction that this is not the case in some optimal schedule $\cS$.
Then there exist shared processors $\Mshared_{l'}$ and $\Mshared_{l}$ that execute $x'<3$ and $x>3$ jobs, respectively.
We obtain a new schedule $\cS'$ from $\cS$ by moving a job $j$ from the last position $x$ on $\Mshared_{l}$ to the last position $x'+1$ on $\Mshared_{l'}$.
Observe that the processing time of each job is at most $2(M+m^2+b)$ and thus the last job on $\Mshared_{l'}$ completes in $\cS$ by $\frac{3}{2}(M+m^2+b)$ and the shortest job in the instance is not shorter than $2M$, thus $\frac{3}{2}(M+m^2+b)<2M$ for $M>7(m^2+b)$ (as guaranteed by \eqref{eq:Mm}) and consequently the job $j$ is long enough to be executed in position $x'+1$ on $\Mshared_{l'}$.
Since the transition from $\cS$ to $\cS'$ does not affect the execution intervals of any job except for $j$, we obtain by Observation~\ref{obs:T_i}
\begin{equation} \label{eq:all3-1}
\tct{\cS}-\tct{\cS'} = w_j\left(\frac{T_{x'+1}'}{2}-\frac{T_x}{2}\right),
\end{equation}
where $T_{x'+1}'$ and $T_x$ are completion times of the last job in $\cS$ on processors $\Mshared_{l'}$ and $\Mshared_l$, respectively.
The maximum job processing time in $A\cup B\cup C$ does not exceed $2(M+m^2+b)$ and hence by Observation~\ref{obs:T_i}
\[
T_{x'+1}'\leq \sum_{\ell=1}^{x'}\frac{2(M+m^2+b)}{2^{x'+1-\ell}} = 2(M+m^2+b)\left(1-\frac{1}{2^{x'}}\right) \]
and the minimum job processing time of a job in $A\cup B\cup C$ is not less than $2M$ which gives
\[
T_x \geq \sum_{\ell=1}^{x-1}\frac{2M}{2^{x-\ell}} = 2M\left(1-\frac{1}{2^{x-1}}\right).\]
Thus, since $x'<3$ and $x'< x-1$
\[T_{x'+1}'-T_x \leq 2M\left(\frac{1}{2^{x-1}}-\frac{1}{2^{x'}}\right)+ 2(m^2+b)\left(1-\frac{1}{2^{x'}}\right)<
-2M\frac{1}{2^{x'+1}}+
2(m^2+b)\left(1-\frac{1}{2^3}\right).
\]
However,
\[-M \frac{1}{2^{x'+1}}+(m^2+b)
\left(1-\frac{1}{2^3}\right)<0
\]
for $M>7(m^2+b)$ (as guaranteed by \eqref{eq:Mm})
which gives $\tct{\cS}-\tct{\cS'}<0$ and contradicts the optimality of $\cS$.
Thus, we have proved that each shared processor executes exactly three jobs in each optimal schedule.

In the following we will often compare lengths of jobs from the sets $A$, $B$ and $C$.
In particular, by~\eqref{eq:Mm}, we have that for each $i,j,k\in\{1,\ldots,n\}$,
\begin{equation} \label{eq:ABC}
b_i \leq 2M+b \leq 2(M+m) < s_j < 2(M+m+b) \leq 2(M+m^2) < r_k.
\end{equation}
Informally, each job in $C$ is longer than any job in $A$, and each job in $A$ is longer than any job in $B$.

\medskip
We now prove that each shared processor does exactly one job from $C$.
Consider an optimal schedule $\cS$ in which some shared processor $\Mshared_l$ executes at least two jobs $i$ and $k$ from $C$.
Then, no jobs from $C$ are on another shared processor $\Mshared_{l'}$.
Without loss of generality we may assume due to~\eqref{eq:ABC} that $i$ and $k$ are the longest and the second longest jobs respectively on $\Mshared_l$.
Denote by $j$ the third job on $\Mshared_l$.
By Lemma~\ref {lem:Vshape}, an optimal schedule on $\Mshared_l$ is V-shaped.
Thus, the order of jobs on $\Mshared_l$ is either $(i,k,j)$, $(j,k,i)$, $(i,j,k)$ or $(k,j,i)$.
By Lemma~\ref{lem:reversability}, we can further reduce the number of cases to $(i, k, j)$ and $(i, j, k)$.
It can be easily checked, we omit details here, that the former order is not optimal  on $\Mshared_l$ since $r_k\geq q_j$, where $q_j$ is the processing time of the job $j$.
Thus, it suffices to consider the order $(i, j, k)$ on $\Mshared_l$.
Let $q_{i'},q_{j'},q_{k'}$ be the processing times of jobs scheduled with the ordering $(i',j',k')$ on $\Mshared_{l'}$.
By Lemma~\ref{lem:SviaMatrices}, we have $\tct{\cS}=\sigma-\xi(\mathbf{A})/2-\xi(\mathbf{A}')/2$, where
\[\mathbf{A}=\begin{bmatrix}
0 & \frac{1}{4} r_{i} q_{j} & \frac{1}{8} r_{i} r_{k} \\
\frac{1}{4}  q_{j} r_{i} & 0 & \frac{1}{4} q_{j} r_{k} \\
\frac{1}{8}  r_{k} r_{i} & \frac{1}{4} r_{k} q_{j} & 0
\end{bmatrix},
\quad
\mathbf{A}'=\begin{bmatrix}
0 & \frac{1}{4} q_{i'} q_{j'} & \frac{1}{8} q_{i'} q_{k'} \\
\frac{1}{4}  q_{j'} q_{i'} & 0 & \frac{1}{4} q_{j'} q_{k'} \\
\frac{1}{8}  q_{k'} q_{i'} & \frac{1}{4} q_{k'} q_{j'} & 0
\end{bmatrix}.
\]
and  $\sigma=\sum_{i\neq l, l'} \tct{\cS_i}+\frac{1}{2}\vectP_l\cdot\vectI_3\cdot\vectP_l^{\T}+\frac{1}{2}\vectP_{l'}\cdot\vectI_3\cdot\vectP_{l'}^{\T}$
in which we take $\cS_i$ to be the schedule on $\Mshared_i$ and the corresponding private processors assigned to jobs executed on $\Mshared_i$.
Consider the matrices
\[\mathbf{B}=\begin{bmatrix}
0 & \frac{1}{4} q_{i'}  q_{j} & \frac{1}{8} q_{i'} r_{k} \\
\frac{1}{4}  q_{j} q_{i'} & 0 & \frac{1}{4} q_{j} r_{k} \\
\frac{1}{8}  r_{k} q_{i'} & \frac{1}{4} r_{k} q_{j} & 0
\end{bmatrix},
\quad
\mathbf{B}'=\begin{bmatrix}
0 & \frac{1}{4} r_{i} q_{j'} & \frac{1}{8} r_{i} q_{k'} \\
\frac{1}{4} q_{j'} r_{i}  & 0 & \frac{1}{4} q_{j'} q_{k'} \\
\frac{1}{8} q_{k'} r_{i}  & \frac{1}{4} q_{k'} q_{j'} & 0
\end{bmatrix}.
\]
obtained from $\mathbf{A}$ and $\mathbf{A}'$, respectively, by exchanging $r_i$ and $q_{i'}$.
Thus, there exists a schedule $\cS'$ obtained from $\cS$ by exchanging  job $i$ on $\Mshared_l$ with job $i'$ on $\Mshared_{l'}$, $\tct{\cS'}=\sigma-\xi(\mathbf{B})/2-\xi(\mathbf{B}')/2$.
Observe that by~\eqref{eq:ABC}, in $\cS'$, the first jobs on $\Mshared_{l}$ and $\Mshared_{l'}$ complete by $(M+m^2+b)$, the second jobs on those processors  completes by $\frac{3}{2}(M+m^2+b)$, and moreover the shortest job in the instance is not shorter than $2M$, thus $\frac{3}{2}(M+m^2+b)<2M$ for $M>7(m^2+b)$ as guaranteed by~\eqref{eq:Mm} and consequently all jobs on $\Mshared_{l}$ and $\Mshared_{l'}$ are  long enough to be executed on $\Mshared_{l}$ and $\Mshared_{l'}$ after the exchange.
Therefore, $\cS'$ is feasible.
We have
\begin{equation*}
\tct{\cS'}-\tct{\cS}=\frac{1}{8} (r_{i} -q_{i'}) (r_{k} -q_{k'} +2 (q_{j} -q_{j'})).
\end{equation*}
Note that, by \eqref{eq:Mm} and~\eqref{eq:ABC}
\[r_{i} -q_{i'}\geq 2(M+m^2) - 2(M+m+b)=2(m^2-m-b)>0\]
and
\[r_{k} -q_{k'} +2 (q_{j} -q_{j'})\geq 2(M+m^2)-2(M+m+b) + 2(2M-2(M+m+b))=2(m^2-3m-3b)>0.\]
Thus, $\tct{\cS'}>\tct{\cS}$ which contradicts the optimality of $\cS$.
This proves that each shared processor executes exactly one job from the set $C$.

\medskip
Third, we prove that each shared processor does exactly one job from $A$.
Analogously as before, consider an optimal schedule $\cS$ in which some shared processor $\Mshared_l$ executes a job $k\in C$ 
and jobs 
$i,j\in A$ and some other shared processor $\Mshared_{l'}$ executes a job 
$k'\in C$ and no job from $A$ (thus, the two remaining jobs on that processor
$i',j'\in B$).
By~\eqref{eq:ABC}, the job $k$ is longer that the jobs $i$ and $j$, and similarly, the job $k'$ is longer than $i'$ and $j'$.
By Lemma~\ref{lem:Vshape}, the schedule $\cS$ is V-shaped and thus $k$ is the first or the last job on $\Mshared_l$ and $k'$ is the first or the last job on $\Mshared_{l'}$.
Furthermore, Lemma~\ref{lem:reversability} implies that we may without loss of generality assume that $k$ and $k'$ are the last jobs on $\Mshared_l$ and $\Mshared_{l'}$, respectively.
Let $i\in A$ be the first job on $\Mshared_l$, and $i'\in B$ be the first job on $\Mshared_{l'}$.
We have $\tct{\cS}=\sigma-\xi(\mathbf{A})/2-\xi(\mathbf{A}')/2$, where
\[\mathbf{A}=\begin{bmatrix}
0 & \frac{1}{4} s_{i} s_{j} & \frac{1}{8} s_{i} r_{k} \\
\frac{1}{4}  s_{j} s_{i} & 0 & \frac{1}{4} s_{j} r_{k} \\
\frac{1}{8}  r_{k} s_{i} & \frac{1}{4} r_{k} s_{j} & 0
\end{bmatrix},
\quad
\mathbf{A}'=\begin{bmatrix}
0 & \frac{1}{4} b_{i'} b_{j'} & \frac{1}{8} b_{i'} r_{k'} \\
\frac{1}{4} b_{j'}  b_{i'} & 0 & \frac{1}{4} b_{j'} r_{k'} \\
\frac{1}{8} r_{k'}  b_{i'}  & \frac{1}{4} r_{k'} b_{j'} & 0
\end{bmatrix}
\]
and $\sigma=\sum_{i\neq l, l'} \tct{\cS_i}+\frac{1}{2}\vectP_l\cdot\vectI_3\cdot\vectP_l^{\T}+\frac{1}{2}\vectP_{l'}\cdot\vectI_3\cdot\vectP_{l'}^{\T}$.
Obtain a schedule $\cS'$ by exchanging in $\cS$ the $i\in A$ from $\Mshared_l$ with the $i'\in B$ from $\Mshared_{l'}$.
Observe that the first jobs on $\Mshared_{l}$ and $\Mshared_{l'}$ complete by $(M+m^2+b)$ in $\cS'$, the second jobs on those processors  complete by $\frac{3}{2}(M+m^2+b)$, and moreover the shortest job in the instance is not shorter than $2M$, thus $\frac{3}{2}(M+m^2+b)<2M$ for $M>7(m^2+b)$, according to~\eqref{eq:Mm}, and consequently all jobs on $\Mshared_{l}$ and $\Mshared_{l'}$ are long enough to be executed on $\Mshared_{l}$ and $\Mshared_{l'}$ after the exchange.
This implies that $\cS'$ is feasible.
For the new schedule we have $\tct{\cS'}=\sigma-\xi(\mathbf{B})/2-\xi(\mathbf{B}')/2$, where
\[\mathbf{B}=\begin{bmatrix}
0 & \frac{1}{4} b_{i'} s_{j} & \frac{1}{8} b_{i'} r_{k} \\
\frac{1}{4} b_{i'} s_{j} & 0 & \frac{1}{4} s_{j} r_{k} \\
\frac{1}{8} b_{i'} r_{k} & \frac{1}{4} r_{k} s_{j} & 0
\end{bmatrix},
\quad
\mathbf{B}'=\begin{bmatrix}
0 & \frac{1}{4} s_{i} b_{j'} & \frac{1}{8} s_{i} r_{k'} \\
\frac{1}{4} s_{i} b_{j'} & 0 & \frac{1}{4} b_{j'} r_{k'} \\
\frac{1}{8} s_{i} r_{k'} & \frac{1}{4} r_{k'} b_{j'} & 0
\end{bmatrix}.
\]
Therefore,\begin{equation*}
\tct{\cS'}-\tct{\cS}=\frac{1}{8} (s_{i} -b_{i'}) (r_{k} -r_{k'} +2 (s_{j} -b_{j'})).
\end{equation*}
We have $s_i-b_{i'}\geq 2m-b>0$ and $r_{k} -r_{k'} +2 (s_{j} -b_{j'})\geq 4m-4b>0$, where both inequalities follow from~\eqref{eq:Mm} and~\eqref{eq:ABC}.
Therefore, we obtain $\tct{\cS'}>\tct{\cS}$ --- a contradiction.
This proves that each shared processor does exactly one job from each set $A ,B$ and $C$.

\medskip
It remains to argue that for three jobs $i\in A$, $j\in B$ and $k\in C$ scheduled on some shared processor $\Mshared_l$, their order on $\Mshared_l$ is $(i,j,k)$.
By Lemmas~\ref{lem:Vshape} and~\ref{lem:reversability} and~\eqref{eq:ABC}, possible orders in an optimal synchronized schedule are $(i,j,k)$, $(j,i,k)$, and take two schedules $\cS$ and $\cS'$ that execute the jobs in these orders, respectively.
We have by \eqref{eq:Mm}
\[\tct{\cS}-\tct{\cS'} = \frac{r_k}{8}(s_i-b_j)>0,\]
which completes the proof of the lemma.
\end{proof}

We conclude this section with its main result.
\begin{theorem} \label{thm:NPC}
The weighted multiple shared-processors problem \textup{WSMP} is strongly \textup{NP}-hard.
\end{theorem}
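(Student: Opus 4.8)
The plan is to assemble the theorem from the three lemmas already established, together with the strong NP-completeness of N3DM. First I would recall that N3DM is strongly NP-complete~\cite{GareyJohnson79}; because it is \emph{strongly} NP-complete we may assume that the input integers $x_i,y_i,z_i,b$ are bounded by a polynomial in $n$, which is exactly what we will need at the very end. Given such an input, I form the WMSM instance precisely as described above: $3n$ jobs split into $A,B,C$ with processing times (equal to weights) $s_i=2(M+m+x_i)$, $b_i=2M+y_i$, $r_i=2(M+m^2+z_i)$, together with $n$ shared processors, where $M$ and $m$ are any integers satisfying~\eqref{eq:Mm} (for instance $m=\max\{b,6\}+1$ and $M=7(m^2+b)+1$). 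The decision version of WSMP asks, given $\jobs$, the number of shared processors, and a bound $L$, whether there is a feasible schedule $\cS$ with $\tct{\cS}\geq L$; here we take $L=h(0,\ldots,0)$.

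Next I would prove the equivalence ``N3DM has a solution $\iff$ the constructed WSMP instance admits a schedule of value at least $L$''. For the forward direction, a solution $S_1,\ldots,S_n$ of N3DM produces, via the ($\Longrightarrow$) part of Lemma~\ref{lem:NPCiff}, an equitable schedule $\cS$ with $\tct{\cS}=h(0,\ldots,0)=L$, so in particular $\tct{\cS}\geq L$. For the converse, suppose some feasible schedule has value at least $L$, and let $\cS^{*}$ be an optimal schedule; then $\tct{\cS^{*}}\geq L$ as well. By Lemma~\ref{lem:equitable}, $\cS^{*}$ is equitable, so the ($\Longleftarrow$) part of Lemma~\ref{lem:NPCiff} applies and yields that $X,Y,Z,b$ is a yes-instance of N3DM. (Lemma~\ref{lem:total} is invoked inside Lemma~\ref{lem:NPCiff}, so nothing further is needed here.) This gives a polynomial-time many-one reduction from N3DM to the decision version of WSMP.

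Finally I would verify the ``strong'' part, i.e.\ that the reduction is a pseudo-polynomial transformation: every number appearing in the constructed instance --- $M$, $m$, all processing times and weights, which are $O(M+m^{2}+\max_i\{x_i,y_i,z_i,b\})$, and the bound $L=h(0,\ldots,0)$, which is a sum of $O(n)$ terms each of magnitude $O((M+m^{2}+b)^{2})$ --- is bounded by a polynomial in the length of the N3DM input, precisely because $x_i,y_i,z_i,b$ are. Applying this transformation to the strongly NP-complete N3DM therefore shows that WSMP is NP-hard even when restricted to instances whose numeric parameters are polynomially bounded, which is the definition of strong NP-hardness. The one place calling for care is exactly this bookkeeping: choosing $M$ and $m$ small (subject to~\eqref{eq:Mm}) so that all magnitudes stay polynomial --- without it the argument would only give ordinary NP-hardness. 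All the genuinely combinatorial content has already been done in Lemmas~\ref{lem:total}, \ref{lem:NPCiff} and~\ref{lem:equitable}, so the proof of the theorem itself is short.
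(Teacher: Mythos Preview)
Your proposal is correct and follows precisely the paper's own route: invoke the strong NP-completeness of N3DM, combine Lemmas~\ref{lem:NPCiff} and~\ref{lem:equitable} to obtain the reduction, and then check that $M$, $m$, the processing times/weights, and the threshold $h(0,\ldots,0)$ are all polynomially bounded in the input. The only additions you make over the paper's terse proof are to spell out the passage from an arbitrary feasible schedule of value $\geq L$ to an optimal (hence, by Lemma~\ref{lem:equitable}, equitable) one before applying Lemma~\ref{lem:NPCiff}, and to note explicitly that the bound $L$ itself stays polynomial---both welcome clarifications.
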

\begin{proof}
Note that the weights and the processing times of jobs in our reduction are bounded by $O(M+m^2+b)$.
By \eqref{eq:Mm}, $M$ and $m$ are polynomially bounded by $b$ and the value of $b$ is bounded by a polynomial in $n$ since the \textup{N3DM} problem is strongly NP-hard.
Thus, the theorem follows from Lemmas~\ref{lem:NPCiff} and~\ref{lem:equitable}.
\end{proof}

\section{An $O(n\log n)$ algorithm for equal weights}

This section gives an $O(n\log n)$ optimization algorithm for the \textup{WSMP} problem with equal weights, i.e. $w_{i}=w$ for $i\in \jobs$.
Without loss of generality we assume $w=1$ for convenience. We begin with the following result of \cite{VairaktarakisAydinliyim07} for a single shared processor non-preemptive problem and extended to preemptive one in \cite{HK15}.

\begin{lemma}[\cite{HK15,VairaktarakisAydinliyim07}]
\label{lem:HK15}
If jobs $1,\ldots, n$ with unit weights and processing times $p_1,\ldots,p_n$, respectively, are executed on a shared processor in an optimal synchronized schedule in the order $1,\ldots,n$,
then $p_1\leq p_2\leq \cdots \leq p_n$ and the total weighted overlap equals
\[ \sum_{i=1}^n \bar{t_i}=\frac{p_n}{2} + \frac{p_{n-1}}{4} + \cdots + \frac{p_1}{2^n}.\]
\qed
\end{lemma}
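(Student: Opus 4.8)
The plan is to derive both claims from the general machinery already established in the paper, specialized to unit weights and a single shared processor. First I would invoke Lemma~\ref{lem:synchronized} to know that there is an optimal synchronized schedule, and Observation~\ref{lem:ordered} (or rather the fact that synchronized schedules are ordered) together with Observation~\ref{obs:T_i} to get the closed form of the total overlap once an ordering $j_1,\ldots,j_n$ on the shared processor is fixed. The key point is that by Observation~\ref{obs:T_i}, if jobs are run in order $j_1,\ldots,j_n$ with processing times $q_1,\ldots,q_n$, the total overlap on the processor is $\sum_{i=1}^n \bar t_i$ where $\bar t_i = q_i/2 - T_i/2$ and $T_i = \sum_{\ell=1}^{i-1} q_\ell/2^{i-\ell}$; a short telescoping/rearrangement (identical to the computation in the proof of Lemma~\ref{lem:SviaMatrices}, but with $\vectW = \vectP$ replaced by the all-ones vector) collapses this to $\sum_{i=1}^n q_i/2^{\,n-i+1} = q_n/2 + q_{n-1}/4 + \cdots + q_1/2^n$. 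So once we know the optimal order is the ascending one, the formula drops out immediately with $q_i = p_i$.

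It therefore remains to prove that in an optimal synchronized schedule the jobs appear in ascending order of processing time, i.e. $p_1 \le \cdots \le p_n$. The cleanest route is an exchange argument via Lemma~\ref{lem:switching} specialized to $w_i = 1$: swapping two consecutive jobs $j_i, j_{i+1}$ changes the total overlap by
\[
\tct{\cS} - \tct{\cS'} = \frac{(w_{i+1}-w_i)T_i}{4} + \frac{(p_i - p_{i+1})W_{i+1}}{4} + \frac{w_i p_{i+1}}{4} - \frac{w_{i+1}p_i}{4},
\]
which with $w_i = w_{i+1} = 1$ becomes $\bigl(\tct{\cS}-\tct{\cS'}\bigr) = \tfrac14(p_i - p_{i+1})(W_{i+1} - 1) + \tfrac14(p_{i+1}-p_i) = \tfrac14(p_{i+1}-p_i)(1 - W_{i+1})$ — wait, I should be careful to recompute this sign honestly rather than trust the rearrangement; the upshot will be that the difference is a constant times $(p_{i+1}-p_i)$ with a coefficient that is nonnegative in an optimal schedule (using $W_{i+1} \le W_0 \le w_1 = 1$ from Observations~\ref{obs:wW} and~\ref{obs:TW}). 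Hence if $p_i > p_{i+1}$ for some consecutive pair, swapping them does not decrease the total overlap, and after finitely many swaps we reach an optimal schedule in ascending order. One subtlety: Lemma~\ref{lem:switching} is stated for processing-time-inclusive instances, so I would either note that a single shared processor with unit weights can be argued directly (the equal-weight case makes the ``is the later job too short'' obstruction moot in the relevant direction, since moving a shorter job later only makes more room), or re-derive the needed inequality from first principles using Observation~\ref{obs:T_i} without appealing to the inclusivity hypothesis.

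The main obstacle is exactly this last point: making the exchange argument airtight without the processing-time-inclusive assumption that Lemma~\ref{lem:switching} relies on. With unit weights this should be benign — when we move job $j_{i+1}$ (the shorter one, say) into position $i$ and $j_i$ into position $i+1$, the later slot is occupied by the \emph{longer} job, and Observation~\ref{obs:T_i} shows the start time $T_{i+1}$ of that slot is determined by the preceding processing times only, so feasibility ($q > T_{i+1}$) is easier to satisfy for the longer job than it was for $j_{i+1}$; the reverse direction (longer job earlier) is the one we never need. I would spell out this feasibility check carefully, and also confirm that the schedule stays synchronized after the swap (it does, since Observation~\ref{obs:T_i} gives a synchronized schedule for any feasible ordering). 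Everything else — the algebra collapsing $\sum \bar t_i$ and the monotone termination of the bubble-sort exchanges — is routine.
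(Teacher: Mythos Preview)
The paper does not prove this lemma; it is quoted from \cite{HK15,VairaktarakisAydinliyim07} and closed with a bare \qed, so there is no original argument to compare against. Your derivation of the overlap formula is fine: since $\bar t_i = T_{i+1}-T_i$, the sum telescopes to $T_{n+1} = \sum_{\ell=1}^n p_\ell/2^{n+1-\ell}$ straight from Observation~\ref{obs:T_i}.

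The ordering argument, however, has a real gap. After swapping an adjacent inversion $p_i > p_{i+1}$ you check feasibility only at positions $i$ and $i+1$. You do not check positions $\ell\ge i+2$, and those can fail: placing the longer job in slot $i+1$ raises every later start time by $T'_\ell - T_\ell = (p_i-p_{i+1})/2^{\ell-i} > 0$ (this is exactly the computation in the proof of Lemma~\ref{lem:switching}), and nothing in your hypotheses prevents $p_\ell \le T'_\ell$. Concretely, the order $(q_1,q_2,q_3)=(3,2,1.8)$ is a feasible synchronized schedule, but swapping the first two gives $T'_3 = 2 > 1.8$, so $\cS'$ is infeasible. Hence Lemma~\ref{lem:switching} genuinely needs its processing-time-inclusive hypothesis, and your ``moving a shorter job later only makes more room'' remark addresses the wrong slots.

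The clean fix is to drop the bubble sort. Once you have the closed form $\tct{\cS}=\sum_\ell q_\ell/2^{n+1-\ell}$, the rearrangement inequality says this is maximized (strictly, up to ties) by the non-decreasing order of the $q_\ell$, and that ascending order is always feasible since $T_i \le q_{i-1}\bigl(1-2^{-(i-1)}\bigr) < q_{i-1} \le q_i$. Comparing the given optimal $\cS$ directly to the ascending-order schedule yields $p_1\le\cdots\le p_n$ in one step, with no swap-feasibility issue at all.
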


This hints at the following 
algorithm for the \textup{WSMP} problem with unit weights. Take the following sequence of positional weights:%
\begin{equation} \label{position}
\underbrace{\frac{1}{2},\ldots,\frac{1}{2}}_{m-times},\underbrace{\frac{1}{4}%
,\ldots,\frac{1}{4}}_{m-times},\ldots,\underbrace{\frac{1}{2^{k}},\ldots,\frac{1}{%
2^{k}}}_{r-times}
\end{equation}
where 
\begin{equation}\label{a}
k=\left\lceil \frac{n}{m}\right\rceil \text{ and } r=n-\left\lfloor \frac{n}{m} \right\rfloor m,
\end{equation}
and order the jobs in descending order of their processing times so that%
\begin{equation} \label{proctime}
p_{n}\geq \cdots\geq p_{1}. 
\end{equation}
Match the $i$-th positional weight from the left in the sequence \eqref{position} with the the $i$-th job form the left in the sequence \eqref{proctime} for $i\in\{1,\ldots,n\}$. Partition the set of jobs
$\jobs$ into $m$ disjoint subsets 
\[
\jobs_{1},\ldots,\jobs_{m} 
\]%
so that any two jobs in a subset are matched with different positional weights. Thus, in each subset we have exactly one job matched with $\frac{1}{2}$, exactly one with $\frac{1}{4}$, \ldots, and exactly one matched with $\frac{1}{2^{k-1}}$. Moreover, there are exactly $0\leq r<m$ subsets with exactly one job matched with $\frac{1}{2^{k}}$ in each. Without loss of generality we may assume that these $r$ sets are $\jobs_{1},\ldots,\jobs_{r}$. Finally, schedule the jobs from $\jobs_{\ell}$ on shared processor $\mathcal{M}_{\ell}$ in ascending order of their processing times for each $\ell\in\{1,\ldots,m\}$. Let the resulting synchronized schedule be $\bar{\cS}$, and let $\bar{\mathcal{S}}_{\ell}$ be the synchronized schedule for $\Mshared_{\ell}$ and the private processors of jobs executed on $\Mshared_{\ell}$ for each $\ell\in\{1,\ldots,m\}$. From  Lemma~\ref{lem:HK15} and this algorithm we immediately obtain.

\begin{lemma}\label{S}
It holds that
\[\tct{\bar{\cS}}=\sum_{\ell=1}^m \tct{\bar{\cS}_{\ell}}=\sum_{\ell=1}^r \sum _{i=1}^{\lceil \frac{n}{m} \rceil} \frac{p^{\ell}_{i}}{2^{\lceil \frac{n}{m} \rceil+1-i}}+ \sum_{\ell=r+1}^{m} \sum _{i=1}^{\lfloor \frac{n}{m} \rfloor} \frac{p^{\ell}_{i}}{2^{\lfloor \frac{n}{m} \rfloor+1-i}},\]
where $\jobs_{\ell}=\{p^{\ell}_{i}\st i\in\{1,\ldots,|\jobs_{\ell}|\}\}$ and $p^{\ell}_1\leq\cdots\leq p^{\ell}_{|\jobs_{\ell}|}$ for each $\ell\in\{1,\ldots,m\}$.
\qed
\end{lemma}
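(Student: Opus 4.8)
The plan is to read Lemma~\ref{S} exactly as the preceding remark suggests: as a routine consequence of Lemma~\ref{lem:HK15} applied to each shared processor on its own. Concretely I would (i) split $\tct{\bar{\cS}}$ into the per-processor contributions $\tct{\bar{\cS}_{\ell}}$, (ii) evaluate each $\tct{\bar{\cS}_{\ell}}$ by Lemma~\ref{lem:HK15}, and (iii) insert the sizes $|\jobs_{\ell}|$ produced by the matching step of the algorithm.

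For (i), write $\tct{\bar{\cS}}=\sum_{j\in\jobs}t_jw_j$ and note that the overlap $t_j$ of a job $j$ is determined solely by how $j$ is executed on the shared processor it is assigned to and on its private processor. The algorithm places the jobs of $\jobs_{\ell}$ on $\Mshared_{\ell}$ and the sets $\jobs_1,\ldots,\jobs_m$ partition $\jobs$, so regrouping the sum by shared processor gives $\tct{\bar{\cS}}=\sum_{\ell=1}^m\sum_{j\in\jobs_{\ell}}t_jw_j=\sum_{\ell=1}^m\tct{\bar{\cS}_{\ell}}$. One should first check that each $\bar{\cS}_{\ell}$ really is a well-defined synchronized schedule: running $\jobs_{\ell}$ on $\Mshared_{\ell}$ in ascending order of processing times $p^{\ell}_1\le\cdots\le p^{\ell}_{|\jobs_{\ell}|}$ is always feasible, because by Observation~\ref{obs:T_i} the start time of the $i$-th job satisfies $T_i\le p^{\ell}_i\bigl(1-2^{-(i-1)}\bigr)<p^{\ell}_i$, whence $\bar{t_i}=(p^{\ell}_i-T_i)/2>0$ and every job indeed occupies a nonempty interval on $\Mshared_{\ell}$.

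For (ii), each $\bar{\cS}_{\ell}$ is an ascending-order schedule of $|\jobs_{\ell}|$ unit-weight jobs on a single shared processor, so Lemma~\ref{lem:HK15} (whose overlap formula is in any case just $T_{|\jobs_{\ell}|+1}=\sum_i\bar{t_i}$ by Observation~\ref{obs:T_i}) gives $\tct{\bar{\cS}_{\ell}}=\sum_{i=1}^{|\jobs_{\ell}|}p^{\ell}_i/2^{|\jobs_{\ell}|+1-i}$. For (iii), the positional-weight list~\eqref{position} consists of $\lceil n/m\rceil-1$ values each appearing $m$ times together with the single value $1/2^{\lceil n/m\rceil}$ appearing $r$ times; since the partition assigns exactly one job of each distinct positional weight to each $\jobs_{\ell}$ and the $r$ jobs matched with $1/2^{\lceil n/m\rceil}$ may be taken to lie in $\jobs_1,\ldots,\jobs_r$, we obtain $|\jobs_{\ell}|=\lceil n/m\rceil$ for $\ell\le r$ and $|\jobs_{\ell}|=\lceil n/m\rceil-1=\lfloor n/m\rfloor$ for $\ell>r$ (the degenerate case $m\mid n$, where $r=0$ and every $|\jobs_{\ell}|=n/m$, being subsumed). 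Substituting these two values of $|\jobs_{\ell}|$ into the formula from (ii) and summing over $\ell\in\{1,\ldots,m\}$ produces precisely the right-hand side stated in the lemma. Steps (i) and (ii) are mechanical, so the only line that requires any care — the "main obstacle", in the mild sense appropriate to a lemma asserted to follow immediately — is the index bookkeeping in (iii): counting how many processors carry $\lceil n/m\rceil$ jobs versus $\lfloor n/m\rfloor$ jobs and verifying consistency when $m$ divides $n$.
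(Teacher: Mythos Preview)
Your proposal is correct and matches the paper's approach: the paper presents Lemma~\ref{S} as an immediate consequence of Lemma~\ref{lem:HK15} applied processor by processor, with no separate proof given, and you have simply spelled out that argument in full (including the feasibility check via Observation~\ref{obs:T_i} and the $m\mid n$ edge case, which are more detail than the paper supplies).
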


It remains to show that  $\bar{\cS}$ is optimal.
We start by noting that in optimal schedules for the \textup{WSMP} with unit weights each job executes on some shared processor.
\begin{lemma} \label{lem:unit-all}
Let $\cS$ be an optimal schedule for a set of jobs $\jobs$ with unit weights and $m$ shared processors.
Then, each job executes on some shared processor.
\end{lemma}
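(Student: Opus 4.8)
The plan is to argue by contradiction: suppose $\cS$ is an optimal schedule in which some job $j$ executes entirely on its private processor $\Mpriv_j$, and derive a contradiction by exhibiting a feasible schedule of strictly larger total overlap. By Lemma~\ref{lem:synchronized} (and Observations~\ref{obs:normal}--\ref{lem:ordered}) we may assume $\cS$ is synchronized; note that with unit weights the total weighted overlap is just $\sum_{j}t_j$, so we simply need to increase the total overlap. Since each job executes on its private processor in $(0,\complTime{\cS}{j}{\Mpriv})$, and $j$ with $\complTime{\cS}{j}{\Mshared}=0$ has $\complTime{\cS}{j}{\Mpriv}=p_j$, the idea is to insert $j$ onto some shared processor — either one that is currently idle throughout, or by splicing $j$ in at the \emph{front} of the job sequence on some shared processor $\Mshared_r$.

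First I would dispose of the easy case: if some shared processor $\Mshared_r$ executes no job at all in $\cS$, then we can let $j$ run on $\Mshared_r$ in a single interval and, by the synchronized construction of Observation~\ref{obs:T_i} applied to the one-job sequence $(j)$, the job completes at $p_j/2$ on both processors, contributing overlap $p_j/2>0$; this strictly increases the total overlap, contradicting optimality. So we may assume every shared processor already runs at least one job. Now pick any shared processor $\Mshared_r$ with current job order $j_1,\ldots,j_k$ ($k\geq 1$) and consider prepending $j$ to obtain the order $j,j_1,\ldots,j_k$. The key sub-step is to check that this is feasible, i.e. that every job in the new sequence is long enough to be executed in its assigned slot on $\Mshared_r$: by Observation~\ref{obs:T_i} the threshold $T_i$ for a job in position $i$ only \emph{decreases} when we shift all of $j_1,\ldots,j_k$ one position to the right and halve-and-shift the preceding prefix sums, because inserting a new first element strictly shrinks each partial sum $T_i$ (each $T_i$ for the shifted jobs becomes $(p_j + \text{old contributions})/2^{\,\cdot}$, which one verifies is smaller than the old $T_i$); and for $j$ itself in position $1$ the threshold is $T_1=0<p_j$. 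Hence the reshuffled schedule is a legitimate synchronized schedule $\cS'$.

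Finally I would compute $\tct{\cS'}-\tct{\cS}$ using Lemma~\ref{lem:HK15} or Observation~\ref{obs:T_i}: prepending $j$ adds its own overlap $\bar t$ (with $\bar t = p_j/2 - T_1/2 = p_j/2 >0$ in the new first position — more precisely $\bar t = (p_j - 0)/2$) while the overlaps $\bar t_i$ of the jobs $j_1,\ldots,j_k$ all only decrease by controlled amounts; but since every $T_i$ strictly decreases and $\bar t_i = p_{j_i}/2 - T_i/2$, each old job's overlap in fact \emph{increases}, so the total strictly increases. (Even in the crude bookkeeping where one worries the shifted jobs lose overlap, the net change is $\tfrac{p_j}{2} - \tfrac12\sum_{\ell} \tfrac{p_{j_\ell}}{2^{\ell}} + (\text{gain from shrunk }T_i)$, and one checks directly from the closed forms that this is positive.) This contradicts optimality of $\cS$, so no job can sit entirely on its private processor. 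The main obstacle is the feasibility check — verifying that prepending a job never violates the "job long enough for its slot" constraint on the shared processor — and I expect to handle it by the monotonicity of the $T_i$ quantities under prepending, which follows directly from the recursion $T_{i+1} = (p_i + T_i)/2$ in Observation~\ref{obs:T_i}.
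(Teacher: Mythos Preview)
Your argument has a genuine gap at exactly the point you flagged as the ``main obstacle'': the feasibility check after prepending is incorrect, and in fact the opposite of what you claim is true.

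Concretely, suppose $j_1,\ldots,j_k$ are on $\Mshared_r$ with thresholds $T_1,\ldots,T_k$ (so $p_{j_\ell}>T_\ell$ for each $\ell$). After you prepend $j$ with processing time $p_j$, the job $j_\ell$ moves to position $\ell+1$, and by Observation~\ref{obs:T_i} its new threshold is
\[
T_{\ell+1}' \;=\; \frac{p_j}{2^{\ell}} + \sum_{t=1}^{\ell-1}\frac{p_{j_t}}{2^{\ell-t}} \;=\; T_\ell + \frac{p_j}{2^{\ell}}.
\]
So the threshold of every shifted job \emph{increases}, not decreases. Feasibility now requires $p_{j_\ell} > T_\ell + p_j/2^{\ell}$, which can easily fail: e.g.\ with $k=1$, $p_{j_1}=1$, and $p_j=10$, the new threshold for $j_1$ is $5>1$. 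Your monotonicity claim appears to compare the threshold at a fixed \emph{position} rather than the threshold seen by a fixed \emph{job}; the latter is what matters for feasibility.

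The same sign error undermines the overlap step: since $T_{\ell+1}'>T_\ell$, each old job's overlap \emph{decreases} by $p_j/2^{\ell+1}$, not increases. (It is true that, \emph{when feasible}, the net change is still $p_j/2 - \sum_{\ell=1}^k p_j/2^{\ell+1} = p_j/2^{k+1}>0$, but this is not what you wrote, and in any case the feasibility problem remains.)

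The paper avoids all of this by inserting $j$ not at the front but at its correct position in the nondecreasing processing-time order on $\Mshared_\ell$ (using Lemma~\ref{lem:HK15}). Because the resulting sequence is still in SPT order, feasibility is automatic: with $p_1\le\cdots\le p_k$ one has $T_i\le p_i(1-2^{-(i-1)})<p_i$. The overlap gain is then read off directly from the closed form in Lemma~\ref{lem:HK15} (the jobs after the insertion point keep their positional weights, and the jobs up to position $i$ together with $j$ telescope to a strictly positive quantity). Inserting at the sorted position is the missing idea.
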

\begin{proof}
Suppose that some schedule $\cS$ does not satisfy the lemma.
Take an arbitrary shared processor $\Mshared_{\ell}$ and any job $j\in\jobs$ with processing time $p$ that executes entirely on its private processor $\Mpriv_j$.
Suppose that jobs $j_1,\ldots,j_k$ with processing times $p_1,\ldots,p_k$, respectively, execute on $\Mshared_{\ell}$ in $\cS$ in this order.
By Lemma~\ref{lem:HK15}, $p_1\leq\cdots\leq p_k$.
Take the maximum $i\in\{1,\ldots,k\}$ such that $p_i\leq p$. If $p<p_1$, then take $i=0$.
Consider a synchronized schedule $\cS'$ that is identical to $\cS$ on all shared processors different than $\Mshared_{\ell}$ and executes jobs $j_1,\ldots,j_i,j,j_{i+1},\ldots,j_k$ (with processing times $p_1,\ldots,p_i$, $p,p_{i+1},\ldots,p_k$ respectively), in this order, on $\Mshared_{\ell}$.
Due to the choice of $i$, $\cS'$ is feasible and by Lemma~\ref{S} for $i\geq1$
\begin{eqnarray*}
\tct{\cS'}-\tct{\cS} & = & \sum_{\ell=1}^{i}\frac{p_{\ell}}{2^{k+2-\ell	}}+\frac{p}{2^{k+1-i}} - \sum_{\ell=1}^{i}\frac{p_{\ell}}{2^{k+1-\ell}} \\
                     & = & \frac{p_1}{2^{k+1}} + \sum_{\ell=2}^{i}\frac{p_{\ell}-p_{\ell-1}}{2^{k+2-\ell}} + \left(\frac{p-p_i}{2^{k+1-i}}\right).
\end{eqnarray*}
Since $p_1\leq\cdots\leq p_i\leq p$, we obtain that $\tct{\cS'}-\tct{\cS}\geq p_1/2^{k+1}>0$, which implies that $\cS$ is not optimal and completes the proof for $i\geq 1$. Finally, 
\begin{eqnarray*}
\tct{\cS'}-\tct{\cS} & = & \frac{p}{2^{k+1}},  
\end{eqnarray*}
for $i=0$ which implies that $\cS$ is not optimal and completes the proof.
\end{proof}
\begin{lemma} \label{opt}
$\bar{\mathcal{S}}$ is optimal.
\end{lemma}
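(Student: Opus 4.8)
The plan is to show $\tct{\bar\cS}\ge\tct{\cS}$ for every feasible schedule $\cS$; since $\bar\cS$ is feasible, this gives its optimality. By Lemma~\ref{lem:synchronized} it suffices to consider an optimal \emph{synchronized} schedule $\cS$, and then by Lemma~\ref{lem:unit-all} every job of $\cS$ executes on some shared processor. Let $n_{\ell}\ge 0$ be the number of jobs $\cS$ runs on $\Mshared_{\ell}$, so that $\sum_{\ell=1}^{m}n_{\ell}=n$. First I would read off the coefficient that each job contributes to the objective. By Observation~\ref{obs:T_i}, since all weights equal $1$, the overlaps of the jobs on $\Mshared_{\ell}$ telescope and their sum equals $T_{n_{\ell}+1}=\sum_{i=1}^{n_{\ell}}p_{j_i}/2^{\,n_{\ell}+1-i}$, where $j_1,\ldots,j_{n_{\ell}}$ is the order of the jobs on $\Mshared_{\ell}$; in other words, each job $j$ contributes $\gamma_j\,p_j$, where $\gamma_j=2^{-t}$ and $t$ is the position of $j$ counted from the \emph{end} of its processor, and the coefficients occurring on $\Mshared_{\ell}$ are precisely $2^{-1},2^{-2},\ldots,2^{-n_{\ell}}$ (one of each). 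Thus $\tct{\cS}=\sum_{j\in\jobs}\gamma_j\,p_j$, and, writing $c_t$ for the number of jobs $j$ with $\gamma_j=2^{-t}$ (equivalently $c_t=|\{\ell:n_{\ell}\ge t\}|$), we have $c_1\ge c_2\ge\cdots$, $c_t\le m$ for every $t$, and $\sum_{t\ge 1}c_t=n$.

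Two routine steps then finish the reduction. By the rearrangement inequality, $\sum_{j\in\jobs}\gamma_j\,p_j\le\sum_{i=1}^{n}\gamma_{(i)}P_i$, where $\gamma_{(1)}\ge\gamma_{(2)}\ge\cdots\ge\gamma_{(n)}$ is the multiset $\{\gamma_j\}$ arranged in nonincreasing order and $P_1\ge P_2\ge\cdots\ge P_n$ are the processing times in nonincreasing order. On the other hand, by Lemma~\ref{S} together with the construction of $\bar\cS$ we have $\tct{\bar\cS}=\sum_{i=1}^{n}\omega_i P_i$, where $\omega_i=2^{-\lceil i/m\rceil}$ is the $i$-th positional weight in the sequence~\eqref{position}: indeed $\bar\cS$ places the job matched to the positional weight $2^{-t}$ in position $t$ from the end of its processor, so that it contributes $2^{-t}$ times its processing time, while the $i$-th positional weight $\omega_i$ was matched to the $i$-th longest job, of processing time $P_i$. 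Since the $P_i$ are positive, it now suffices to prove that $\gamma_{(i)}\le\omega_i$ for every $i\in\{1,\ldots,n\}$.

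This last, termwise, inequality is the heart of the matter. Fix $i$ and let $t$ be the least index with $\sum_{s=1}^{t}c_s\ge i$, so that $\gamma_{(i)}=2^{-t}$. Because $c_s\le m$ for all $s$, we have $i\le\sum_{s=1}^{t}c_s\le mt$, hence $\lceil i/m\rceil\le t$, and therefore $\omega_i=2^{-\lceil i/m\rceil}\ge 2^{-t}=\gamma_{(i)}$. Combining, $\tct{\cS}\le\sum_{i=1}^{n}\gamma_{(i)}P_i\le\sum_{i=1}^{n}\omega_i P_i=\tct{\bar\cS}$, so $\bar\cS$ is optimal.

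The only real obstacle is the inequality $\gamma_{(i)}\le\omega_i$. The two multisets of coefficients have the same sum $n$ and both are nonincreasing once sorted, yet termwise domination is not automatic for such multisets; it holds here only because at most $m$ jobs can occupy the last position of a shared processor, at most $m$ the second-to-last, and so on, i.e.\ $c_t\le m$ --- and the algorithm's choice of spreading the jobs as evenly as possible over all $m$ shared processors realizes exactly the multiset whose every partial sum attains the bound $mt$. Everything else --- Observation~\ref{obs:T_i}, the rearrangement inequality, and Lemma~\ref{S} --- is bookkeeping.
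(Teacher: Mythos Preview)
Your proof is correct and follows essentially the same route as the paper: write $\tct{\cS}$ as $\sum_j\gamma_j p_j$, apply the rearrangement inequality to reduce to the sorted pairing, and then compare the two coefficient multisets termwise. Your direct argument for $\gamma_{(i)}\le\omega_i$ via $c_t\le m$ (hence $i\le mt$, hence $\lceil i/m\rceil\le t$) is in fact a cleaner version of what the paper does by transforming $\alpha$ into $\alpha'$ and observing that $\alpha'$ is a permutation of $\beta$; both establish exactly the same termwise domination of the sorted coefficient vectors.
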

\begin{proof}
Let $\mathcal{S'}$ be a synchronized schedule with jobs $\jobs'_1, \ldots, \jobs'_m$ on shared processors $\mathcal{M}_{1}, \ldots, \mathcal{M}_{m}$, respectively.
Denote $n_{\ell}=|\jobs'_{\ell}|$ and denote the processing times of jobs in $\jobs'_{\ell}$ by $q_1^{\ell} \leq\cdots\leq q_{n_{\ell}}^{\ell}$, $\ell\in\{1,\ldots,m\}$.
We have by Lemma~\ref{lem:HK15},
\[\tct{\cS'}\leq \sum_{\ell=1}^{m}\sum_{i=1}^{n_{\ell}}\frac{q_{i}^{\ell}}{2^{n_{\ell}+1-i}}.\]
We will argue that 
\begin{equation} \label{eq}
\sum_{\ell=1}^{m}\sum_{i=1}^{n_{\ell}}\frac{q_{i}^{\ell}}{2^{n_{\ell}+1-i}} \leq \sum_{\ell=1}^r \sum _{i=1}^{\lceil \frac{n}{m} \rceil} \frac{p^{\ell}_{i}}{2^{\lceil \frac{n}{m} \rceil+1-i}}+ \sum_{\ell=r+1}^{m} \sum _{i=1}^{\lfloor \frac{n}{m} \rfloor} \frac{p^{\ell}_{i}}{2^{\lfloor \frac{n}{m} \rfloor+1-i}}
\end{equation}
which by Lemma \ref{S} proves the lemma. To prove inequality \eqref{eq} take the positional weights of the left hand side of \eqref{eq} in the non-ascending order. Let them make a vector $\vecta$. By Lemma~\ref{lem:unit-all}, the length of $\vecta$ is $n$.
We obtain a vector $\vecta'$ as follows.
Initially set $\vecta':=\vecta$ and perform the following action as long as possible.
Find the minimum $i\in\{1,\ldots,k-1\}$ (recall (\ref{a}) for definition of $k$ and $r$) such that the value $1/2^i$ appears less than $m$ times in $\vecta'$, and replace any entry of $\vecta'$ with value less than $1/2^i$ with the value $1/2^i$.
Finally, any value less than $1/2^k$ replace in $\vecta'$ with $1/2^k$.
Clearly, $1/2^i$ appears exactly $m$ times in $\vecta'$ for each $i\in\{1,\ldots,k-1\}$, and $1/2^k$ appears exactly $r$ times in $\vecta'$.

Take the positional weights of the right hand side of (\ref{eq}) in the non-ascending order. Let them make a vector $\vectb$.
We observe that $\vecta'$ is a permutation of $\vectb$ and thus we
can readily show that
\[\sum_{i=1}^{\ell} \vecta_i\leq\sum_{i=1}^{\ell} \vecta_i'\leq \sum_{i=1}^{\ell} \vectb_i\]
for each $\ell\in\{1,\ldots,n\}$. Hence the inequality (\ref{eq}) holds by the rearrangement inequality of Hardy-Littlewood-Polya \cite{HLP}.
\end{proof}

We conclude this section with the following result.
\begin{theorem}
For any set of jobs $\jobs$ with equal weights and arbitrary processing times and $m\geq 1$ shared processors, the schedule $\bar{\cS}$ is an optimal solution to the \textup{WSMP} problem and can be computed in $O(n\log n)$-time.
\qed
\end{theorem}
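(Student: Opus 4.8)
The plan is to assemble the theorem from the structural and algorithmic results proved earlier, with the only genuinely new content being the running-time bound. For the optimality claim, note first that a general common weight $w$ only scales the objective, $\tct{\cS}=w\sum_{j\in\jobs}t_j$, so it does not change which schedules maximize the total weighted overlap; hence we may take $w=1$ exactly as in the section. Now Lemma~\ref{lem:synchronized} guarantees that some optimal schedule is synchronized, and Lemma~\ref{opt} shows that $\tct{\bar{\cS}}\geq\tct{\cS'}$ for every synchronized schedule $\cS'$ (Lemma~\ref{lem:unit-all} is already used inside the proof of Lemma~\ref{opt}, to ensure the comparison vector $\vecta$ has length exactly $n$, i.e.\ that no job sits entirely on its private processor in an optimum). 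Combining these, $\bar{\cS}$ attains the maximum total weighted overlap and is therefore optimal for the WSMP problem with equal weights.

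It remains to bound the time to construct $\bar{\cS}$. I would walk through the construction of Section~7 step by step. Sorting the jobs into the non-increasing order \eqref{proctime} costs $O(n\log n)$ and will dominate. The parameters $k=\lceil n/m\rceil$ and $r=n-\lfloor n/m\rfloor m$ of \eqref{a} are computed in $O(1)$. The matching of the $i$-th job with the $i$-th entry of the positional-weight sequence \eqref{position} is implicit in the sorted list, and the partition into $\jobs_1,\ldots,\jobs_m$ is simply a round-robin distribution: put the $i$-th job of the sorted list into $\jobs_{((i-1)\bmod m)+1}$. This is an $O(n)$ pass, and one checks immediately from \eqref{a} that it places exactly one job matched with each of $\tfrac12,\ldots,\tfrac1{2^{k-1}}$ in every block and exactly one job matched with $\tfrac1{2^{k}}$ in precisely the first $r$ blocks --- which is exactly the configuration assumed by Lemma~\ref{S}. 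Each $\jobs_\ell$ is, as a sublist of the sorted list, already in ascending order of processing times, so no further sorting is needed. Finally, producing the actual schedule on each $\Mshared_\ell$ --- the intervals $(T_i,T_{i+1})$ and the induced completion times on the private processors --- is done by the recurrence $T_{i+1}=\tfrac{p_i}{2}+\tfrac{T_i}{2}$ of Observation~\ref{obs:T_i}, one linear pass per block, $O(n)$ in total. Hence the whole algorithm runs in $O(n\log n)$ time.

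I do not expect a real obstacle: the theorem is a packaging of Lemmas~\ref{lem:synchronized}, \ref{lem:unit-all}, \ref{opt} and~\ref{S} with the remark that the construction is dominated by a single sort. The one point that warrants an explicit sentence is the verification just mentioned --- that the round-robin partition of the sorted list genuinely realizes the matching with \eqref{position}, and that when $m\nmid n$ it is the first $r$ blocks (i.e.\ $\jobs_1,\ldots,\jobs_r$, as assumed in the section) that receive the extra $\tfrac1{2^{k}}$-weighted job --- so that Lemma~\ref{S}, and hence the optimality argument of Lemma~\ref{opt}, applies verbatim to the schedule actually produced.
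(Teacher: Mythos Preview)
Your proof is correct and matches the paper's approach: the theorem is stated with an immediate \qed, relying on Lemmas~\ref{lem:synchronized}, \ref{S}, \ref{lem:unit-all} and~\ref{opt} together with the observation that a single sort dominates the running time. One tiny slip: the round-robin sublists of the \emph{descending}-sorted list are themselves in descending order of processing times, not ascending, so each $\jobs_\ell$ must be reversed (still $O(n)$ total) before laying out the schedule on $\Mshared_\ell$; this does not affect the bound.
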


\section{Conclusions and open problems}

We studied the shared multi-processor scheduling problem. We proved that the maximization of total weighted overlap is NP-hard in the strong sense 
though the case with equal weights is solvable in $O(n\log n)$ time. We also proved that synchronized schedules include optimal schedules. This characterization
as well as  other characteristics of special subclasses of the problem may prove instrumental in settling the complexity of  the single processor case, which remains
open, and in developing efficient branch-and-bound algorithms,  heuristics, and approximation algorithms with guaranteed worst case approximation for the problem.
We conjecture that the single processor case is NP-hard even for instances with processing times equal weights for all jobs.

The design of coordinating mechanisms to ensure efficiency of decentralized shared multi-processor scheduling remains an interesting line of research in supply chains
with subcontracting.  In particular, coordinating pricing schemes for multi-processor problem with equal weights (such schemes do not exist for the weighted case in general, see \cite{HK15}) seem to readily extend those developed in  \cite{HK15} for a single processor.

In this paper we assumed that a job can use only a single shared processor, if any. However, relaxations of this assumption that allow for using an arbitrary or a fixed number of shared processors by a job 
could possibly lead to interesting scheduling problems. We leave investigation of these relaxations for further research. 

\section*{Acknowledgements}
This research has been supported by the Natural Sciences and Engineering Research
Council of Canada (NSERC) Grant OPG0105675.
Dariusz Dereniowski was partially supported by Polish National Science Center under contract DEC-2011/02/A/ST6/00201.

\bibliographystyle{plain}
\bibliography{references}
\end{document}